\def\E{\mathbb{E}}
\def\E{\mathbb{E}}
\def\P{\mathbb{P}}
\def\eps{\varepsilon}
\def\cC{\mathcal {C}}
\def\cD{\mathcal{D}}
\def\cE{\mathcal {E}}
\def\cG{\mathcal {G}}
\def\cL{\mathcal{L}}
\def\cT{\mathcal {T}}
\def\cI{\mathcal {I}}
\def\Ec{E_{\mathrm{cr}}}
\def\1{\mathbf{1}}
\def\lam {\lambda}
\def\tce{t_c + \eps}
\def\tce2{t_c + \frac{\eps}{2}}
\def\ER{Erd\H{o}s-R\'{e}nyi }
\newtheorem*{theorem*}{Theorem}
\newtheorem{theorem}{Theorem}
\numberwithin{theorem}{section}
\newtheorem{lemma}[theorem]{Lemma}
\newtheorem{cor}[theorem]{Corollary}
\newtheorem{defn}[theorem]{Definition}
\newtheorem*{defn*}{Definition}
\newtheorem{prop}[theorem]{Proposition}
\newtheorem*{prop*}{Proposition}
\newtheorem*{conj*}{Conjecture}
\newtheorem{question}{Question}
\newtheorem*{fact*}{Fact}
\numberwithin{equation}{section}
\begin{document}
\title{Sampling and counting triangle-free graphs \\ near the critical density}

\author{Matthew Jenssen}
\author{Will Perkins}
\author{Aditya Potukuchi}
\author{Michael Simkin}

\address{King's College London, Department of Mathematics}
\email{matthew.jenssen@kcl.ac.uk}

\address{Georgia Institute of Technology, School of Computer Science}
\email{math@willperkins.org}

\address{York University, Department of Electrical Engineering and Computer Science}
\email{apotu@yorku.ca}

\address{Massachusetts Institute of Technology, Department of Mathematics}
\email{msimkin@mit.edu}

\date{October 30, 2024.}

\begin{abstract}
We study the following combinatorial  counting and sampling  problems: can we efficiently sample from the Erd\H{o}s-R\'{e}nyi random graph $G(n,p)$ conditioned on triangle-freeness?  Can we efficiently approximate  the probability that $G(n,p)$ is triangle-free? These are prototypical instances of  forbidden substructure problems ubiquitous in combinatorics.  The algorithmic questions are instances of approximate counting and sampling for a hypergraph hard-core model. 

Estimating the probability that $G(n,p)$ has no triangles is a fundamental question in probabilistic combinatorics and one that has led to the development of many important tools in the field.  Through the work of several authors, the asymptotics of the logarithm of this probability are known if $p =o( n^{-1/2})$ or if $p =\omega( n^{-1/2})$.  The regime $p = \Theta(n^{-1/2})$ is more mysterious, as this range witnesses a dramatic change in the the typical structural properties of $G(n,p)$ conditioned on triangle-freeness.  As we show, this change in structure has a profound impact on the performance of sampling algorithms.

We give two different efficient sampling algorithms for triangle-free graphs (and complementary algorithms to approximate the triangle-freeness large deviation probability), one that is efficient when $p < c/\sqrt{n}$ and one that is efficient when $p > C/\sqrt{n}$ for  constants $c, C>0$.  The latter algorithm involves a new approach for dealing with large defects in the setting of sampling from low-temperature spin models.    

An extended abstract of this paper appeared at FOCS 2024~\cite{jenssen2024sampling}.
\end{abstract}

\maketitle

\section{Introduction}
\label{secIntro}

Let $\cT = \cT(n)$ denote the set of (labeled) triangle-free graphs on $n$ vertices. Understanding properties of this set is a central topic in extremal and probabilistic combinatorics. Mantel's Theorem~\cite{mantel1907problem}, one of the earliest results in extremal graph theory, solves an optimization problem: which $G \in \cT(n)$ has the most edges?  Erd\H{o}s, Kleitman, and Rothschild~\cite{erdosasymptotic} gave an asymptotic formula for $|\cT(n)|$ and proved an important structural property of the uniform distribution on $\cT(n)$: almost all triangle-free graphs are bipartite.  More generally, one can study $\mu_p(\cT)$, the probability that the \ER random graph $G(n,p)$ is triangle-free, a question considered already in the papers of Erd\H{o}s and R\'{e}nyi initiating the study of random graphs~\cite{erdos1960evolution}.

Understanding such non-existence probabilities and  more general large deviation probabilities for subgraphs in random graphs often involves deep mathematical tools from combinatorics and probability theory, including regularity lemmas, nonlinear large deviations, container methods, and entropic methods (e.g.~\cite{chatterjee2011large,chatterjee2017large,balogh2015independent,saxton2015hypergraph,chatterjee2016nonlinear,eldan2018gaussian,cook2020large,kozma2023lower}). These techniques often connect large deviation probabilities to structural information about graphs drawn from  the appropriate  conditional distribution.

Here we will be interested in the connection between structural properties of the conditional distribution, algorithms to sample from the conditional distribution, and algorithmic approximations of the large-deviation probability.

In our case, the distribution of interest is that of $G(n,p)$ conditioned on triangle-freeness. We denote this distribution by $\mu_{\cT,p}$. The    result of Erd\H{o}s, Kleitman, and Rothschild  not only gives an asymptotic formula for $\mu_{1/2}(\cT)$, but also gives an efficient algorithm to  sample approximately from $\mu_{\cT,1/2}$: sample a nearly balanced partition of a set of $n$ vertices and include each crossing edge independently with probability $1/2$.  This gives a  random bipartite graph on $n$ vertices, whose distribution is very close to $\mu_{\cT,1/2}$.   Osthus, Pr{\"o}mel, and Taraz~\cite{osthus2003densities} later showed that an analogous result holds for much smaller $p$: for any fixed $\eps>0$, if $p \ge(1+\eps) \sqrt{3 \log n}/ \sqrt{n}$ then $G \sim \mu_{\cT,p}$ is bipartite whp. In the same way as above, this immediately yields an efficient approximate sampling algorithm.

What happens for smaller $p$?  T.\ \L uczak~\cite{luczak2000triangle} had previously showed that if $p \ge C/\sqrt{n}$ for large $C$, then whp $G$ has a cut containing most of the edges; that is, it is \emph{approximately} bipartite.  On the other hand, when $p =o(n^{-1/2})$, no such large cut or global structure exists, and instead behavior is captured by the `Poisson paradigm' and Janson's Inequality~\cite{janson1987uczak}.  

Here we will be interested in the regime $p= \Theta(n^{-1/2})$, in which this global structure emerges. Beyond \L uczak's result and consequences of Janson's Inequality, very little is known in this regime about either the properties of $\mu_{\cT,p}$, the value of $\mu_{p}(\cT)$, or the existence of efficient sampling algorithms.  In this paper we will show that these three questions are closely related.

In different settings, very similar questions about approximation of probabilities and understanding conditional probability measures are studied in computer science in the area of approximate counting and sampling.  Here a canonical object is the partition function of the hard-core model or the generating function for independent sets of a graph or hypergraph $G$:
\begin{align}
\label{eqHCzdef}
    Z_G(\lam) = \sum_{I \in \cI(G)} \lam^{|I|} \,,
\end{align} 
where $\cI(G)$ is the set of independent sets of $G$.
The related sampling question is about sampling from the hard-core measure, the probability measure on $\cI(G)$ defined by
\[ \mu_{G,\lam}(I) = \frac{\lam^{|I|}}{Z_G(\lam)} \,.\]
The complexity of approximate sampling and counting for the hard-core model on graphs of maximum degree $\Delta$ is now very well understood: when $\lam$ is below a threshold $\lam_c(\Delta) \approx \frac{e}{\Delta}$, there are efficient sampling and counting algorithms~\cite{weitz2006counting,anari2021spectral,chen2021optimal,anari2022entropic,chen2022localization}, and when $\lam$ is above this threshold no such algorithms exist unless NP$=$RP~\cite{Sly10,SS14,galanis2016inapproximability}.  The critical value $\lam_c$ marks a phase transition of the hard-core model on the infinite $\Delta$-regular tree. The phase transition behavior is also reflected in finite graphs: when $\lam<\lam_c$, the hard-core model on any graph of maximum degree $\Delta$ exhibits several `nice' behaviors including decay of correlations and rapid convergence of local dynamics (Markov chains), while there exist graphs for which these properties fail for $\lam > \lam_c$ (in particular, for the random $\Delta$-regular bipartite graph~\cite{mossel2009hardness}).  

The probability that $G(n,p)$ is triangle-free and its conditional distribution given this event can be written in terms of a hard-core partition function and measure. Given a graph $G$ on vertex set $[n]$, we will often identify $G$ with its edge set and let $|G|$ denote the number of edges in $G$. Let $\mu_p$ denote the distribution of the \ER random graph $G(n,p)$, that is, $\mu_p(G)=p^{|G|}(1-p)^{\binom{n}{2}-|G|}$.   Let
\begin{align*}
Z(\lam) = \sum_{G \in \cT} \lam ^{|G|}\, .
\end{align*}
With $\lam = \frac{p}{1-p}$, we have the  identities
\begin{align}\label{eqPviaZ}
\mu_p(\cT)= \frac{Z(\lam)}{(1+\lam)^{\binom{n}{2}}}\, ,
\end{align}
and
\[
\mu_{\cT,p}(G)=\frac{\lam^{|G|}}{Z(\lam)},\, \quad G \in \cT \, .
\]

From this perspective, $\mu_{\cT,p}$ may be viewed as a hypergraph hard-core measure. Let $H=(V,E)$ be the hypergraph with $V=\binom{[n]}{2}$ where $\{x,y,z\}\in E$ if and only if $x,y,z$ form a triangle (when viewed as edges of a graph on vertex set $[n]$). Then $\mu_{\cT,p}=\mu_{H,\lam}$ where $\lam=p/(1-p)$.

The algorithmic and probabilistic behavior of hypergraph hard-core models on bounded-degree, uniform hypergraphs is more complex and less well understood than that of graphs.  While for graphs $\lam_c(\Delta)$ marks a computational threshold, a phase transition on a particular infinite graph, a threshold for fast mixing of local dynamics, and a threshold for a strong form of correlation decay (strong spatial mixing), in hypergraphs these thresholds need not coincide (see, e.g., the results and discussion in~\cite{hermon2019rapid,bezakova2019approximation,galvin2024zeroes,bencs2023optimal}).

Our main results combine these points of view and give some understanding of triangle-free graphs in this critical regime $p = \Theta(n^{-1/2})$.  We will give efficient approximate counting and sampling algorithms when $p \le c n^{-1/2}$ and $p \ge Cn^{-1/2}$ for constants $c,C>0$.  In the paper~\cite{jenssen2024lower}, we use algorithmic ideas to prove an asymptotic formula for $\log \mu_p(\cT)$ when $p \le c n^{-1/2}$ (and treat more general lower-tail problems for triangles in the critical regime).

To state our algorithmic results we need a couple of definitions.  The \textit{Glauber dynamics} is a Markov chain defined by a probability distribution $\mu$ on vectors (in our case indicator functions of edges of a graph) that  proceeds by choosing a random coordinate and updating this coordinate according to the conditional distribution of $\mu$ given all other coordinates.  The mixing time of a Markov chain is the number of steps required to guarantee convergence to the stationary distribution $\mu$ (to within  $1/4$ total variation distance) from a worst-case starting point.  See Section~\ref{secPrelim} for full definitions. 
We say that $\hat Z>0$ is an $\eps$-relative approximation to $Z>0$ if $e^{-\eps}\hat Z\leq Z \leq e^{\eps}\hat Z$.  
\begin{theorem}
\label{thmLowMixing}
Let $c < 1/\sqrt{2}$ be fixed and suppose $p \le cn^{-1/2}$. The Glauber dynamics for sampling from $\mu_{\cT,p}$ has mixing time $O(n^2 \log n)$.

In particular, this gives randomized algorithms running in time polynomial in $n$ and $1/\eps$ that
 \begin{enumerate}
        \item Output $M$  so that with probability at least $2/3$, $M$ is an $\eps$-relative approximation to $\mu_p(\cT)$.
        \item Output $G \in \cT$ with distribution $\hat \mu$ so that $\| \hat \mu - \mu_{\cT,p} \|_{TV} < \eps$\,.
    \end{enumerate}
\end{theorem}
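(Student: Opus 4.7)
The natural approach is path coupling with the Hamming metric on the state space $\cT$. The Glauber chain at each step picks an edge $e$ uniformly at random and resamples its indicator from the conditional law of $\mu_{\cT,p}$ given all other edges: if the current graph (minus $e$) plus $e$ is triangle-free, place $e$ with probability $p = \lambda/(1+\lambda)$; otherwise leave $e$ absent. I consider two chains $X, Y \in \cT$ at Hamming distance $1$ differing at $e_0 = \{u,v\}$, say $X = Y \cup \{e_0\}$, and use the identity coupling: both chains pick the same $e$ and use the same $U \in [0,1]$.

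\textbf{Contraction analysis.} When $e = e_0$, both chains see an identical graph after removing $e_0$ (namely $Y$, which is triangle-free), both resample $e_0$ from $\mathrm{Bernoulli}(p)$, and the coupling drops the distance to $0$. When $e \ne e_0$, divergence occurs only if the legality of $e$ differs between $X$ and $Y$: since $Y \subseteq X$, this requires $Y \cup \{e\}$ to be triangle-free while $X \cup \{e\} = Y \cup \{e_0, e\}$ is not, so the new triangle must use both $e$ and $e_0$. This forces $e$ to share a vertex with $e_0$, say $e = \{u, w\}$ with $\{v, w\} \in Y$ (or the symmetric case). The number of such bad edges is at most $\deg_Y(u) + \deg_Y(v)$, and each produces divergence with probability $p$, yielding
\[
\E\bigl[d(X', Y') - 1 \mid X, Y\bigr] \le \binom{n}{2}^{-1}\bigl(-1 + p \bigl(\deg_Y(u) + \deg_Y(v)\bigr)\bigr).
\]

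\textbf{The obstacle.} The right-hand side is negative only when $\deg_Y(u) + \deg_Y(v) < 1/p$, but for adversarial $Y$ this sum can be as large as $n-2$. The remedy is to carry out the analysis on the restricted set $\cT^{*} = \{G \in \cT : \Delta(G) \le K\sqrt{n}\}$ for a constant $K$ to be chosen. On $\cT^{*}$, one has $p \cdot 2K\sqrt{n} \le 2cK$, so contraction holds provided $K < 1/(2c)$. Since typical samples from $\mu_{\cT,p}$ have maximum degree $(1+o(1))pn = (1+o(1))c\sqrt{n}$, one also needs $K > c$ to contain the bulk of the stationary mass. Both constraints $c < K < 1/(2c)$ are simultaneously satisfiable precisely when $2c^2 < 1$, i.e., $c < 1/\sqrt{2}$, which is exactly the hypothesis of the theorem. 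With such a $K$, path coupling on $\cT^{*}$ contracts at rate $\Omega(1/n^2)$, yielding mixing time $O(n^2 \log n)$ for the restricted chain.

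\textbf{Handling arbitrary starts.} The remaining and principal obstacle is to show that the chain enters $\cT^{*}$ within $O(n^2 \log n)$ steps from any $X_0 \in \cT$, so that the restricted analysis lifts to a genuine mixing time bound for the chain on all of $\cT$. The key structural fact is that each Glauber update places its selected edge with probability at most $p$, regardless of the surrounding configuration. I would combine a coupon-collector step (after $O(n^2 \log n)$ steps every edge has been updated at least once with high probability) with a Chernoff argument (the resulting vector of edge indicators is stochastically dominated by independent $\mathrm{Bernoulli}(p)$ variables, whose associated maximum degree concentrates around $c\sqrt{n} < K\sqrt{n}$). Combining the burn-in with the path-coupling contraction on $\cT^{*}$, either via a hybrid coupling to a stationary copy of the chain or via a warm-start spectral argument for the restricted dynamics, then delivers the claimed $O(n^2 \log n)$ mixing time.
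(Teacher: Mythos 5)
Your proposal follows essentially the same route as the paper: path coupling with the Hamming metric, divergence bounded by $p(\deg_Y(u)+\deg_Y(v))$, restriction to a set of bounded-degree triangle-free graphs on which the contraction rate is $\Omega(1/n^2)$, and a burn-in phase justified by the observation that each update inserts its edge with probability at most $p$ so the chain is stochastically dominated by the $G(n,p)$ Glauber dynamics. The paper makes your burn-in sketch rigorous by constructing the monotone coupling explicitly (its Lemma 3.4 and Corollary 3.5) and then iterates the path-coupling inequality $T$ more times, paying $O(n^{-3})$ per step for the event of leaving the good set, rather than invoking a warm-start or hybrid-coupling black box; it also uses the slightly sharper degree threshold $np+n^{1/3}$ instead of $K\sqrt{n}$, which costs nothing and makes the dependence on $c<1/\sqrt 2$ transparent. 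The one piece you omit is the derivation of the FPRAS in part (1), which the paper obtains from the sampler via adaptive simulated annealing.
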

In the terminology of approximate counting and sampling, these are a Fully Polynomial-time Randomized Approximation Scheme (FPRAS) and an efficient sampling scheme. 

Up to the constant $1/\sqrt{2}$, Theorem~\ref{thmLowMixing} is sharp.  When $p \ge C n^{-1/2}$ for $C$ a large constant, the Glauber dynamics mix slowly due to the emergence of global structure (large max-cuts) in triangle-free graphs. 
\begin{theorem}\label{thmSlowMix}
There exists $C>0$ so that for $p \ge C n^{-1/2}$, the Glauber dynamics for sampling from $\mu_{\cT,p}$ has mixing time $\Omega(e^{\sqrt{n}})$.
\end{theorem}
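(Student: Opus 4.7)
The strategy is a standard conductance bottleneck. For $G \in \cT$ with a unique maximum cut, set $\chi(G) = +1$ if vertices $1$ and $2$ lie on the same side of that cut and $\chi(G) = -1$ otherwise; if the maximum cut is non-unique, set $\chi(G) = 0$. Let $S = \{G \in \cT : \chi(G) = +1\}$. The plan is to show that $\mu_{\cT,p}(S) = 1/2 + o(1)$ while the Glauber edge-boundary $\partial S = \{G \in S : G \triangle e \in \cT \setminus S \text{ for some edge } e\}$ has $\mu_{\cT,p}$-mass at most $\exp(-\Omega(\sqrt n))$; a Cheeger-type conductance lower bound then forces $t_{\mathrm{mix}} = \Omega(e^{\sqrt n})$.

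The measure of $S$ follows from the $S_n$-symmetry of $\mu_{\cT,p}$. Conditional on the $S_n$-invariant event ``unique max cut'' (which by the structural input below has $\mu_{\cT,p}$-probability $1 - e^{-\Omega(\sqrt n)}$), $\mu_{\cT,p}(i \text{ and } j \text{ on the same side of the max cut})$ is the same for every pair $i \ne j$, and summing gives
\[
\binom{n}{2} \mu_{\cT,p}(\chi = +1 \mid \text{unique}) = \tfrac{1}{2} \E\bigl[|A|(|A|-1) + |B|(|B|-1) \bigm| \text{unique}\bigr],
\]
where $(A,B)$ is the max cut. Combined with the structural input that the max cut is balanced, $|A|,|B| = n/2 \pm O(\sqrt n)$, we get $\mu_{\cT,p}(S) = 1/2 + o(1)$.

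The boundary bound uses the cut gap $\kappa(G)$ (the difference between the largest and second-largest cut sizes of $G$). A single-edge flip shifts the size of any fixed cut by at most $1$, so the difference between the sizes of any two fixed cuts shifts by at most $1$; in particular, if $\kappa(G) \ge 2$ then the identity of the max cut is preserved under every single-edge flip, so $\chi(G \triangle e) = \chi(G)$ and $G \notin \partial S$. Therefore $\partial S \subseteq \{G : \kappa(G) \le 1\}$. For a graph with max cut $(A, B)$, flipping a single vertex $v \in A$ changes the cut size by $d_A(v) - d_B(v)$, whence $\kappa(G) \ge \min_v (d_B(v) - d_A(v))$. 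Under the near-bipartite mixture that approximates $\mu_{\cT,p}$ at $p \ge C/\sqrt n$ with $C$ large, $d_B(v)$ is essentially $\bin(n/2, p)$ with mean $\Theta(\sqrt n)$ and $d_A(v) = O(1)$; a Chernoff bound and a union bound over $n$ vertices then yield $\mu_{\cT,p}(\kappa(G) \le 1) \le n \cdot e^{-\Omega(\sqrt n)} = e^{-\Omega(\sqrt n)}$.

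The main obstacle is supplying the quantitative structural input used throughout: for $p \ge C/\sqrt n$ with $C$ sufficiently large, a $\mu_{\cT,p}$-typical $G$ has a unique balanced max cut of sizes $n/2 \pm O(\sqrt n)$, per-vertex crossing-degree $\Theta(\sqrt n)$, and only $O(1)$ within-part edges incident to each vertex, with the probability of any of these failing being $\exp(-\Omega(\sqrt n))$. This is a quantitative refinement of \L uczak's near-bipartiteness theorem~\cite{luczak2000triangle} adapted to the conditional measure $\mu_{\cT,p}$ in the critical regime, and it is where the genuine work lies; once it is in place, the symmetry calculation for $\mu_{\cT,p}(S)$, the cut-gap Lipschitz bound, and the conductance-to-mixing inequality are routine.
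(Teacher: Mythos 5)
Your bottleneck set $S$ (graphs where vertices $1,2$ share a side of the unique max cut) and the $S_n$-symmetry computation of $\mu_{\cT,p}(S)$ are a clean alternative to the paper's construction, and the observation that $\kappa(G)\geq 2$ forces $\chi(G\triangle e)=\chi(G)$, so that $\partial S\subseteq\{\kappa\leq 1\}$, is correct. But there is a genuine gap in the boundary estimate. The inequality you invoke, $\kappa(G)\geq\min_v\bigl(d_B(v)-d_A(v)\bigr)$, is false: $\kappa$ is the gap over \emph{all} cuts, not merely those obtained by flipping one vertex. Swapping a block $S\subseteq A$ with a block $T\subseteq B$ changes the cut size by $|E(S,A\setminus S)|+|E(T,B\setminus T)|-|E(S,B\setminus T)|-|E(T,A\setminus S)|$, and the term $-|E(S,B\setminus T)|$ can be killed by choosing $T\supseteq N_B(S)$. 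For a concrete counterexample to your inequality, take $A_0\subset A$, $B_0\subset B$ with $|A_0|=|B_0|=\sqrt n$, make $A_0\times B_0$ a complete bipartite graph, give every vertex of $A_0\cup B_0$ an additional $\sqrt n$ random crossing edges outside $A_0\times B_0$, fill the rest of $A\times B$ with a random bipartite graph of density $\Theta(1/\sqrt n)$, and put no within-part edges: the graph is bipartite (hence triangle-free), $(A,B)$ is a max cut, every vertex has crossing degree $\Theta(\sqrt n)$ and within-part degree $0$, yet swapping $A_0$ and $B_0$ produces a cut of the same size, so $\kappa=0$ even though $\min_v(d_B(v)-d_A(v))=\Theta(\sqrt n)$. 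The degree conditions you list as structural input therefore do not control $\kappa$; what is missing is an anticoncentration/expansion hypothesis on the crossing edges (ruling out dense complete bipartite blocks between small sets). This is exactly the $(A,B)$-$\lam$-expander condition the paper imports from~\cite{jenssen2023evolution}.

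It is also worth noting that the paper sidesteps the cut gap entirely. Rather than argue $\kappa\geq 2$, it defines for each weakly balanced $(A,B)$ the event $\cE_{A,B,\lam}$ (small within-part maximum degree \emph{and} the expander property), shows these events are pairwise at Hamming distance at least $2$ in the Glauber state space, and then uses $\cE_0=\cT\setminus\bigcup\cE_{A,B,\lam}$ as the exponentially small bottleneck. Proving the Hamming separation is precisely where the expander condition is used: a vertex in $B\cap A'$ has $\Omega(\lam n)$ cross-degree to $A$ under the $(A,B)$-expander structure but $O(\alpha/\lam)$ within-part degree to $A'$ under the $\cD^{\textup{w}}_{A',B',\lam}$ constraint, forcing $\Omega(\lam n)$ edge-flips to pass between the two sets. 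Your cut-gap route can likely be repaired by adding the expander hypothesis to your structural input (it does imply $\kappa\geq 2$ by a case analysis on $|S|,|T|$), but as written the key estimate $\mu_{\cT,p}(\kappa\leq 1)\leq e^{-\Omega(\sqrt n)}$ does not follow from what you assume.
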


On the other hand, even in this higher density regime we can still sample from $\mu_{\cT,p}$ and approximate $\mu_p(\cT)$ efficiently.
\begin{theorem}
    \label{thmHighSampling}
    There exist $C,C'>0$, so that for $p \ge C n^{-1/2}$ and $\eps\geq C'e^{-\sqrt{n}}$ there are randomized  algorithms running in time polynomial in $n$ and $1/\eps$ that
    \begin{enumerate}
        \item Output $M$  so that with probability at least $2/3$, $M$ is an $\eps$-relative approximation to $\mu_p(\cT)$.
        \item Output $G \in \cT$ with distribution $\hat \mu$ so that $\| \hat \mu - \mu_{\cT,p} \|_{TV} \le \eps$\,.
    \end{enumerate}
\end{theorem}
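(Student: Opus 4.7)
My plan is to prove Theorem~\ref{thmHighSampling} via a polymer/cluster-expansion representation of $Z(\lambda)$ organised by bipartitions, in the spirit of the low-temperature sampling algorithms of Jenssen--Keevash--Perkins and Helmuth--Perkins--Regts. Since Theorem~\ref{thmSlowMix} rules out direct Glauber dynamics and \L uczak's theorem says a typical sample from $\mu_{\cT,p}$ has a cut containing almost all edges, the natural decomposition is
\[
Z(\lambda)=\sum_\sigma Z_\sigma(\lambda),
\]
where $\sigma\in\{0,1\}^n$ ranges over bipartitions of $[n]$ (modulo the $\Z_2$ symmetry) and $Z_\sigma(\lambda)$ is the contribution of triangle-free graphs assigned to $\sigma$ via a canonical ground-state rule. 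Writing $A=\sigma^{-1}(0)$, $B=\sigma^{-1}(1)$ and factoring $Z_\sigma(\lambda)=(1+\lambda)^{|A||B|}\Xi_\sigma(\lambda)$, one obtains a polymer partition function $\Xi_\sigma(\lambda)$ whose polymers are connected configurations of non-crossing ``defect'' edges, each carrying a suppression of the crossing edges that would close a triangle through it: a defect edge $uv\in\binom A2$ forbids, for every $w\in B$, the simultaneous presence of the crossing edges $uw$ and $vw$.

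The heart of the proof is a Koteck\'y--Preiss style bound giving convergence of the cluster expansion of $\log\Xi_\sigma(\lambda)$ uniformly in $\sigma$, and this is where the ``new approach for dealing with large defects'' advertised in the abstract enters. A direct calculation shows that a single defect edge $uv\in\binom A2$ carries Boltzmann weight $\lambda(1-\lambda^2/(1+\lambda)^2)^{|B|}=\Theta(\lambda)=\Theta(1/\sqrt n)$, so the per-vertex sum of single-polymer weights is of order $n\cdot\lambda=\Theta(\sqrt n)\gg 1$. In sharp contrast to classical low-temperature spin models, polymers here are \emph{not} exponentially penalised in their size and can reach linear size, so a single cluster expansion over all polymers does not converge. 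To get around this I would use a two-scale truncation: fix a threshold $T=T(n)$ (e.g.\ $T=n^{1/4}$) and split configurations into a small-defect part (polymers of size at most $T$) and a large-defect part. The small polymers satisfy a Koteck\'y--Preiss bound once one accounts carefully for the additional mutual cancellation of constraints from defect edges sharing a neighbour in $B$. For the large polymers, a deterministic union bound combined with concentration of the crossing-edge count should show that their total contribution to $Z(\lambda)$ is at most $e^{-\sqrt n}$, which can be absorbed into the error tolerance since $\eps\ge C'e^{-\sqrt n}$.

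Given the controlled cluster expansion for small polymers, truncating at order $O(\log(n/\eps))$ yields an $\eps$-approximation of $\log\Xi_\sigma(\lambda)$ computable in time polynomial in $n$ and $1/\eps$ via the Patel--Regts framework, and a corresponding efficient within-phase sampler. To aggregate over the $2^{n-1}$ bipartitions I would exploit that $(1+\lambda)^{|A||B|}$ is sharply Gaussian-peaked at balanced bipartitions ($||A|-n/2|=O(n^{1/4})$) and that, to leading order, $\log\Xi_\sigma$ depends only on the balance of $\sigma$; this reduces the outer sum to a tractable one-dimensional sum. Sampling from $\mu_{\cT,p}$ then proceeds in three stages: draw $\sigma$ from its approximately Gaussian marginal, sample a polymer configuration conditionally using the within-phase sampler, and finally fill in the crossing edges conditional on the induced triangle-free constraints (which factor across vertices of $B$ as independent-set samples in the defect graph on $A$, and similarly for $A$ and the defect graph on $B$, with a small coupling correction). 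The main obstacle throughout is the large-defect bound: in any other parameter regime one would have exponentially decaying polymer weights, but here the interplay between $\lambda=\Theta(1/\sqrt n)$ and the $(1-\lambda^2)^{|B|}$ suppression places us exactly at the threshold where a pure cluster expansion fails, making a deterministic truncation for large polymers genuinely necessary.
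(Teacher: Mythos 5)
Your high-level decomposition matches the paper's: split $Z(\lambda)$ over bipartitions $(A,B)$, factor out the crossing edges via a hard-core model on a Cartesian product graph, and treat the within-part ``defect'' edges as the hard part. Your observation that a straightforward Koteck\'y--Preiss/cluster expansion in the defect polymers fails because the single-edge weight is $\Theta(\lambda)$ with roughly $n^2$ sites — so polymers are not exponentially suppressed in size — is also exactly the obstruction the paper identifies.

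The genuine gap is your proposed fix: a two-scale truncation at $T=n^{1/4}$ with the claim that large polymers contribute at most $e^{-\sqrt{n}}$ to $Z(\lambda)$. This is false in the regime of the theorem. The paper shows (Lemma~\ref{lemSDlargeC}) that the defect distribution $\nu_{A,B,\lambda}$ is stochastically dominated by $G(A,q)\times G(B,q)$ with $q=\lambda e^{-n\lambda^2/5}$, and for $\lambda = C n^{-1/2}$ this gives $q=\Theta(n^{-1/2})$ with a small constant — still far above the giant-component threshold $2/n$. A \emph{typical} defect configuration therefore has a connected component of polynomial size; the large-polymer contribution is $1-o(1)$, not exponentially small, and no deterministic truncation can absorb it into an $e^{-\sqrt{n}}$ error term. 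This is precisely why the paper abandons the ``perturb around the empty defect set'' paradigm and instead runs edge-update Glauber dynamics \emph{on the defect edges themselves} (Proposition~\ref{propNuMix}). There, the cluster expansion is used in a more limited but crucial role: not to expand $Z_{A,B}^{\textup{w}}(\lambda)$ in defect polymers, but to accurately estimate the conditional edge marginals $p(e\mid G)$ of the defect measure (Lemmas~\ref{lemSDedgemarg}--\ref{lemMargDiff}), which is what one needs for path coupling and for implementing one Glauber step via the hard-core ratio $Z_{(G\cup e)_\boxempty}(\lambda)/Z_{G_\boxempty}(\lambda)$. Conceptually the paper perturbs around the Erd\H{o}s--R\'enyi defect measure and bounds the distance from independence by a mixing time, rather than bounding the size of connected components. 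Your ``accounting carefully for mutual cancellation of constraints from defect edges sharing a neighbour in $B$'' would have to recover something equivalent, but as stated the truncation step is where the proposal breaks down.
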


Note that Theorem~\ref{thmLowMixing} applies to  all $p\le cn^{-1/2}$, and Theorems~\ref{thmSlowMix} and~\ref{thmHighSampling} apply to all $p \ge C n^{-1/2}$, not just $p = \Theta(n^{-1/2})$.

\subsection{Techniques}
\label{secTechniques}

The techniques we use to prove the main two results (algorithms at low densities,  algorithms at high densities) are related and combine ideas from computer science, statistical physics, and combinatorics.  

Proving the first result is somewhat  straightforward: using the path coupling technique of Bubley and Dyer~\cite{bubley1997path} along with an initial `burn-in' period~\cite{dyer2003randomly} to control the max degree of a graph, we show that the Glauber dynamics for $\mu_{\cT,p}$ are rapidly mixing.

For the second result, efficient sampling at high densities, a different algorithm is needed. In fact, using structural results from~\cite{jenssen2023evolution} we prove that the Glauber dynamics for $\mu_{\cT,p}$ mix slowly when $p\ge C/\sqrt{n}$ (Theorem~\ref{thmSlowMix}).  These structural results (following those in~\cite{luczak2000triangle}) say that almost all $G \sim \mu_{\cT,p}$ have a unique max-cut $(A,B)$ that contains almost all of its edges, and moreover, the graphs induced by $A$ and by $B$ have controlled maximum degree.  In the language of statistical physics we can say almost all $G \sim \mu_{\cT,p}$ are close to some `ground state' collection of bipartite graphs with bipartition $(A,B)$. We call the edges within the parts of a max-cut $(A,B)$ `defect edges'.  While the existence of a large cut is a bottleneck for the Glauber dynamics, we use the same structural result as an algorithmic tool in proving Theorem~\ref{thmHighSampling}, reducing the sampling problem to that of sampling from $\mu_{\cT,p}$ conditioned on a max cut $(A,B)$ and a small max degree among the defect edges. 

With this approach, the main technical step is to show that one can efficiently sample defect edges from (approximately) the correct distribution. There is a rough duality to the problem in that defect edges approximately behave like a \emph{disordered} triangle-free graph $H\sim \mu_{\cT,p}$ where $p=c/\sqrt{n}$ and $c$ is small. However in reality the distribution is significantly more complex. The probability of seeing a set of defect edges $H$ is proportional to a certain hard-core model partition function determined by $H$. Using the machinery of the \textit{cluster expansion} (see Section~\ref{subseccluster}), the defect distribution can  be viewed as an exponential random graph with an unbounded number of subgraph counts in the exponent, conditioned on a max-degree bound and on triangle-freeness. In a similar spirit to Theorem~\ref{thmLowMixing}, we sample defect edges via edge-update Glauber dynamics. However, the implementation is more complicated here. A key step is to use the cluster expansion to accurately estimate edge marginals needed in the implementation of the dynamics. This use of cluster expansion can be viewed as a way to study an exponential random graph as a perturbation of an \ER random graph.

There is a line of previous algorithmic results (e.g.~\cite{HelmuthAlgorithmic2,JKP2,chen2021fast,galanis2021fast,galanis2022fast,helmuth2023finite,carlson2022algorithms,chen2022sampling}) on efficient approximate counting and sampling for spin models (hard-core, Potts, coloring, etc.) on structured instances (expanders, lattices, random graphs) in  low-temperature (i.e.\ strong interaction strength) or high-density regimes that are computationally hard (more precisely \#BIS-hard~\cite{dyer2004relative,cai2016hardness}) in the worst-case.  These algorithms are based on the framework of abstract polymer models in which defects from a ground state configuration or set of configurations are modeled as an auxiliary spin model and then analyzed or sampled from using cluster expansion or Markov chain algorithms.  Other low-temperature algorithms include suitably initialized Markov chains~\cite{gheissari2022low,blanca2023sampling,galanis2023sampling}.  In both cases, it is crucial to the algorithms or to the analysis that defects from the appropriate ground state in the spin model are small: at most logarithmic size.  See the discussion in~\cite[Section 2.2]{chen2022sampling} and the percolation conditions in~\cite{blanca2023sampling}.

What is new in our approach to proving Theorem~\ref{thmHighSampling} is that we can efficiently sample even though the defect edges form connected graphs with large degrees.  At a high level, the novelty of our approach is that in previous works, the algorithms or analyses were based on perturbations around the empty set of defects; here one can interpret our analysis of Glauber dynamics on defect edges as a perturbation around a measure of independent (Erd\H{o}s-R\'{e}nyi) defect edges, and instead of needing to bound the size of their connected components, we need to bound their distance from independence, measured in terms of the mixing time of Glauber dynamics.

\subsection{Open questions}
\label{secOQuestions}

In this paper we only consider sampling triangle-free graphs, but is also natural to consider sampling graphs with forbidden subgraphs $H$ other than triangles. If $H$ is non-bipartite, the `critical density' is determined by the $2$-density of $H$, $m_2(H) = \max_{F \subseteq H; |E(F)| \ge 2} \frac{|E(F)|-1 }{|V(F)|-2  } $.
\begin{question}
    For  non-bipartite graph $H$, can one sample efficiently from $G(n,p)$ conditioned on the event $\{ X_H=0 \}$  in the critical regime $p = \Theta(n^{-1/m_2(H)})$?
\end{question}

 We anticipate that our approach to high-density sampling will be useful more generally in the setting of low-temperature or high-density spin systems mentioned above.  Concretely, we ask if this approach can be used to find efficient algorithms for the hard-core model on random bipartite graphs.
\begin{question}
    Can the high-level approach of the algorithm of Theorem~\ref{thmHighSampling} be used to sample from the hard-core model on random $\Delta$-regular bipartite graphs for all $\lam > \lam_c(\Delta)$?
\end{question}
Currently efficient algorithms are known for $\lam =\Omega(\log \Delta/\Delta)$ from~\cite{chen2022sampling} where the authors point out the barrier of polymers of polynomial size at smaller values of $\lam$.

\subsection{Organization}

In Section~\ref{secPrelim} we provide some preliminaries on Markov chain mixing and the cluster expansion.  In Section~\ref{secMixLow} we prove Theorem~\ref{thmLowMixing} via a path coupling argument with burn-in.     
In Section~\ref{secSampleHigh} we prove Theorem~\ref{thmHighSampling}, showing the existence of efficient sampling and counting algorithms at high densities. 
In Section~\ref{secSlowMix} we prove Theorem~\ref{secSlowMix}, showing that the Glauber dynamics for sampling from $\mu_{\cT,p}$ mixes slowly for large $p$.

\section{Preliminaries}
\label{secPrelim}

\subsection{Notation}
For a graph $G = (V,E)$ we denote the number of edges by $|G|$ and its maximum degree by $\Delta(G)$. All graphs in this paper will have $n$ vertices unless specified otherwise.  For $v\in V$ we let $d_G(v)$ denote the degree of the vertex $v$ and we let $d_G=2|G|/n$ denote the average degree. 
We use $\mu_p$ to refer to the distribution of the \ER random graph $G(n,p)$; that is, $\mu_p(G) = p^{|G|}(1-p)^{\binom{n}{2} -|G|}$. We let $\cT$ denote the set of triangle-free graphs on $n$ vertices or the event that a random graph is triangle-free.  The conditional distribution $\mu_p(\cdot | \cT)$ is denoted $\mu_{\cT,p}$.   The partition function $Z(\lam) = \sum_{G \in \cT} \lam^{|G|}$ will always be used with $\lam = p/(1-p)$ giving the identity $\mu_p(\cT) = (1-p)^{\binom{n}{2}} Z(\lam)$.   

\subsection{Markov chains and mixing times}
\label{secMCMC}

Our sampling algorithms for both the low density and high density cases will use Markov chains.  In the low density case, the Markov chain approach will directly give an efficient sampling algorithm, while in the high density case, Markov chains will be one part of a more complicated algorithm.

The basic idea to approximately sample from a target distribution $\mu$  is to design a Markov chain with $\mu$ as the stationary distribution so that a single step of the chain can be implemented efficiently, and so that the chain converges quickly.   The convergence time is often quantified by the \textit{mixing time}.  With $\mu_t^{X_0}$ denoting the $t$-step distribution of the Markov chain starting from the state $X_0$, the mixing time is 
\begin{equation}
    \tau_{\mathrm{mix}} = \max_{X_0} \, \min  \left \{ t: \left \| \mu_t^{X_0} - \mu  \right \|_{TV} < 1/4 \right \} \,.
\end{equation}
Here $\| \cdot - \cdot \|_{TV}$ denotes the total variation distance between probability measures on the same space, and the choice of $1/4$ is arbitrary and can be reduced to any $\eps > 0$ by running  the chain a factor $O(\log(1/\eps))$ more than $\tau_{\mathrm{mix}}$.  For background on Markov chains and mixing times see~\cite{montenegro2006mathematical,levin2017markov}.

We will consider Markov chains  on graphs;  these will have the form of the \textit{Glauber dynamics}: at each step we choose a potential edge $e \in \binom{V}{2}$ uniformly at random and then resample it (whether it is in the graph or not) conditioned on the status of all other edges.  More generally, the Glauber dynamics can be applied to any high-dimensional probability distribution by choosing a random coordinate and resampling the value of that coordinate conditioned on the rest of the vector.

For the \ER distribution $G(n,p)$ the Glauber dynamics are particularly simple: at each step one of the $\binom{n}{2}$ possible edges is chosen uniformly; with probability $p$ the edge is included in $G$ and with probability $1-p$ it is not included.   

For $\mu_{\cT,p}$, the Glauber dynamics are similar: at each step one of the $\binom{n}{2}$ possible edges is chosen uniformly; with probability $p$ the edge is included in $G$, but only if its inclusion would not form a triangle with other edges already present.

\subsection{Cluster expansion and the hard-core model}\label{subseccluster}
 
Recall $Z_G(\lam)$, the partition function of the hard-core model on a graph $G$ from~\eqref{eqHCzdef}. 
The cluster expansion is a formal power series for $\log Z_G(\lam)$; in fact, it is the Taylor series around $\lam =0$.  Conveniently, the terms of the cluster expansion have a nice combinatorial interpretation (see e.g.~\cite{scott2005repulsive,faris2010combinatorics}). A \textit{cluster} $\Gamma=(v_1, \ldots, v_k)$ is a tuple of vertices from $G$ such that the induced graph $G[\{v_1, \ldots, v_k\}]$ is connected. We let $\cC(G)$ denote the set of all clusters of $G$. We call $k$ the size of the cluster and denote it by $|\Gamma|$. Given a cluster $\Gamma$, the \emph{incompatibility graph} $G_\Gamma$ is the graph on vertex set $\Gamma$ (considered as a multiset) with an edge between $v_i, v_j$ if either $v_i, v_j$ are adjacent on $G$ or $i \ne j$ and $v_i, v_j$ correspond to the same vertex in $G$.   In particular, by the definition of a cluster, the incompatibility graph $G_\Gamma$ is connected.

As a formal power series, the cluster expansion is the infinite series
\begin{align}\label{eqclusterexp}
\log Z_G(\lam) = \sum_{\Gamma\in \cC(G)} \phi_G(\Gamma) \lam^{|\Gamma|} \,,
\end{align}
where
\begin{align}
\label{eqUrsell}
\phi_G(\Gamma) &= \frac{1}{|\Gamma|!} \sum_{\substack{A \subseteq E(G_\Gamma)\\ \text{spanning, connected}}}  (-1)^{|A|} \, .
\end{align}
If the graph $G$ is clear from the context we will often write $\phi(\Gamma)$ in place of $\phi_G(\Gamma)$.

The cluster expansion converges absolutely if $\lam$ lies inside a disk $D \subset \mathbb C$ so that $Z_G(\xi) \ne 0$ for all $\xi \in D$.  We will use the following lemma, \cite[Lemma 4.1]{jenssen2023evolution}, which gives a sufficient condition for convergence and  bounds the error in truncating the cluster expansion.

\begin{lemma}\label{lemClusterTail}
Suppose $G$ is a graph on $n$ vertices with maximum degree $\Delta$, and suppose $ |\lambda|\leq \frac{1}{2e(\Delta+1)}$. Then the cluster expansion converges absolutely. Moreover, for any set $U\subseteq V(G)$ such that $1\leq |U|\leq \min\{2,k\}$
we have
   \begin{align*}
  \sum_ {\substack{\Gamma: \Gamma\supseteq U, \\ |\Gamma|\geq k}} |\phi(\Gamma)||\lam|^{|\Gamma|}
 \leq 
 (2e)^k\Delta^{k-|U|}|\lam|^k\, .
 \end{align*}
\end{lemma}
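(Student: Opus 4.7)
The plan is to combine a Penrose-type spanning-tree bound on the Ursell weight with a Cayley-style count of rooted (cluster, spanning tree) pairs, and then sum a geometric tail using the hypothesis $|\lambda| \leq 1/(2e(\Delta+1))$.

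\textbf{Step 1: Spanning-tree bound on $|\phi|$.} I would first bound the alternating sum in~\eqref{eqUrsell}. By Penrose's tree identity, one fixes a total order on $V(G)$ and re-expresses the signed sum over spanning connected subgraphs of the incompatibility graph $G_\Gamma$, up to a global sign, as an unsigned count of a subfamily of spanning trees of $G_\Gamma$ (the ``Penrose trees''). Since every Penrose tree is a spanning tree, this yields
\[
|\phi(\Gamma)| \leq \frac{\tau(G_\Gamma)}{|\Gamma|!}\,,
\]
where $\tau(G_\Gamma)$ is the number of spanning trees of $G_\Gamma$.

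\textbf{Step 2: Counting $(\Gamma,T)$ pairs.} I would next bound $\sum_{\Gamma \supseteq U,\, |\Gamma|=k} \tau(G_\Gamma)$ over ordered clusters. Encode a pair $(\Gamma,T)$ as follows: a labeled tree on $[k]$ ($k^{k-2}$ options by Cayley); an assignment of distinct positions in $[k]$ to the elements of $U$ ($k!/(k-|U|)!$ options); and labels in $V(G)$ for the remaining positions. Rooting $T$ at a position carrying a label from $U$ and processing the other positions in BFS order, each new label must be equal to or a $G$-neighbor of its parent's label, contributing at most $\Delta+1$ options. Hence
\[
\sum_{\substack{\Gamma \supseteq U \\ |\Gamma| = k}} \tau(G_\Gamma) \leq k^{k-2}\cdot \frac{k!}{(k-|U|)!}\cdot (\Delta+1)^{k-|U|}.
\]
Dividing by $k!$ and using $k^k/k! \leq e^k$ together with $k!/(k-|U|)! \leq k^{|U|}$ gives
\[
\sum_{\Gamma \supseteq U,\, |\Gamma|=k} |\phi(\Gamma)|\,|\lambda|^k \leq \frac{k^{k-2}(\Delta+1)^{k-|U|}|\lambda|^k}{(k-|U|)!} \leq \frac{(e|\lambda|(\Delta+1))^k}{k^{2-|U|}\,(\Delta+1)^{|U|}}.
\]

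\textbf{Step 3: Geometric tail.} Set $y := e|\lambda|(\Delta+1)$, so the hypothesis gives $y \leq 1/2$. Since $j^{2-|U|} \geq 1$ for $j \geq 1$ and $|U|\in\{1,2\}$, summing the previous estimate over $j \geq k$,
\[
\sum_{\substack{\Gamma \supseteq U \\ |\Gamma|\geq k}} |\phi(\Gamma)|\,|\lambda|^{|\Gamma|} \leq \frac{1}{(\Delta+1)^{|U|}}\sum_{j \geq k} y^j \leq \frac{2\,y^k}{(\Delta+1)^{|U|}} = 2(e|\lambda|)^k(\Delta+1)^{k-|U|}.
\]
Bounding $(\Delta+1)^{k-|U|} \leq (2\Delta)^{k-|U|}$ (assuming $\Delta\geq 1$; the statement is trivial otherwise) absorbs the factor $2^{k-|U|+1} \leq 2^k$ into $(2e)^k$, yielding $(2e)^k\Delta^{k-|U|}|\lambda|^k$. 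Absolute convergence of the cluster expansion is the $k=|U|=1$ case, summed over $v^* \in V(G)$.

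\textbf{Main obstacle.} The only non-elementary ingredient is the Penrose inequality of Step~1, which is what allows the signed Ursell sum to be controlled by an unsigned tree count; the rest is bookkeeping with Cayley, Stirling, and a geometric series. Constant-wise, the key saving is deferring $(\Delta+1) \leq 2\Delta$ to the last step, so that the final bound carries $(2e)^k$ rather than the weaker $(e(\Delta+1))^k$ that a direct Kotecky--Preiss application would produce.
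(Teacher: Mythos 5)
Your proof is correct. One point of context: the paper itself does not prove this lemma — it imports it as Lemma~4.1 of the companion paper \cite{jenssen2023evolution} and only remarks that the slightly strengthened form (absolute values inside the sum) follows from the same proof — so there is no in-paper argument to compare against directly. That said, your route via Penrose's tree--graph bound, a Cayley/BFS enumeration of (cluster, spanning tree) pairs rooted at a position carrying a $U$-label, and a geometric tail is exactly the standard machinery, and it is fully consistent with the authors' own toolkit here: the paper restates Penrose's bound as Lemma~\ref{lempenrosetree} and deploys the same tree-counting strategy in Lemmas~\ref{lemHomSum} and~\ref{fixedclusterphibd}. The bookkeeping checks out: $\tfrac{k^{k-2}}{(k-|U|)!}\le e^k k^{|U|-2}$, the hypothesis gives $y=e|\lambda|(\Delta+1)\le 1/2$ so the tail sum contributes a factor $2$, and $2(\Delta+1)^{k-|U|}\le 2^k\Delta^{k-|U|}$ holds whenever $\Delta\ge 1$ and $|U|\ge 1$, which absorbs everything into $(2e)^k$. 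One small imprecision: for $\Delta=0$, $|U|=1$ and $k\ge 2$ the stated inequality is actually false (the right side vanishes while repeated-vertex clusters $(u,\dots,u)$ contribute $|\phi|=1/j>0$), so this case is not ``trivial'' so much as implicitly excluded; this is harmless since the lemma is only ever applied to graphs with $\Delta\ge 1$, but it would be cleaner to state the assumption $\Delta\ge 1$ explicitly rather than wave it off.
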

We remark that the conclusion of \cite[Lemma 4.1]{jenssen2023evolution} is slightly weaker than the statement above (the absolute value signs appear outside of the sum), however, it is readily checked that the proof from  \cite{jenssen2023evolution} gives the statement of Lemma~\ref{lemClusterTail}.

\section{Sampling at low density}
\label{secMixLow}
In this section we prove Theorem~\ref{thmLowMixing}.  Our proof will proceed by a path coupling argument where we use an initial burn-in period to make sure that the graphs in the coupling have reasonable maximum degree. To this end it will be useful to observe  that $\mu_{\cT,p}$ is stochastically dominated by $\mu_p$.

\begin{lemma}\label{lemStochDom}
For every $p\in[0,1]$, the measure $\mu_p$ stochastically dominates $\mu_{\cT,p}$.  That is, there is a coupling of the two distributions so that with probability $1$, $G \subseteq G'$, where $G \sim \mu_{\cT,p}$ and $G' \sim \mu_p$.
\end{lemma}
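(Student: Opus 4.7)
I would prove Lemma~\ref{lemStochDom} by constructing an explicit monotone coupling via Glauber dynamics, exploiting the fact that $\cT$ is a down-closed (decreasing) event.

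The plan is to run two coupled Glauber chains $(G_t)$ for $\mu_{\cT,p}$ and $(G'_t)$ for $\mu_p$ on the same probability space, started from $G_0 = G'_0 = \emptyset$ (which lies in $\cT$). At each step, sample a potential edge $e \in \binom{[n]}{2}$ uniformly and an independent $U \sim \mathrm{Unif}[0,1]$, and update both chains using the same $(e,U)$ as follows. For $\mu_p$, set $e \in G'_{t+1}$ iff $U < p$, so the resampling is the product-measure rule. For $\mu_{\cT,p}$, set $e \in G_{t+1}$ iff $U < p$ \emph{and} $G_t^{-e} \cup \{e\}$ is triangle-free, where $G_t^{-e} = G_t \setminus \{e\}$; this implements the correct conditional distribution for an edge given the rest, because on triangle-free graphs the conditional probability of $e$ being present is $p$ when $e$ does not create a triangle and $0$ otherwise.

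Next I would verify the invariant $G_t \subseteq G'_t$ by induction. Both chains leave edges other than $e$ unchanged, so the inclusion persists off of $e$. On $e$, inclusion in $G_{t+1}$ forces $U < p$, which in turn forces inclusion in $G'_{t+1}$; thus $G_{t+1}(e) \le G'_{t+1}(e)$. The coupling clearly preserves $G_t \in \cT$ (by construction $G_t$ never contains a triangle), so both chains remain on their intended state spaces.

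Finally I would pass to the limit $t \to \infty$. The Glauber dynamics for $\mu_p$ is trivially ergodic with stationary distribution $\mu_p$, and the Glauber dynamics for $\mu_{\cT,p}$ is ergodic with stationary distribution $\mu_{\cT,p}$ (irreducibility follows since one can always remove edges to reach $\emptyset$ and then add them back one at a time along any enumeration compatible with triangle-freeness of the target). The joint law at time $t$ is a coupling on a finite state space, and the marginals converge to $\mu_{\cT,p}$ and $\mu_p$; by compactness we may extract a subsequential limit which is a coupling of $\mu_{\cT,p}$ and $\mu_p$. The event $\{G \subseteq G'\}$ is closed (in fact clopen on the finite space $\{0,1\}^{\binom{n}{2}} \times \{0,1\}^{\binom{n}{2}}$), so it is preserved in the limit, yielding the desired coupling.

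There is no serious obstacle here: the only mildly delicate point is ensuring the conditional update rule for $\mu_{\cT,p}$ is correct and that the limit coupling exists, both of which are standard. As an alternative, one could verify Holley's criterion $\mu_p(G \vee G')\,\mu_{\cT,p}(G \wedge G') \ge \mu_p(G)\,\mu_{\cT,p}(G')$ directly: since $\mu_p$ is a product measure, $\mu_p(G)\mu_p(G') = \mu_p(G\vee G')\mu_p(G\wedge G')$, and after cancellation the inequality reduces to $\mathbf{1}[G\wedge G' \in \cT] \ge \mathbf{1}[G' \in \cT]$, which holds because $G \wedge G' \subseteq G'$ and subgraphs of triangle-free graphs are triangle-free.
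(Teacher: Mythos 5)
Your proof is correct, but it takes a genuinely different route from the paper's. The paper's argument is a one-pass \emph{sequential edge-revelation} coupling: reveal the $\binom{n}{2}$ potential edges of $G \sim \mu_{\cT,p}$ one at a time, observe that at each step the conditional probability of the current edge given the history is at most $p$ (essentially because $\cT$ is a decreasing event), and couple this directly with an edge-by-edge draw of $G(n,p)$ using the same uniform randomness. This is very short and needs no dynamics or limit. You instead build a \emph{monotone Glauber coupling}: run coupled Glauber chains for $\mu_{\cT,p}$ and $\mu_p$ with shared $(e,U)$ randomness, check the inclusion invariant is preserved by each update, and extract a limiting coupling by compactness. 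Both proofs ultimately rest on the same monotonicity (subgraphs of triangle-free graphs are triangle-free), and yours is perfectly valid; it is heavier machinery (ergodicity of the $\mu_{\cT,p}$-Glauber chain, subsequential limits) for a statement that admits a two-line sequential argument, but it has the advantage of being the template the paper reuses later (Lemma~\ref{lemcoupleGnp} is exactly a monotone coupling of the two Glauber chains). Your Holley-criterion alternative is also essentially right, with one caveat worth flagging: the standard statement of Holley's inequality assumes strict positivity of both measures, which $\mu_{\cT,p}$ fails outside $\cT$. The cleanest fix is to bypass Holley entirely and observe that $\mu_{\cT,p} = \mu_p(\cdot \mid \cT)$ with $\cT$ decreasing and $\mu_p$ a product measure, so the FKG inequality gives $\mu_{\cT,p}(B) \le \mu_p(B)$ for every increasing $B$, and Strassen's theorem then supplies the monotone coupling.
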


\begin{proof}
Sample $G \sim\mu_{\cT,p}$ by sampling one edge at a time given the previous history.  At any step, the conditional probability of any edge is at most $p$. We can therefore couple the sampling with an edge-by-edge sampling of $G(n,p)$ so that an edge is present in the sample from $\mu_{\cT,p}$ only if it is present in the sample from $\mu_p$.
\end{proof}

This fact combined with Chernoff's inequality and a union bound over vertices yields the following corollary which will allow us to condition on the event ${\Delta(G)\leq (1+\eps)np}$.

\begin{cor}\label{cormaxdegdom}
Let $p,\eps\in [0,1]$ and let $G\sim \mu_{\cT, p}$. Then with $\mu=(n-1)p$ 
\[
\P(\Delta(G)\geq (1+\eps)\mu) \leq n e^{-\eps^2 \mu/3}\, .
\]
\end{cor}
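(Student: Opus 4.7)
The plan is to combine the stochastic domination of Lemma~\ref{lemStochDom} with a standard Chernoff-plus-union-bound argument applied to the \ER case, transferring the conclusion back to $\mu_{\cT,p}$.

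First I would invoke Lemma~\ref{lemStochDom} to obtain a coupling $(G,G')$ with $G\sim \mu_{\cT,p}$, $G'\sim \mu_p$, and $G\subseteq G'$ almost surely. Since edge-inclusion dominance implies $d_G(v)\leq d_{G'}(v)$ for every vertex $v$, we get $\Delta(G)\leq \Delta(G')$ pointwise under the coupling, and hence
\[
\P\bigl(\Delta(G)\geq (1+\eps)\mu\bigr) \leq \P\bigl(\Delta(G')\geq (1+\eps)\mu\bigr).
\]

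Next I would bound the right-hand side by a union bound over the $n$ vertices:
\[
\P\bigl(\Delta(G')\geq (1+\eps)\mu\bigr) \leq \sum_{v\in [n]} \P\bigl(d_{G'}(v)\geq (1+\eps)\mu\bigr).
\]
Under $\mu_p$, each vertex degree $d_{G'}(v)$ is a sum of $n-1$ independent $\bern(p)$ random variables, so $d_{G'}(v)\sim \bin(n-1,p)$ with mean $\mu=(n-1)p$. The standard multiplicative Chernoff bound (in the form $\P(X\geq (1+\eps)\E X)\leq e^{-\eps^2\E X/3}$ for $\eps\in[0,1]$ and $X$ a sum of independent Bernoullis) gives $\P(d_{G'}(v)\geq (1+\eps)\mu)\leq e^{-\eps^2\mu/3}$. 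Summing over the $n$ vertices yields the claimed bound $n e^{-\eps^2\mu/3}$.

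There is essentially no obstacle: the only input beyond routine concentration is the stochastic domination, which is already in hand from the preceding lemma. The one small bookkeeping point is to check that $\eps\in[0,1]$ is the correct regime for the $\eps^2/3$ form of the Chernoff bound, which matches the hypothesis of the corollary.
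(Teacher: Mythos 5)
Your proposal is exactly the paper's argument: it observes (just before the corollary) that Lemma~\ref{lemStochDom} combined with a Chernoff bound and a union bound over vertices gives the claim. Your write-up spells out the same coupling-to-degree-monotonicity step and the same $\bin(n-1,p)$ tail bound, so there is nothing to add.
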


It will also be useful to observe that we can couple the triangle-free Glauber dynamics for $\mu_{\cT,p}$ with the Glauber dynamics for $\mu_p$, in which one of the $\binom{n}{2}$ potential edges is chosen at random and resampled at each step.

\begin{lemma}\label{lemcoupleGnp}
There exists a coupling $(X_t, Y_t)_{t\geq 0}$ of the Glauber dynamics for $\mu_{\cT,p}$ with the Glauber dynamics for $\mu_p$ such that if $X_0$ is a subgraph of $Y_0$, then $X_t$ is a subgraph of $Y_t$ for all $t\geq 0$.
\end{lemma}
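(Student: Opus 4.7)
The plan is to construct the coupling in the obvious ``grand coupling'' style: at each step use a common random edge $e$ and a common uniform random variable $U \in [0,1]$ to drive both chains, and then check by case analysis that the subgraph relation is preserved.

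Concretely, at step $t+1$ I would sample a pair $(e,U)$ with $e$ uniform in $\binom{[n]}{2}$ and $U$ uniform in $[0,1]$, independent of the past. Define $Y_{t+1}$ to equal $Y_t \cup \{e\}$ if $U < p$ and $Y_t \setminus \{e\}$ otherwise; this is the standard Glauber update for $\mu_p$. Define $X_{t+1}$ to equal $X_t \cup \{e\}$ if $U < p$ and $X_t \cup \{e\}$ is triangle-free, and $X_t \setminus \{e\}$ otherwise. Using Bayes' rule one checks that the conditional probability that $e \in X_{t+1}$ given $X_t$ and the choice of coordinate $e$ equals $p$ if adding $e$ to $X_t \setminus\{e\}$ keeps the graph triangle-free and $0$ otherwise, which is precisely the Glauber transition for $\mu_{\cT,p}$.

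Now assume inductively $X_t \subseteq Y_t$, and verify $X_{t+1} \subseteq Y_{t+1}$ by splitting on $U$. If $U \geq p$, then $e \notin Y_{t+1}$ and $e \notin X_{t+1}$, so $X_{t+1} = X_t \setminus \{e\} \subseteq Y_t \setminus \{e\} = Y_{t+1}$. If $U < p$, then $e \in Y_{t+1}$ and $X_{t+1} \subseteq X_t \cup \{e\} \subseteq Y_t \cup \{e\} = Y_{t+1}$. Either way the containment is preserved, so induction gives $X_t \subseteq Y_t$ for all $t \geq 0$.

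There is no real obstacle here: the only thing to be careful about is verifying that the marginal laws of $X_t$ and $Y_t$ under this joint update are indeed the Glauber dynamics for $\mu_{\cT,p}$ and $\mu_p$ respectively, which is immediate from the construction. The case analysis on $U$ versus $p$ gives the monotonicity in one line each.
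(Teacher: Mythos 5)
Your proposal is correct and is essentially the same coupling as the paper's: share the chosen edge $e$ and the Bernoulli($p$) decision (your $U<p$ threshold) between the two chains, and suppress the addition in the $X$-chain exactly when it would create a triangle. The monotonicity check by cases on $U$ is the same observation the paper makes, just spelled out more explicitly.
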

\begin{proof}
We make the same edge updates in both chains, unless a pair $\{u,v\}\in \binom{[n]}{2}$ is chosen such that $u, v$ have a common neighbour in $X_t$, in which case we set $X_{t+1}=X_t$.
\end{proof}

A coupon collector argument gives us that the mixing time for Glauber dynamics for $\mu_p$ is $O(n^2\log n)$. As a result, we have the following corollary. 

\begin{cor}\label{cordegreemix}
There exists $C>0$ so that the following holds for any $c>0$.  Let $p \leq c/\sqrt{n}$.
Let $(X_t)_{t\geq 0}$ be a run of the Glauber dynamics for $\mu_{\cT,p}$ on $n$ vertices with $X_0$ arbitrary.  If $t\geq C n^2 \log(n/\eps)$,
\[
\P(\Delta(X_t)\geq np + n^{1/3} )\leq ne^{-n^{2/3}/(3np)}+\eps\, .
\]
\end{cor}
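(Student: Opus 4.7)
The plan is to bootstrap off the coupling in Lemma~\ref{lemcoupleGnp} together with the fact that the Glauber dynamics for the \ER measure $\mu_p$ mixes in $O(n^2 \log n)$ steps (a coupon-collector bound, since each edge is an independent Bernoulli($p$) and two coupled chains agree on any coordinate that has been updated at least once). The initial state $X_0$ is an arbitrary triangle-free graph, and the goal is to transfer the Chernoff bound of Corollary~\ref{cormaxdegdom} (which governs the maximum degree at stationarity) to the law of $X_t$ after a burn-in period.

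Concretely, I would take $Y_0 := X_0$ and let $(X_t, Y_t)_{t\ge 0}$ evolve according to the coupling of Lemma~\ref{lemcoupleGnp}, so $X_t \subseteq Y_t$ for all $t\ge 0$ and in particular $\Delta(X_t)\leq \Delta(Y_t)$. Choose $C>0$ large enough that after $t \ge Cn^2\log(n/\eps)$ steps, the coupon-collector argument yields
\[
\|\mathcal L(Y_t) - \mu_p\|_{TV}\leq \eps.
\]
(One explicit route: couple $(Y_t)$ with a second chain $(Y'_t)$ started from $\mu_p$, using the same edge choices and the same Bernoulli($p$) updates; after every one of the $\binom{n}{2}$ potential edges has been selected at least once, the chains coincide, and the probability that this has not happened by time $Cn^2\log(n/\eps)$ is at most $\eps$ for $C$ large.)

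Next I would bound $\P_{G \sim \mu_p}(\Delta(G)\ge np + n^{1/3})$ by a direct Chernoff-plus-union bound on the $n$ independent $\mathrm{Bin}(n-1,p)$ degrees: choosing $\delta = n^{1/3}/((n-1)p)$ in $\P(\mathrm{Bin}(n-1,p)\ge (1+\delta)(n-1)p)\leq e^{-\delta^2(n-1)p/3}$ gives deviation $n^{1/3}$ above the mean, and $(1+\delta)(n-1)p\leq np + n^{1/3}$, yielding $\mu_p(\Delta(G)\geq np + n^{1/3})\leq n e^{-n^{2/3}/(3np)}$. Combining with the TV bound on $Y_t$,
\[
\P\bigl(\Delta(Y_t)\geq np + n^{1/3}\bigr) \leq n e^{-n^{2/3}/(3np)} + \eps,
\]
and using $\Delta(X_t)\leq \Delta(Y_t)$ from the coupling finishes the proof.

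There is no real obstacle here; the statement is essentially a packaging of Corollary~\ref{cormaxdegdom} (applied to $\mu_p$ rather than $\mu_{\cT,p}$, using that \ER degrees are genuine Binomials), the domination coupling of Lemma~\ref{lemcoupleGnp}, and the trivial coupon-collector mixing bound for product-measure Glauber. The only thing to be slightly careful about is the accounting of the additive $\eps$ contributed by the burn-in mixing versus the $n e^{-n^{2/3}/(3np)}$ contributed by the stationary tail bound; these enter additively and match the statement.
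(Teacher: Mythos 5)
Your proof is correct and is essentially the paper's argument: both run the domination coupling of Lemma~\ref{lemcoupleGnp} from $Y_0=X_0$, invoke the coupon-collector $O(n^2\log(n/\eps))$ mixing bound for product-measure Glauber to get $\|\mathcal L(Y_t)-\mu_p\|_{TV}\le\eps$, apply the Chernoff-plus-union bound to the maximum degree of $G\sim\mu_p$, and transfer the bound to $X_t$ via $\Delta(X_t)\le\Delta(Y_t)$.
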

\begin{proof}
Consider the coupling $(X_t, Y_t)_{t\geq 0}$ from Lemma~\ref{lemcoupleGnp} with $X_0=Y_0$.
Let $Y$ be sampled from $\mu_p$. For $t\geq C n^2 \log(n/\eps)$ we have
\[
\|Y-Y_t\|_{TV}\leq \eps\, .
\]
By the Chernoff bound and the union bound, we have $\mathbb{P}(\Delta(Y) \geq np+n^{1/3})\leq ne^{-n^{2/3}/(3np)}$.
So it follows that $\mathbb{P}(\Delta(Y_t) \geq np+n^{1/3}) \leq ne^{-n^{2/3}/(3np)}+\eps$. Since $X_t$ is a subgraph of $Y_t$ the result follows. 
\end{proof}

Now let $\Omega(p)\subseteq \cT$ be the set of graphs in $\cT$ whose maximum degree is at most $np + n^{1/3}$.

\begin{lemma}\label{lempathcouple}
Let $p=(1+o(1)) c/\sqrt{n}$ where $c<1/\sqrt{2}$ is fixed. Let $(X_t, Y_t)\in \Omega(p)\times \Omega(p)$ and let $(X_{t+1}, Y_{t+1})$ be copies of the Glauber dynamics for $\mu_{\cT,p}$ coupled to attempt the same updates. Then
\[
\E \left[d (X_{t+1}, Y_{t+1})\right]\leq \left(1- \frac{\delta}{\binom{n}{2}}\right) d(X_t, Y_t)\, ,
\]
where $\delta = 1 - 2(np+n^{1/3})p = 1-2c^2 - o(1)$ and $d(\cdot,\cdot)$ is the Hamming distance between graphs.
\end{lemma}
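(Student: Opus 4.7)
My plan is to apply the path-coupling framework of Bubley--Dyer, so it suffices to verify the claimed contraction on pairs $(X_t, Y_t)$ at Hamming distance one. So I would begin by assuming $Y_t = X_t \cup \{e\}$ for some edge $e = \{u,v\}$ (both chains lie in $\Omega(p)$, so both have maximum degree at most $np + n^{1/3}$). I couple the two Glauber chains in the natural monotone way: pick a common edge $f \in \binom{[n]}{2}$ uniformly, draw a single uniform $U \in [0,1]$, and in each chain update $\mathbf 1[f \in \cdot]$ by the conditional distribution of $\mu_{\cT,p}$ given the other edges. This coupling preserves the inclusion $X_t \subseteq Y_t$.

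Next I would analyze the update according to which edge $f$ is chosen. If $f = e$ (probability $1/\binom{n}{2}$), then because $X_t$ and $Y_t$ agree off of $e$, the conditional distributions of $\mathbf 1[e \in \cdot]$ coincide in the two chains, so after the update $X_{t+1} = Y_{t+1}$ and the distance drops to $0$. If $f \ne e$, the conditional probability of including $f$ in each chain is $p$ or $0$ depending on whether $f$ completes a triangle, and the chains can only disagree when $f$ is ``safe'' in $X_t$ but creates a triangle in $Y_t$. Since $Y_t = X_t \cup \{e\}$, such a triangle must use both $e$ and $f$, so $f$ must share exactly one endpoint with $e$, and the remaining endpoint of $f$ must be adjacent in $X_t$ to the remaining endpoint of $e$. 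The number of such $f$ is at most $d_{X_t}(u) + d_{X_t}(v) \le 2\Delta(X_t) \le 2(np + n^{1/3})$, and for each such $f$ the probability of disagreement is exactly $p$, in which case the Hamming distance increases by $1$.

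Combining these contributions, the expected change in Hamming distance satisfies
\begin{equation*}
\E[d(X_{t+1},Y_{t+1}) - 1] \le -\frac{1}{\binom{n}{2}} + \frac{2(np+n^{1/3})\cdot p}{\binom{n}{2}} = -\frac{1 - 2(np+n^{1/3})p}{\binom{n}{2}} = -\frac{\delta}{\binom{n}{2}},
\end{equation*}
with $\delta = 1 - 2c^2 - o(1) > 0$ by the assumption $c < 1/\sqrt 2$. This gives the desired one-step contraction at Hamming distance one, and path coupling extends it to arbitrary pairs.

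The only subtlety I anticipate is the bookkeeping for the ``bad'' edges $f$: one needs to be careful not to double-count pairs of the form $\{u, y\}$ vs.\ $\{v, y\}$, to check that $f = e$ itself is excluded from the bad set, and to use that both chains (not just $X_t$) have bounded maximum degree — but this is a direct consequence of $X_t \subseteq Y_t \in \Omega(p)$. Once the degree bound $2\Delta(X_t) \le 2(np + n^{1/3})$ is in hand, the rest is a one-line algebraic manipulation to read off $\delta = 1 - 2c^2 - o(1)$.
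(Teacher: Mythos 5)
Your proposal is correct and follows essentially the same path as the paper's proof: reduce to Hamming distance one via Bubley--Dyer, share the update edge and the Bernoulli($p$) coin between the two chains, observe that disagreement can only arise when the chosen $f$ is safe in the smaller graph but completes a triangle through $e$ in the larger one, and bound the number of such $f$ by $d(u)+d(v)\le 2(np+n^{1/3})$ using $\Omega(p)$. The only (immaterial) differences are that the paper orients the pair so that the extra edge lies in $X_t$ rather than $Y_t$, and it counts bad edges via degrees in the smaller graph rather than $\Delta(X_t)$; both yield the same $\delta$.
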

\begin{proof}
By the path coupling technique~\cite{bubley1997path}, it suffices to consider the case where $d(X_t, Y_t)=1$ i.e. $X_t, Y_t$ differ in one edge. 
Suppose WLOG that $\{i,j\}$ is an edge of $X_t$, but not $Y_t$.
Let $d_i, d_j$ denote the degree of $i,j$ in $Y_t$ respectively and note that $d_i, d_j \leq np+n^{1/3}$. If the update was performed on $\{i,j\}$, then $X_{t+1} = Y_{t+1}$. If the update was performed on one of the $d_i+d_j-2$ potential edges that form a triangle together with $\{i,j\}$, then this edge would be added to $Y_{t+1}$ but not to $X_{t+1}$ with probability at most $p$. If the update is performed on any other potential edge, then $d(X_{t+1},Y_{t+1})=d(X_t,Y_t)$. Thus
\[
\E \left[d (X_{t+1}, Y_{t+1})\right] \leq 1+\frac{d_i+d_j}{\binom{n}{2}}\cdot p - \frac{1}{\binom{n}{2}}\leq 1- \frac{\delta}{\binom{n}{2}}\, ,
\]
as claimed.
\end{proof}

We are now ready to prove Theorem~\ref{thmLowMixing}.

\begin{proof}[Proof of Theorem~\ref{thmLowMixing}]
Let $(X_t, Y_t)_{t\geq 0}$ be a coupling of the Glauber dynamics for $\mu_{\cT,p}$ to attempt the same updates with $X_0, Y_0$ chosen arbitrarily.
Now let $T=6C n^2 \log n$ where $C$ is as in Corollary~\ref{cordegreemix}.

For $t\geq T$, let $\mathcal E_t$ denote the event $\{X_{t},Y_{t} \in \Omega(p)\}$ and note that by Corollary~\ref{cordegreemix} (with $\eps=n^{-5}$) and a union bound we have $\P(\mathcal E_t^c)\leq 2(ne^{-n^{2/3}/(3np)}+n^{-5})=O(n^{-5})$. We then have,
\begin{align*}
\P(X_{2T}\neq Y_{2T})&\leq \E\left[d (X_{2T}, Y_{2T}) \right]\\
& \leq \E\left[d (X_{2T}, Y_{2T}) \mid \mathcal E_{2T-1} \right] + O(n^{-3})\\
&\leq \left(1- \frac{\delta}{\binom{n}{2}}\right) \E\left[d (X_{2T-1}, Y_{2T-1})\right] + O(n^{-3}).
\end{align*}
For the first inequality we used Markov. For the second inequality we used that $d (X_{2T}, Y_{2T})\leq n^2$ and $\P(\mathcal E_{2T-1}^c)=O(n^{-5})$.
For the third inequality we used Lemma~\ref{lempathcouple}. 
Iterating the above $T-1$ more times, we conclude that
\begin{align*}
\P(X_{2T}\neq Y_{2T}) &\leq \left(1- \frac{\delta}{\binom{n}{2}}\right)^T \E\left[d (X_{T}, Y_{T})\right] + O(Tn^{-3})\\
&=o(1)\, ,
\end{align*}
where for the final inequality we used that $\E\left[d (X_{T}, Y_{T})\right]\leq n^2$.

It follows that 
\[
\tau_{\text{mix}}\leq 2T \, .
\]
The existence of an efficient sampling scheme follows immediately, since a single step of the Glauber dynamics can be implemented efficiently.

The existence of an FPRAS for approximating $\mu_p(\cT)$ follows from a standard reduction of approximate counting to sampling. Using the adaptive simulated annealing technique from~\cite{vstefankovivc2009adaptive}, which requires an efficient approximate sampling algorithm for $\mu_{\cT,p'}$ for all $0 \le p'\le p$, one obtains an FPRAS for $\mu_p(\cT)$ running in time $O(n^3 \eps^{-2} \log ^6 n)$.  The algorithm proceeds by using samples from $\mu_{\cT,p'}$ to estimate the ratio $\mu_{p''}(\cT)/\mu_{p'}(\cT)$ for adaptively chosen pairs $p',p''$ along a `cooling schedule', then multiplying these estimates in a telescoping product. 
\end{proof}

\section{Overview for sampling at high densities}
\label{secSampleHigh}

In this section we give an overview of the proof of Theorem~\ref{thmHighSampling}, with the remaining proof details to follow in Section~\ref{secProofsHighDensity}. The proof relies on a structural result for graphs drawn from $\mu_{\cT,p}$ established in~\cite{jenssen2023evolution} in the regime where $p=C/\sqrt{n}$ and $C$ is large.
To state the result we introduce some notation and definitions from~\cite{jenssen2023evolution}. 

 \begin{defn}\label{defWeakBalance}
Call a partition $(A,B)$ of $[n]$ \emph{weakly balanced} if $\big ||A|-|B| \big |\leq n/10$. Let $\Pi_{\textup{weak}}$ denote the set of weakly balanced partitions of $[n]$.
\end{defn}

We set $\alpha=\frac{1}{96e^3}$. This constant is taken from~\cite{jenssen2023evolution}. Its precise form does not matter, only that it is sufficiently small. For a partition $(A,B)$ of $[n]$ let
\begin{align}\label{eqTABwDef}
\cT_{A,B,\lam}^{\textup{w}}=\{G\in \cT : \Delta(G[A]\cup G[B])\leq \alpha/ \lam\}\, ,
\end{align}
where $G[A], G[B]$ denote the induced subgraphs of $G$ on vertex sets $A,B$ respectively.

We think of the set $\cT_{A,B,\lam}^{\textup{w}}$ as those graphs $G\in \cT$ which `align well' with the bipartition $(A,B)$ (i.e.\ $(A,B)$ is a large cut for $G$).

 Let
 \begin{equation}
    Z_{A,B}^\textup{w}(\lam) : = \sum_{G\in\cT_{A,B,\lam}^{\textup{w}}}\lam^{|G|}\, .
 \end{equation}
In other words, $Z_{A,B}^\textup{w}(\lam)$ is the contribution to the partition function $Z(\lam)=\sum_{G\in\cT}\lam^{|G|}$ from graphs that align with $(A,B)$.

Now define a distribution $\mu_{\mathrm{weak},\lam}$ on $\cT$ as follows.
\begin{enumerate}
    \item Sample $(A,B)\in \Pi_{\textup{weak}}$ with probability proportional to $Z_{A,B}^{\textup{w}}(\lam)$.
    \item Sample $G\in  \cT_{A,B,\lam}^\textup{w}$ with probability proportional to $\lam^{|G|}$.\label{S12}
\item Output $G$.
\end{enumerate}

Let
\[
Z_{\textup{weak}}(\lam) = \sum_{(A,B)\in\Pi_{\textup{weak}}} Z_{A,B}^{\textup{w}}(\lam)\, .
\]
Throughout this section we set $C>0$ to be a sufficiently large constant.
\begin{theorem}[{\cite[Theorem 2.9]{jenssen2023evolution}}]
    For $\lam \ge C n^{-1/2}$, 
    \[ \| \mu_{\cT,p} - \mu_{\mathrm{weak},\lam}  \|_{TV} = o(1) \,,\]
    where $p=\frac{\lam}{1+\lam}$.
\end{theorem}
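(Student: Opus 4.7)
The plan is to marginalize $\mu_{\mathrm{weak},\lam}$ over the bipartition and reduce the theorem to two structural facts about $\mu_{\cT,p}$. For $G\in\cT$, let $N(G) := |\{(A,B)\in\Pi_{\textup{weak}} : G \in \cT^{\textup{w}}_{A,B,\lam}\}|$ count the weakly balanced partitions $G$ aligns with. Summing the joint sampling procedure over $(A,B)$ gives
\[
\mu_{\mathrm{weak},\lam}(G) \;=\; \frac{\lam^{|G|}\,N(G)}{Z_{\textup{weak}}(\lam)}, \qquad \mu_{\cT,p}(G) \;=\; \frac{\lam^{|G|}}{Z(\lam)},
\]
and also $Z_{\textup{weak}}(\lam) = \sum_G \lam^{|G|} N(G)$, so $Z_{\textup{weak}}(\lam)/Z(\lam) = \E_{\mu_{\cT,p}}[N(G)]$. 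Splitting the sum $\sum_G |\mu_{\cT,p}(G) - \mu_{\mathrm{weak},\lam}(G)|$ according to $N(G)\in\{0,1,\ge 2\}$ shows that the TV distance is $o(1)$ provided the two inputs $\mu_{\cT,p}(N(G)\neq 1) = o(1)$ and $\E_{\mu_{\cT,p}}[N(G)] = 1 + o(1)$ both hold; these together force $Z_{\textup{weak}}/Z = 1+o(1)$ and make the two measures agree pointwise up to a $(1+o(1))$ factor on a $(1-o(1))$-mass event.

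The main obstacle is the \emph{existence} half: $\mu_{\cT,p}(N(G)\geq 1) = 1-o(1)$, i.e.\ almost every triangle-free graph of density $p\geq C/\sqrt n$ admits a weakly balanced bipartition with uniform defect-degree bound $\Delta(G[A]\cup G[B]) \leq \alpha/\lam$. The available starting point is \L uczak's theorem, giving a max cut containing all but $o(|G|)$ edges. To reach the statement I want, I would first argue the max cut is weakly balanced in a typical $G$: very unbalanced cuts have so few crossing slots that they cannot carry the expected $\Theta(n^{3/2})$ edges under triangle-freeness, so their $\lam$-weighted contribution to $Z(\lam)$ is smaller than the balanced contribution by an exponential factor. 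The harder step is upgrading `few total defect edges' to `every vertex has few defect edges': a vertex $v$ with defect degree $\omega(\alpha/\lam)$ sits on a non-trivial triangle-free link, and one needs to show that the $\lam$-weight of triangle-free graphs with $\omega(1)$ such vertices is $o(Z(\lam))$. This is the combinatorial crux of the argument in~\cite{jenssen2023evolution} and is carried out there via cluster-expansion estimates on $Z^{\textup{w}}_{A,B}(\lam)$ combined with a comparison argument that transfers these estimates to the unrestricted $Z(\lam)$.

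The uniqueness half, $\E_{\mu_{\cT,p}}[(N(G)-1)_+] = o(1)$, is a softer consequence. If $(A,B)$ aligns with $G$, each $v\in A$ has $|N_G(v)\cap A|\leq\alpha/\lam$, so $|N_G(v)\cap B|\geq d_G(v)-\alpha/\lam = \Omega(\sqrt n)$ for typical $v$ (using Corollary~\ref{cormaxdegdom} together with a matching lower tail coming from the Erd\H{o}s-R\'enyi stochastic domination of Lemma~\ref{lemStochDom}). Moving $v$ to the $B$-side of a new partition then gives it defect degree $\Omega(\sqrt n) \gg \alpha/\lam$ (since $C$ is taken large compared to $\alpha$), so no bipartition at Hamming distance $1$ from $(A,B)$ aligns with $G$. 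Iterating this and handling partitions at larger Hamming distance via a first-moment bound on the number of cut edges forced inside the new parts yields $\E_{\mu_{\cT,p}}[N(G)-1] = o(1)$. Combined with existence, this gives $Z_{\textup{weak}}/Z = 1+o(1)$, and the TV bound follows from the decomposition stated in the first paragraph.
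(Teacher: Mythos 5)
Your marginalization of $\mu_{\mathrm{weak},\lam}$ over the bipartition, giving $\mu_{\mathrm{weak},\lam}(G) = \lam^{|G|} N(G)/Z_{\textup{weak}}(\lam)$ and $Z_{\textup{weak}}(\lam)/Z(\lam) = \E_{\mu_{\cT,p}}[N(G)]$, is correct, and the resulting reduction of the TV bound to the two facts $\mu_{\cT,p}(N(G)\neq 1)=o(1)$ and $\E_{\mu_{\cT,p}}[N(G)]=1+o(1)$ is a clean and valid decomposition. This is organized somewhat differently from the route the paper takes to the effective version (Proposition~\ref{PropError}), which passes through the intermediate measures $\mu_{\cL,\lam}$ on graphs admitting a weakly balanced dominating cut (via \cite[Propositions~3.4 and~3.5]{jenssen2023evolution}) rather than counting aligned bipartitions directly; your $N(G)$ bookkeeping is more transparent as a high-level skeleton, at the cost of pushing all the combinatorics into the two inputs. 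You acknowledge that the existence input (that $N(G)\geq 1$ whp, i.e.\ the upgrade from \L uczak's few-total-defect-edges to a uniform per-vertex defect-degree bound) is exactly what is proved in~\cite{jenssen2023evolution}, so deferring it is fair.

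The uniqueness step, however, contains a genuine error. You argue that for a typical $v\in A$ one has $|N_G(v)\cap B| \ge d_G(v)-\alpha/\lam = \Omega(\sqrt n)$, and justify the lower bound $d_G(v)=\Omega(\sqrt n)$ by ``a matching lower tail coming from the Erd\H{o}s-R\'enyi stochastic domination of Lemma~\ref{lemStochDom}.'' This does not follow: Lemma~\ref{lemStochDom} says $\mu_{\cT,p}$ is stochastically dominated \emph{by} $\mu_p$, i.e.\ one can couple so $G\subseteq G'$ with $G'\sim\mu_p$, which gives only $d_G(v)\le d_{G'}(v)$ and hence an \emph{upper} tail on degrees (this is exactly Corollary~\ref{cormaxdegdom}). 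It gives no lower bound whatsoever; indeed, in a hypergraph hard-core model conditional edge marginals can be $0$, so there is no ``matching'' stochastic domination from below. A degree lower bound for $\mu_{\cT,p}$ is true, but it must come from a different argument --- in the paper's own treatment of uniqueness (in the slow-mixing proof, Section~\ref{secSlowMix}) it comes from the crossing-edge expansion property of Lemma~\ref{lemExpanderwhp}, which is established via the hard-core model on $S\boxempty T$ after conditioning on a bipartition, not via stochastic domination. Relatedly, your single-vertex-move argument only excludes partitions at Hamming distance one, and for it to conclude $N(G)=1$ you would need the degree lower bound to hold for \emph{every} vertex, not just typical ones; the ``first-moment bound'' you invoke for larger Hamming distances is not spelled out and is not obviously routine, since one must control the number of partitions that could align after moving a linear-sized set of vertices. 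So the skeleton is sound, but the uniqueness half as written does not close.
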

Here we make the error estimate explicit.
\begin{prop}
    \label{PropError}
    For $\lam \ge C n^{-1/2}$,
     \[ \|  \mu_{\cT,p} - \mu_{\mathrm{weak},\lam}  \|_{TV} = O(e^{-\sqrt{n}}) \,,\]
     where $p=\frac{\lam}{1+\lam}$. Moreover
     \[
     Z(\lam)=(1+O(e^{-\sqrt{n}}))Z_{\textup{weak}}(\lam)\, .
     \] 
\end{prop}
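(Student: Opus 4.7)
The plan is to refine the proof of~\cite[Theorem 2.9]{jenssen2023evolution} and track the error quantitatively via a clean combinatorial identity. For $G \in \cT$ define the multiplicity
\[
N(G) := \bigl|\{(A,B) \in \Pi_{\textup{weak}} : G \in \cT_{A,B,\lam}^{\textup{w}}\}\bigr|.
\]
Swapping the order of summation,
\[
Z_{\textup{weak}}(\lam) \;=\; \sum_{G \in \cT} \lam^{|G|} N(G) \;=\; Z(\lam)\,\E_{\mu_{\cT,p}}[N(G)],
\]
and since $\mu_{\mathrm{weak},\lam}(G) = \mu_{\cT,p}(G)\cdot N(G)/\E_{\mu_{\cT,p}}[N]$, a direct computation gives
\[
\|\mu_{\cT,p} - \mu_{\mathrm{weak},\lam}\|_{TV} \;=\; \frac{\E\bigl|N - \E N\bigr|}{2\,\E N}.
\]
Both assertions of the proposition then follow at once from the pair of tail bounds
\[
\P_{\mu_{\cT,p}}(N = 0) = O(e^{-\sqrt n}) \quad\text{and}\quad \E_{\mu_{\cT,p}}\!\bigl[(N-1)\mathbf{1}\{N \ge 2\}\bigr] = O(e^{-\sqrt n}).
\]

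For the first bound I would revisit the structural argument of~\cite{jenssen2023evolution,luczak2000triangle} showing that a $\mu_{\cT,p}$-typical $G$ admits a weakly balanced max cut $(A,B)$ with $\Delta(G[A]\cup G[B]) \le \alpha/\lam$, and sharpen the failure rate from $o(1)$ to $O(e^{-\sqrt n})$. Conditional on alignment with a given $(A,B)$, the polymer/cluster-expansion representation of~\cite{jenssen2023evolution} (cf.\ Lemma~\ref{lemClusterTail}) controls the defect-edge marginals by comparing them to a product of Bernoulli variables of density $\ll 1/\sqrt n$. In the worst case $\lam \asymp n^{-1/2}$ the threshold $\alpha/\lam = \Theta(\sqrt n)$ lies far above the $O(1)$ mean per-vertex defect degree, so a Chernoff tail yields failure probability $e^{-\Omega(\sqrt n)}$ per vertex; a union bound over $n$ vertices preserves the rate, and for larger $\lam$ the structure is only sharper so the bound only improves.

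For the second bound, suppose two distinct partitions $(A,B), (A',B') \in \Pi_{\textup{weak}}$ both align with the same $G$. Any vertex $v \in A \triangle A'$ would have at most $\alpha/\lam$ neighbours on its `wrong' side under each alignment; but under $\mu_{\cT,p}$ a typical vertex has $\Theta(np) \gg \alpha/\lam$ crossing-cut neighbours in the canonical partition. I would fix a canonical $(A,B)$ (e.g.\ the lexicographically first aligned partition when unique) and bound $N(G)-1$ by the number of non-empty swap sets $S\subseteq[n]$ for which $(A\triangle S, B\triangle S)$ is also aligned. A first-moment calculation summing $\binom{n}{s}$ against the Chernoff tail for $s=|S|$ vertices having atypically small crossing degree gives the desired $O(e^{-\sqrt n})$.

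The main obstacle will be justifying these Chernoff-type estimates under $\mu_{\cT,p}$ conditioned on alignment, rather than under a product measure. The hope is that the cluster-expansion convergence and polymer representation from~\cite{jenssen2023evolution} already deliver a comparison between the conditional defect-edge marginals and a product measure with multiplicative constants under control, so that standard product-measure concentration transfers with the right exponent. If so, most of the quantitative content is already implicit in the proof of~\cite[Theorem 2.9]{jenssen2023evolution}, and the task reduces largely to reading off the correct rate.
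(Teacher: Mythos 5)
Your reduction via the multiplicity
\[
N(G) = \bigl|\{(A,B)\in\Pi_{\textup{weak}} : G\in\cT_{A,B,\lam}^{\textup{w}}\}\bigr|
\]
is a genuinely different organizing framework from the one the paper uses, and the algebra is correct: since $\mu_{\textup{weak},\lam}(G)=N(G)\lam^{|G|}/Z_{\textup{weak}}(\lam)$ and $\mu_{\cT,p}(G)=\lam^{|G|}/Z(\lam)$, the identity $Z_{\textup{weak}}(\lam)=Z(\lam)\,\E_{\mu_{\cT,p}}[N]$ and the total-variation formula $\tfrac{1}{2}\E|N-\E N|/\E N$ both follow immediately, and your two tail bounds do imply the statement. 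The paper instead proceeds through an intermediate measure $\mu_{\cL,\lam}$ supported on graphs with a nearly-optimal dominating, weakly balanced cut: one step ($\|\mu_{\cT,p}-\mu_{\cL,\lam}\|_{TV}\leq e^{-c\lam n^2}$, imported from \cite{jenssen2023evolution} via the effective form of the container results of \cite{balogh2015independent,balogh2016typical}) handles what in your language is $\P(N=0)$, and a second step ($\|\mu_{\cL,\lam}-\mu_{\textup{weak},\lam}\|_{TV}=O(e^{-\sqrt n})$, again from \cite{jenssen2023evolution}) handles what corresponds to both degree regularity and the approximate uniqueness $\E[(N-1)\1\{N\geq2\}]$. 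Your single-shot decomposition is conceptually cleaner and exposes exactly which first- and second-moment quantities govern the error, but it doesn't dodge the hard content: both tail bounds still require (i) the effective container-method rate for the existence of a near-perfect weakly balanced cut (the ``sharpen $o(1)$ to $e^{-\sqrt n}$'' step you mention is where \cite[Theorem 1.7]{balogh2015independent} and \cite[Section 6]{balogh2016typical} enter, and the true rate there is $e^{-\Omega(n^{3/2})}$, far stronger than needed), and (ii) a degree/uniqueness estimate whose $e^{-\sqrt n}$ rate is set by the $\Delta(G[A]\cup G[B])\leq\alpha/\lam$ threshold at $\lam\asymp n^{-1/2}$, which is precisely where \cite[Proposition 3.5]{jenssen2023evolution} lives. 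One caution about your sketch of $\P(N=0)$: the cluster expansion is only valid once you already know the defect graph has small maximum degree, so it cannot be the tool that proves a large cut exists in the first place --- that part must come from the container method, and the Chernoff/cluster-expansion reasoning only controls the degree condition given a good cut. You correctly flag the transfer of product-measure concentration to the conditional law as the main obstacle; the paper resolves it by citation rather than by a new argument.
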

We defer the proof of Proposition~\ref{PropError} to Section~\ref{secMuWeakError}. 

With Proposition~\ref{PropError} in hand, we turn our attention to sampling from the measure $\mu_{\mathrm{weak},\lam}$. For this, we split Step~\ref{S12} in the definition of $\mu_{\mathrm{weak},\lam}$ into two substeps: first we sample the `defect edges' within parts $A$ and $B$ and then we sample edges crossing the partition $(A,B)$. The second of these substeps will be relatively simple. As we will see, the crossing edges have the distribution of a (graph) hard-core model, and the condition in~\eqref{eqTABwDef} ensures that this model is subcritical in the sense that $\lam$ is  small enough as a function of the max degree that the cluster expansion converges and efficient sampling algorithms exist. Most of the work  in proving Theorem~\ref{thmHighSampling} will come from implementing the first step and showing we can sample defect edges efficiently. To make this more precise, we make some further definitions. 

 Given a partition $(A,B)$ of $[n]$ and a pair of graphs $S\subseteq \binom{A}{2}, T\subseteq \binom{B}{2}$, we write $S\boxempty T$ as a shorthand for the Cartesian product of the graphs $(A,S), (B,T)$, i.e., the graph with vertex set $V(S \boxempty T) = A \times B$ and edge set $E(S \boxempty T)$ equal to \[ \{ \{(a,b), (a,b')\}: \{b,b'\}\in T  \} \cup \{ \{(a,b), (a',b)\}: \{a,a'\}\in S  \} \,.\]

The significance of the Cartesian product for us comes from the following lemma.
\begin{lemma}
\label{lemSTgraphLemma}
Let $(A, B)$ be a partition of $[n]$ and suppose $S\subseteq \binom{A}{2}, T \subseteq \binom{B}{2}$ such that $S\cup T$ is triangle-free.  Let $\cG(S,T)$ be the set of triangle-free graphs $G$ so that $G[A]= S$ and $G[B]= T$.  Then
\[ \sum_{G \in \cG(S,T)} \lam^{|G|} =  \lam^{|S| + |T|} Z_{S\boxempty T} (\lam) \,,\]
where $Z_{S\boxempty T} (\lam)$ is the hard-core partition function on the graph $S\boxempty T$.
\end{lemma}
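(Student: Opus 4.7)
The plan is to set up a weight-preserving bijection between $\cG(S,T)$ and the family $\cI(S\boxempty T)$ of independent sets in the Cartesian product graph. Since $G[A]=S$ and $G[B]=T$ are prescribed, any $G\in\cG(S,T)$ is uniquely determined by its set $C\subseteq A\times B$ of crossing edges, with $|G|=|S|+|T|+|C|$. Hence the entire content of the lemma reduces to the claim that $S\cup T\cup C$ is triangle-free if and only if $C$, viewed as a subset of $V(S\boxempty T)=A\times B$, is an independent set in $S\boxempty T$.

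To establish this equivalence, I would classify the possible triangles in $G$ by how their three vertices split between $A$ and $B$. A triangle lying entirely within $A$ would be a triangle of $S$, and one entirely within $B$ a triangle of $T$; both are excluded by the hypothesis that $S\cup T$ is triangle-free. The only remaining option is a $2{+}1$ split: two vertices $a,a'$ in one part and a single vertex in the other, say $b\in B$ (the mirror case is symmetric). Such a triangle uses the non-crossing edge $\{a,a'\}\in S$ together with the two crossing edges $(a,b),(a',b)\in C$, and by the very definition of the Cartesian product, $\{(a,b),(a',b)\}$ is then an edge of $S\boxempty T$. Conversely, any edge of $S\boxempty T$ whose two endpoints both lie in $C$ witnesses a triangle of this type. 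Thus triangle-freeness of $G$ is precisely the statement that $C$ contains no edge of $S\boxempty T$, i.e.\ $C\in\cI(S\boxempty T)$.

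Summing the weight $\lam^{|G|}=\lam^{|S|+|T|}\lam^{|C|}$ across this bijection then gives
\begin{align}
\sum_{G\in\cG(S,T)}\lam^{|G|}
=\lam^{|S|+|T|}\sum_{C\in\cI(S\boxempty T)}\lam^{|C|}
=\lam^{|S|+|T|}\,Z_{S\boxempty T}(\lam),
\end{align}
which is the desired identity. There is no serious obstacle here; the only point that requires care is checking exhaustiveness of the case analysis on triangles, and in particular verifying in the $2{+}1$ case that the non-crossing edge really does lie in $S$ (resp.\ $T$), which it must because $G[A]=S$ and $G[B]=T$ are prescribed.
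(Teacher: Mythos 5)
Your argument is correct and matches the paper's proof: both identify crossing edge sets of graphs in $\cG(S,T)$ with independent sets of $S\boxempty T$ and observe that this correspondence is weight-preserving after factoring out $\lam^{|S|+|T|}$. The only difference is that you spell out the case analysis on triangle types (all-in-one-part versus a $2{+}1$ split) that the paper leaves implicit, which is a reasonable amount of added detail.
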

\begin{proof}
In what follows we identify the vertex $(u,v)\in V(S\boxempty T)=A \times B$ with the edge $\{u,v\}\in \binom{[n]}{2}$.
The proof follows from the observation that if $I$ is an independent set in the graph $S\boxempty T$, then  the graph on $[n]$ with edge set $S \cup T \cup I$ is a triangle-free graph in $\cG(S,T)$, and likewise for any $G \in \cG(S,T)$, $E(G) \cap (A\times B)$ forms an independent set in $S\boxempty T$, giving a  one-to-one correspondence.
\end{proof}

Let $\cD_{A,B,\lam}^\textup{w}$ denote the set of pairs $(S,T)$ such that $S\subseteq\binom{A}{2}$, $T\subseteq\binom{B}{2}$, $\Delta(S\cup T)\leq \alpha/\lam$ and $S\cup T$ is triangle free. In other words, 
\begin{align}\label{eqDABwDef}
 \cD_{A,B,\lam}^\textup{w}= 
 \{(G[A], G[B]): G\in  \cT_{A,B,\lam}^{\textup{w}}\}\, ,
\end{align}
the set of possible defect graphs with respect to $(A,B)$. By Lemma~\ref{lemSTgraphLemma} we see that
\[
Z_{A,B}^{\textup{w}}(\lam) = \sum_{(S,T)\in \cD_{A,B,\lam}^\textup{w}} \lam^{|S|+|T|} Z_{S\boxempty T} (\lam)\, ,
\]
and we may rewrite the definition of $\mu_{\mathrm{weak},\lam}$ as
 \begin{enumerate}
     \item Sample $(A,B)\in\Pi_{\text{weak}}$ with probability proportional to $Z_{A,B}^{\textup{w}}(\lam)$.\label{S1}
     \item Sample $(S,T)\in  \cD_{A,B,\lam}^\textup{w}$ with probability proportional to $\lam^{|S|+|T|}Z_{S\boxempty T}(\lam)$.\label{S2}
     \item Sample $\Ec\subseteq A\times B$ from the hard-core model on $S\boxempty T$ at activity $\lam$. \label{S3}
 \item Let $E= S\cup T\cup \Ec$ and output the graph $G= ([n], E)$.
\end{enumerate}
As mentioned above, Step~\ref{S2} will be our main focus. Given a weakly balanced partition $(A,B)$ let $\nu_{A,B,\lam}$ be the distribution on $\cD_{A,B,\lam}^\textup{w}$ defined by
\begin{equation}
    \nu_{A,B,\lam} (S,T) = \frac{ \lam^{|S|+|T|}Z_{S\boxempty T}(\lam)  } { Z_{A,B}^{\textup{w}}(\lam)  } \,.
\end{equation}

 Our main task is to show that we can sample from $\nu_{A,B,\lam} $ efficiently. First we show that for $\lam \ge C n^{-1/2}$, the edge-update Glauber dynamics for $\nu_{A,B,\lam}$ mixes rapidly. 
 \begin{prop}
    \label{propNuMix}
    For $\lam \ge C n^{-1/2}$, the mixing time of the edge-update Glauber dynamics for $\nu_{A,B,\lam}$ is $O(n^2 \log n)$,
    uniformly over weakly balanced $(A,B)$.
\end{prop}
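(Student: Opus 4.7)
I would adapt the path-coupling argument of Section~\ref{secMixLow} to the measure $\nu_{A,B,\lam}$, with the cluster expansion (Lemma~\ref{lemClusterTail}) as the main new ingredient. Two features distinguish the present setting from Theorem~\ref{thmLowMixing}: the stationary measure carries an extra hard-core weight $Z_{S\boxempty T}(\lam)$, and the state space $\cD_{A,B,\lam}^{\textup{w}}$ has a hard degree cap $\alpha/\lam$. The structural key is that the Cartesian-product graph $S\boxempty T$ has maximum degree at most $2\alpha/\lam$, so $\lam\cdot 2\alpha/\lam = 2\alpha$ lies well inside the absolute-convergence disk of Lemma~\ref{lemClusterTail}. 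This gives excellent control over the ratios of hard-core partition functions $R_e(X):=Z_{(S+e)\boxempty T}(\lam)/Z_{S\boxempty T}(\lam)$ that appear in the edge marginals $p_e(X)=\lam R_e(X)/(1+\lam R_e(X))\cdot\mathbf{1}[\text{valid}]$.

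The first step is a burn-in, showing that after $T_0=O(n^2\log n)$ Glauber steps the defect graph has maximum degree at most some $D$ with $D<\alpha/\lam-2$, with probability $1-o(1)$. Since $R_e\le 1$, the chain is dominated by an Erd\H{o}s-R\'enyi-like chain at activity $\lam$; a sharper effective activity $\lam R_e = \lam e^{-\Theta(\lam^2 n)}$ for typical $X$ comes from Lemma~\ref{lemClusterTail}, and a Chernoff-plus-coupon-collector argument in the style of Corollary~\ref{cordegreemix} then delivers the degree bound.

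Next I would apply path coupling on the restricted state space $\Omega_D:=\{X\in\cD_{A,B,\lam}^{\textup{w}}:\Delta(X)\le D\}$. For a distance-one pair $(X,X')$ differing in a single defect edge $f$, the `same-edge' coupling has two cases: resampling $f$ couples the chains perfectly, since both use the same conditional distribution (the rest of $X,X'$ coincides); resampling $e\neq f$ contributes $|p_e(X)-p_e(X')|$ to the expected distance. The heart of the proof is to show $\Sigma:=\sum_{e\neq f}|p_e(X)-p_e(X')|\le 1-\delta$ for some $\delta>0$. I would decompose $\Sigma$ into three contributions. The triangle-freeness constraint flips validity only for the $O(D)$ edges that complete a triangle with $f$ through an existing defect, each contributing $O(\lam)$, for total $O(D\lam)=O(\alpha)$. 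The degree cap contributes nothing so long as we remain inside $\Omega_D$. Finally, for every $e\neq f$, the discrepancy $|\log R_e(X)-\log R_e(X')|$ is expressed via cluster expansion as a sum over clusters of $S\boxempty T$ whose incompatibility graphs meet both the `$e$-fibre' $\{(i,b),(j,b):b\in B\}$ (where $e=\{i,j\}$) and the analogous `$f$-fibre'; connectivity forces such clusters to be large when $e$ and $f$ are distant in $S\boxempty T$, and Lemma~\ref{lemClusterTail} gives the geometric decay needed to bound the sum over $e$ by $O(\alpha)$. Since $\alpha=1/(96e^3)$ is small, $\Sigma\le 1-\delta$, giving contraction rate $1-\delta/N$ with $N=\binom{|A|}{2}+\binom{|B|}{2}=\Theta(n^2)$; combined with the burn-in this yields $\tau_{\mathrm{mix}}=O(n^2\log n)$ uniformly in weakly balanced $(A,B)$.

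\textbf{Main obstacle.} The delicate step is the third contribution to $\Sigma$: the cluster-expansion bound summed over $\Theta(n^2)$ candidate edges must not overwhelm the contraction. The per-edge correction is very small, but the naive $n^2$ loss needs to be beaten by exploiting that a cluster $\Gamma$ of size $k$ in $S\boxempty T$ can be responsible for the log-ratio discrepancy of only $O(k^2)$ edges $e$, and the factor $\lam^{|\Gamma|}$ in Lemma~\ref{lemClusterTail}, with $2\alpha$ safely below $1/(2e)$, then gives the required geometric control.
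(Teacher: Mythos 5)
Your high-level strategy — burn-in to a degree-bounded state space, then path coupling with cluster expansion controlling the edge marginals $p_e(X)=\lam R_e/(1+\lam R_e)$ — matches the paper's, and your decomposition of $\Sigma=\sum_{e\neq f}|p_e(X)-p_e(X')|$ into the same-edge, triangle-constraint, and partition-function-ratio pieces is the right one. The paper's Lemmas~\ref{lemHomSum} and~\ref{fixedclusterphibd} carry out exactly the ``cluster of size $k$ supports few edges $e$'' bookkeeping you sketch.

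However, your estimate for the third (cluster-sum) contribution has a genuine gap. A cluster that hits both an $e$-pair and an $f$-pair has size at least $3$, and there are $\Theta(n^2)$ such size-$3$ clusters (roughly $n$ choices of $f$-pair and $n$ choices of third vertex), so $\sum_{e\neq f}|\log R_e(X)-\log R_e(X')| = \Theta(n^2\lam^3)=\Theta(C^3\sqrt n)$ — this diverges, and the $\lam^{|\Gamma|}$ decay you invoke does not tame it. Converting this to $\sum_{e}|p_e(X)-p_e(X')|$ with only the crude bound $R_e\le 1$ gives $|\Delta p_e|\le 2\lam|\Delta\log R_e|$, hence a total of $\Theta(n^2\lam^4)=\Theta(C^4)$, which is large (and grows for $\lam\gg n^{-1/2}$), not $O(\alpha)$. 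The indispensable ingredient you are missing is the strong marginal bound $\lam R_e\le q:=\lam e^{-n\lam^2/5}$ (the paper's Lemma~\ref{lemSDedgemarg}), obtained by noting that the leading term of $\log R_e$ in the cluster expansion is $-\lam^2|B|$ (from the $|B|$ new size-$2$ clusters introduced by $e$), with higher-order clusters a small correction. With this, $|\Delta p_e|\le 2q|\Delta\log R_e|$, and the total becomes $O(q\, n^2\lam^3)=O(C^4e^{-C^2/5})$, which is small for $C$ large. You do mention $\lam R_e=\lam e^{-\Theta(\lam^2 n)}$ in your burn-in discussion, and use it to set the degree cap, but you never carry that exponential suppression into the path-coupling estimate — and without it, the sum over $e$ overwhelms the contraction. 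Relatedly, the degree cap you should path-couple on is the exponentially small $D\asymp nq$ coming from the stochastic domination $\nu_{A,B,\lam}\preceq\mu(A,q)\times\mu(B,q)$, not the looser $\alpha/\lam$; the latter is the constraint defining the state space, not the typical maximum degree.
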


We prove Proposition~\ref{propNuMix} in the next section. We use path coupling in a similar spirit to the proof of Theorem~\ref{thmLowMixing}, however now the task is significantly more difficult due to the complex interactions between edges under the measure $\nu_{A,B,\lam}$. In particular, if we condition on the entire graph outside of the edge $e$, computing the conditional marginal of $e$ to sufficient accuracy is a non-trivial task (whereas this is trivial for $\mu_{\cT,p}$). Our proof of Proposition~\ref{propNuMix} leans heavily on the cluster expansion to estimate these edge marginals.  

Once we have proved Proposition~\ref{propNuMix}, we show that we can approximately implement the steps of the Glauber dynamics efficiently to obtain a sampling algorithm for $\nu_{A,B,\lam}$. 
\begin{prop}\label{propDefectSample}
If $\lam>C/\sqrt{n}$ and $(A,B)$ is weakly balanced then there are randomized algorithms running in time polynomial in $n$, $1/\eps$ and $\log(1/\delta)$ (uniformly over all $(A,B)$) that
  \begin{enumerate}
        \item Output $M$  so that with probability at least $1-\delta$, $M$ is an $\eps$-relative approximation to $Z_{A,B,\lam}^{\textup{w}}(\lam)$.
        \item Output $G \in \cT$ with distribution $\hat\nu$ so that $\| \hat \nu - \nu_{A,B,\lam} \|_{TV} \le \eps$\,.
    \end{enumerate}
\end{prop}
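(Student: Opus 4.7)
The plan is to run an approximate version of the edge-update Glauber dynamics for $\nu_{A,B,\lam}$—whose rapid mixing is given by Proposition~\ref{propNuMix}—and then convert the resulting sampler to a counting algorithm via a standard sampling-to-counting reduction. The main task is to implement each Glauber step in polynomial time to sufficient accuracy. At state $(S,T)\in\cD^{\textup{w}}_{A,B,\lam}$ with candidate update edge $e$, the true conditional probability that $e$ lies in the defect graph is
\[
q(e\mid S,T)\;=\;\frac{\lam\, Z_{S'\boxempty T'}(\lam)}{\lam\, Z_{S'\boxempty T'}(\lam)+Z_{S\boxempty T}(\lam)}\, ,
\]
where $(S',T')$ is $(S,T)$ with $e$ added (and $q=0$ if the addition violates triangle-freeness or the max-degree bound $\alpha/\lam$). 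Since $\Delta(S\boxempty T),\,\Delta(S'\boxempty T')\le 2\alpha/\lam$ with $\alpha=1/(96e^3)$, Lemma~\ref{lemClusterTail} yields absolute convergence of both cluster expansions with geometrically decaying tails. Clusters disjoint from the vertex of $S\boxempty T$ corresponding to $e$ cancel in the difference $\log Z_{S'\boxempty T'}(\lam)-\log Z_{S\boxempty T}(\lam)$, so truncating at cluster size $k=O(\log(n/\eps))$ approximates $q(e\mid S,T)$ to additive error $\eta=\eps/\mathrm{poly}(n)$; a careful enumeration of the surviving clusters (exploiting the Cartesian-product structure of $S\boxempty T$) evaluates the truncated sum in time polynomial in $n$ and $1/\eps$.

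Running this approximate chain for $T=O(n^2\log(n/\eps))$ steps then produces a sample within total variation $\eps$ of $\nu_{A,B,\lam}$: a step-by-step coupling between the exact and approximate chains incurs at most $T\eta=O(\eps)$ error, and Proposition~\ref{propNuMix} controls the mixing of the exact chain. This yields part~(2). For part~(1), I would apply the adaptive simulated-annealing framework of~\cite{vstefankovivc2009adaptive} to $f(\beta)=\sum_{G\in\cT^{\textup{w}}_{A,B,\lam}}\beta^{|G|}$, which satisfies $f(0)=1$ and $f(\lam)=Z_{A,B}^{\textup{w}}(\lam)$. Writing $f(\lam)$ as a telescoping product along a polylogarithmic-length cooling schedule $0=\beta_0<\cdots<\beta_m=\lam$ and estimating each ratio $f(\beta_i)/f(\beta_{i-1})$ by Monte Carlo using the sampler from~(2) at parameter $\beta_i$ (which applies uniformly for $\beta\le\lam$, since both Proposition~\ref{propNuMix} and the convergence condition of Lemma~\ref{lemClusterTail} only loosen as $\beta$ decreases) delivers an FPRAS; median-of-means amplification over $O(\log(1/\delta))$ runs boosts confidence to $1-\delta$.

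\textbf{The principal obstacle} is computing the marginal $q(e\mid S,T)$ in \emph{polynomial} (not merely quasi-polynomial) time: because $S\boxempty T$ has maximum degree $\Theta(1/\lam)=O(\sqrt n)$, a brute-force enumeration of all clusters of size up to $O(\log(n/\eps))$ would already be quasi-polynomial in $n$. Extracting polynomial running time requires leveraging both the cancellation in the difference of the two cluster expansions and the product structure of $S\boxempty T$—this is precisely the refined use of the cluster expansion advertised in the introduction, and the step I expect to require the most care.
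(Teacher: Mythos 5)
Your high-level plan — run an approximate edge-update Glauber chain for $\nu_{A,B,\lam}$, invoke Proposition~\ref{propNuMix} for mixing, and convert to a counter via simulated annealing — matches the paper, and your accounting of the error accumulation along $T=O(n^2\log(n/\eps))$ steps is sound. But the step you yourself flag as ``the principal obstacle'' is a genuine gap, and you do not close it: you propose to compute the single-edge marginal $q(e\mid S,T)$ by truncating and enumerating the cluster expansions of $\log Z_{S\boxempty T}(\lam)$ and $\log Z_{S'\boxempty T'}(\lam)$, but since $\Delta(S\boxempty T)=\Theta(\sqrt n)$, a cluster of size $k=\Theta(\log(n/\eps))$ can be chosen in $\Delta^{\Theta(k)}=n^{\Theta(\log n)}$ ways, so brute-force enumeration is quasi-polynomial. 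You gesture at ``cancellation'' and ``the product structure of $S\boxempty T$'' as the route to polynomiality, but no mechanism is given, and it is not clear one exists in the form you describe — the cancellation between the two expansions only removes clusters far from $e$, leaving $\Theta(n)$ starting $e$-pairs each with $\Delta^{k-1}$ extensions.

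The paper avoids this entirely by \emph{not} computing the marginals via cluster expansion. Instead it observes that the numerator and denominator of the marginal, $Z_{(G\cup e)_\boxempty}(\lam)$ and $Z_{G_\boxempty}(\lam)$, are ordinary (graph) hard-core partition functions on graphs of maximum degree $\le 2\alpha/\lam$ with $\alpha=1/(96e^3)$, so $\lam\le 1/\Delta$ puts them well inside the tractable regime of the hard-core model. It then invokes an off-the-shelf FPRAS for $Z_G(\lam)$ at $\lam\le 1/\Delta(G)$ (Lemma~\ref{lemHCRandApprox}, citing~\cite{chen2022localization,jerrum1986random}), which runs in time polynomial in $n$ and $\log(1/\eps)$, and plugs the two estimates $M_1,M_2$ into $\hat p(e\mid G)=\lam M_1/(M_2+\lam M_1)$. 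The cluster expansion is used in the paper only for the \emph{analysis} (proving path coupling in Lemma~\ref{lempathcoupleNu}), not in the algorithm itself. Your confusion appears to stem from conflating those two roles: the expansion proves the chain contracts, while a separate black-box counting subroutine implements each transition. If you want to salvage a cluster-expansion-based implementation, you would need to replace naive enumeration with something like the Barvinok/Patel--Regts interpolation method, which does give quasi-polynomial (and under extra structure, polynomial) algorithms from zero-freeness — but that is a different and much more delicate argument than the paper's, and the paper's route is cleaner and should be preferred here.
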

We prove Proposition~\ref{propDefectSample} in the next section.  

\section{Proofs for sampling at high density}
\label{secProofsHighDensity}
In this section we prove Propositions~\ref{propNuMix} and~\ref{propDefectSample} and deduce Theorem~\ref{thmHighSampling}. We begin with Proposition~\ref{propNuMix}.

\subsection{Glauber dynamics for the defect distribution}

A key step in the proof of Proposition~\ref{propNuMix} is to show that $\nu_{A,B,\lam}$ is stochastically dominated by two independent copies of relatively sparse \ER random graphs on $A,B$ respectively. Given a set $V$ and $q\in[0,1]$, we let $\mu(V,q)$ denote the measure associated to the \ER random graph on vertex set $V$ and edge probability $q$.
\begin{lemma}\label{lemSDlargeC}
For $\lam\geq Cn^{-1/2}$, the measure $\nu_{A,B,\lam}$ is stochastically dominated by $\mu(A,q)\times \mu(B,q)$ where $q= \lam e^{-n\lam^2/5}$. 
\end{lemma}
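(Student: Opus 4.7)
The plan is to reduce the stochastic domination of $\nu_{A,B,\lam}$ by $\mu(A,q)\times\mu(B,q)$ to a uniform pointwise conditional-marginal bound: if for every potential defect edge $e$ and every configuration of the remaining defect edges the conditional probability that $e$ lies in the sample under $\nu_{A,B,\lam}$ is at most $q$, then a standard edge-by-edge coupling produces the desired domination. By symmetry I may assume $e=\{a,a'\}\in\binom{A}{2}$. If adding $e$ to $S$ violates membership in $\cD_{A,B,\lam}^{\textup{w}}$ the conditional probability is $0\leq q$; otherwise Lemma~\ref{lemSTgraphLemma} reduces the marginal to
\[
\nu_{A,B,\lam}(e\in S\mid S_{-e},T)=\frac{\lam R}{1+\lam R}\leq \lam R,\qquad R=\frac{Z_{(S\cup e)\boxempty T}(\lam)}{Z_{S\boxempty T}(\lam)}.
\]
Thus it suffices to prove $R\leq e^{-n\lam^2/5}$.

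The two hard-core graphs share vertex set $A\times B$, and $(S\cup e)\boxempty T$ has $|B|$ additional edges $\{(a,b),(a',b)\}$, one for each $b\in B$. The defect-degree bound $\Delta(S\cup T)\leq \alpha/\lam$ forces both hard-core graphs to have maximum degree at most $2\alpha/\lam$, so with $\alpha=1/(96e^3)$ the hypothesis of Lemma~\ref{lemClusterTail} is satisfied (in the relevant range of $\lam$) and I can expand each $\log Z$ as an absolutely convergent cluster sum. In comparing the two expansions, a cluster contributes to $\log R$ only if its incompatibility graph changes between $S\boxempty T$ and $(S\cup e)\boxempty T$; this happens only when the cluster contains both endpoints $(a,b),(a',b)$ of a single new edge.

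The size-$2$ contribution comes solely from the ordered pairs $((a,b),(a',b))$ and $((a',b),(a,b))$: these are clusters of Ursell weight $-\tfrac{1}{2}$ in $(S\cup e)\boxempty T$ but fail to be clusters in $S\boxempty T$ (where the induced graph on these two vertices is empty). Summing over $b\in B$ yields a net contribution of $2|B|\cdot(-\tfrac{1}{2})\lam^2=-|B|\lam^2$. For the tail I would invoke Lemma~\ref{lemClusterTail} with $U=\{(a,b),(a',b)\}$, $k=3$, and $\Delta\leq 2\alpha/\lam$: the absolute size-$\geq 3$ contribution containing a fixed such pair is at most $(2e)^3(2\alpha/\lam)\lam^3=\lam^2/6$ for each of the two partition functions, and hence at most $|B|\lam^2/3$ in aggregate. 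Using weak balance ($|B|\geq 9n/20$),
\[
\log R\leq -|B|\lam^2+\tfrac{1}{3}|B|\lam^2=-\tfrac{2}{3}|B|\lam^2\leq -\tfrac{3}{10}n\lam^2\leq -\tfrac{1}{5}n\lam^2,
\]
giving $R\leq e^{-n\lam^2/5}$ and $\lam R\leq q$ as required.

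The main obstacle is the cluster-expansion bookkeeping above: classifying correctly which ordered tuples, possibly with repeated entries, can have differing incompatibility graphs in the two hard-core graphs (tuples whose connectivity comes from repetition alone cancel in the difference, since repetition-edges do not depend on the added edges), computing the size-$2$ Ursell coefficient exactly, and applying Lemma~\ref{lemClusterTail} with $|U|=2$. Once these combinatorial details are settled, the specific value $\alpha=1/(96e^3)$ from~\cite{jenssen2023evolution} is precisely calibrated so that the size-$\geq 3$ contribution is a controlled fraction of the leading term, leaving the slack needed for the final inequality.
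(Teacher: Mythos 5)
Your proposal is correct and follows essentially the same route as the paper: the paper also reduces stochastic domination to a uniform conditional edge-marginal bound (their Lemma~\ref{lemSDedgemarg}, which yields Lemma~\ref{lemSDlargeC} immediately), writes the marginal as $\lam R/(1+\lam R)$ with $R$ the ratio of two $\boxempty$-hard-core partition functions, extracts the leading $-|B|\lam^2$ from the size-$2$ clusters (each ordered $e$-pair carrying Ursell weight $-1/2$), and controls the size-$\geq 3$ tail via Lemma~\ref{lemClusterTail} with $|U|=2$, $k=3$, and $\Delta\leq 2\alpha/\lam$. If anything you are a touch more careful than the paper's displayed identity, since you explicitly account for the fact that clusters present in both graphs can have differing Ursell weights (so the tail must be bounded ``for each of the two partition functions''); the constant $\alpha=1/(96e^3)$ is slack enough that this factor-of-two refinement still lands comfortably at $\log R\leq -n\lam^2/5$.
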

Lemma~\ref{lemSDlargeC} shows that as $\lam$ gets larger samples from the defect measure $\nu_{A,B,\lam}$ become increasingly sparse. 
We will prove Lemma~\ref{lemSDlargeC} in the following subsection (Section~\ref{subsecSD}). We record the following corollary.

\begin{cor}\label{cordegreemixsuper}
Let $\lam\geq Cn^{-1/2}$.
There exists $C'>0$ so that the following holds.
Let $(X_t)_{t\geq 0}$ be a run of the Glauber dynamics for $\nu_{A,B,\lam}$ with $X_0$ arbitrary.  If $t\geq C' n^2 \log(n/\eps)$,
\[
\P(\Delta(X_t)\geq 2nq )\leq ne^{-nq/3}+\eps\, 
\]
where $q= \lam e^{-n\lam^2/5}$.
\end{cor}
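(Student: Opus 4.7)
The argument parallels Corollary~\ref{cordegreemix}, with Lemma~\ref{lemSDlargeC} substituting for Lemma~\ref{lemStochDom} and the product measure $\mu(A,q)\times\mu(B,q)$ playing the role of $\mu_p$.

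First, I would establish a Glauber analogue of Lemma~\ref{lemcoupleGnp}: a coupling $(X_t,Y_t)_{t\geq 0}$ of the edge-update Glauber dynamics for $\nu_{A,B,\lam}$ with the Glauber dynamics for $\mu(A,q)\times\mu(B,q)$ on the edge set $\binom{A}{2}\cup\binom{B}{2}$, such that $X_0\subseteq Y_0$ implies $X_t\subseteq Y_t$ for all $t$. The standard recipe is to pick the same edge $e$ in both chains and use a single uniform coin to decide inclusion. Monotonicity is preserved provided the single-edge conditional marginal of $\nu_{A,B,\lam}$ satisfies
\[
\nu_{A,B,\lam}(e\text{ present}\mid\text{rest})\;\leq\;q\qquad\text{for every admissible configuration.}
\]
For $e\in\binom{A}{2}$ this marginal equals
\[
\frac{\lam\,Z_{(S\cup\{e\})\boxempty T}(\lam)}{Z_{S\boxempty T}(\lam)+\lam\,Z_{(S\cup\{e\})\boxempty T}(\lam)},
\]
with the symmetric expression for $e\in\binom{B}{2}$, so the required bound reduces to an estimate on the ratio of hard-core partition functions on two graphs that differ by the matching of $|B|\approx n/2$ edges introduced by $e$. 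This is exactly the kind of cluster-expansion estimate that drives Lemma~\ref{lemSDlargeC}, and it should be obtainable either as a direct consequence of its proof or with a minor sharpening.

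Second, the dominating chain is Glauber dynamics on a product of independent $\bern(q)$ variables, so a coupon-collector argument yields mixing time $O(n^2\log(n/\eps))$: for $t\geq C'n^2\log(n/\eps)$ we have $\|\mathrm{Law}(Y_t)-\mu(A,q)\times\mu(B,q)\|_{TV}\leq\eps$. A Chernoff bound on $\bin(n-1,q)$ gives $\P(d_Y(v)\geq 2nq)\leq e^{-nq/3}$, and a union bound over the $n$ vertices yields $\P(\Delta(Y)\geq 2nq)\leq ne^{-nq/3}$, whence
\[
\P(\Delta(Y_t)\geq 2nq)\;\leq\;ne^{-nq/3}+\eps.
\]
Monotonicity of the coupling ($X_t\subseteq Y_t$) immediately transfers this bound to $X_t$.

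The main obstacle is Step~1: promoting the joint stochastic domination of Lemma~\ref{lemSDlargeC} to the conditional (per-edge) domination required for a monotone coupling of the dynamics. In principle the cluster-expansion estimates used to prove Lemma~\ref{lemSDlargeC} apply uniformly in the conditioning, so verifying this should be routine once its proof is in hand, but it is the one nontrivial input beyond the standard Glauber-coupling template used in Corollary~\ref{cordegreemix}.
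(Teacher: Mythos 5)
Your proposal is correct and follows the paper's intended argument exactly: couple the $\nu_{A,B,\lam}$ Glauber dynamics monotonically with the Glauber dynamics on the product $\mu(A,q)\times\mu(B,q)$, then apply coupon collection and a Chernoff--union bound just as in Corollary~\ref{cordegreemix}. The one point you flag as a potential obstacle resolves itself: the per-edge conditional bound $p(e\mid G)\le q$ required for the monotone coupling is exactly Lemma~\ref{lemSDedgemarg}, which the paper proves directly and from which Lemma~\ref{lemSDlargeC} is then derived, so no sharpening of the latter is needed.
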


We will prove Proposition~\ref{propNuMix} via path coupling. For $(\mathbf{S},\mathbf{T})\sim \nu_{A,B,\lam}$,  $S\subseteq \binom{A}{2}$, $T\subseteq \binom{B}{2}$ and $e\in \binom{A}{2}\cup \binom{B}{2}$, define the conditional edge marginal
\[
p(e | S, T):=\P(e\in \mathbf{S}\cup \mathbf{T} \mid (\mathbf{S}\cup \mathbf{T})\backslash e=S\cup T)\, .
\]
If $\P((\mathbf{S}\cup \mathbf{T})\backslash e=S\cup T)=0$ then we define $p(e | S, T)$ to be $0$.

Let $(X_t,Y_t)_t\geq0$, denote two runs of the Glauber dynamics of $\nu_{A,B,\lam}$ coupled so that given $t\geq 0$ and $(X_t,Y_t)$ we perform the following update: 
\begin{itemize}
\item Pick $e\in\binom{A}{2}\cup \binom{B}{2}$ uniformly at random and let $U\sim U[0,1]$.
\item If $U\leq p(e | X_t)$ then set $X_{t+1}=X_t\cup e$. Otherwise set $X_{t+1}=X_t\backslash e$. 
\item Similarly, if $U\leq p(e | Y_t)$ then set $Y_{t+1}=Y_t\cup e$. Otherwise set $Y_{t+1}=Y_t\backslash e$. 
\end{itemize}

We refer to this as the \emph{optimal} coupling. Let
\begin{align}\label{eqOmegaDef}
\Omega(\lam)=\Bigg\{(S,T)\in\cD_{A,B,\lam}^{\textup{w}}: \Delta(S\cup T)\leq 2n\lam e^{-n\lam^2/5} \Bigg\}\, .
\end{align}

In analogy to  Lemma~\ref{lempathcouple}, our goal is to prove the following path coupling result. 
\begin{lemma}\label{lempathcoupleNu}
Let $\lam\geq Cn^{-1/2}$. Let $(X_t, Y_t)\in \Omega(\lam)\times \Omega(\lam)$ and let $(X_{t+1}, Y_{t+1})$ be copies of the optimally coupled Glauber dynamics for $\nu_{A,B,\lam}$. Then
\[
\E \left[d (X_{t+1}, Y_{t+1})\right]\leq \left(1- \frac{1}{n^2}\right) d(X_t, Y_t)\, ,
\]
where $d(\cdot,\cdot)$ is the Hamming distance between graphs.
\end{lemma}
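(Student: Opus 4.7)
The plan is a standard path coupling. Assume $X_t, Y_t$ differ in a single edge $e_0 \in X_t \setminus Y_t$; by symmetry take $e_0 \in \binom{A}{2}$. Under the optimal coupling, updating $e_0$ kills the discrepancy (both chains see the same conditional distribution and use the same $U$, since $X_t \setminus e_0 = Y_t \setminus e_0$), while updating any other edge $e$ keeps $e_0$ as a discrepancy and introduces a new discrepancy at $e$ with probability exactly $|p(e|X_t) - p(e|Y_t)|$. Hence
\[
\E[d(X_{t+1}, Y_{t+1})] = 1 - \tfrac{1}{N} + \tfrac{1}{N}\sum_{e\neq e_0} |p(e|X_t) - p(e|Y_t)|,
\]
with $N = \binom{|A|}{2} + \binom{|B|}{2} \leq n^2/2$. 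The desired contraction $\E[d] \leq 1 - 1/n^2$ therefore reduces to the bound $\sum_{e\neq e_0} |p(e|X_t) - p(e|Y_t)| \leq 1 - N/n^2$, and it suffices to prove that this sum is at most $1/2$.

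The core task is estimating these marginal differences via the cluster expansion. By Lemma~\ref{lemSTgraphLemma}, for any $e$ whose addition respects the constraints defining $\cD_{A,B,\lam}^\textup{w}$, one has $p(e|S,T) = \rho/(1+\rho)$ where $\rho(S,T,e) = \lam\, Z_{(S+e)\boxempty T}(\lam)/Z_{S\boxempty T}(\lam)$. The condition $(S,T)\in \Omega(\lam)$ forces $\Delta(S\boxempty T) \leq 4n\lam e^{-n\lam^2/5}$, which for large $C$ lies well below $1/(2e\lam)$, so Lemma~\ref{lemClusterTail} yields absolutely convergent cluster expansions with exponentially decaying tails. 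The graphs $S\boxempty T$ and $(S+e)\boxempty T$ differ only by a matching $M_e$ of $|B|$ (or $|A|$) new edges, and a standard telescoping expresses $\log \rho(X_t,T,e) - \log \rho(Y_t,T,e)$ as a signed sum over clusters whose induced incompatibility structure is sensitive to toggling \emph{both} $e$ and $e_0$ --- equivalently, clusters that meet both $V(M_e)$ and $V(M_{e_0})$.

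For edges $e$ sharing no vertex with $e_0$, these sets are disjoint, so any contributing cluster must contain a connected path in the incompatibility graph of $S\boxempty T$ between them. Its size therefore grows with the corresponding graph distance, and iterated application of Lemma~\ref{lemClusterTail} gives geometric decay. Summing over all such ``far'' edges is controlled by a convergent series in the effective activity $q = \lam e^{-n\lam^2/5}$ from Lemma~\ref{lemSDlargeC}, and is arbitrarily small for large $C$.

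The main obstacle is the $O(n)$ edges $e$ that share a vertex with $e_0$: here $V(M_e) \cap V(M_{e_0})$ is already nonempty, constant-size clusters contribute, and the crude bound $|p(e|X_t)-p(e|Y_t)| = O(\lam)$ summed over these close edges yields $\Theta(n\lam) = \Theta(\sqrt n)$ at the critical density, far exceeding $1/2$. To defeat this I would exploit Lemma~\ref{lemSDlargeC}: the leading-order cluster contributions to $\log Z_{S\boxempty T}$ renormalize the bare activity $\lam$ to the much smaller effective activity $q$, so the true edge marginals are of order $q$ rather than $\lam$. Isolating this renormalization in the cluster expansion --- treating $\nu_{A,B,\lam}$ as an \ER measure of activity $q$ perturbed by higher-order polymer corrections --- replaces $\lam$ with $q$ in the close-edge estimates and yields a total contribution of $O(nq)$, which is arbitrarily small once $C$ is large. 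Combining the two cases gives $\sum_e |p(e|X_t)-p(e|Y_t)| \leq 1/2$ and completes the path coupling.
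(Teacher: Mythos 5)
Your path-coupling reduction is set up correctly: in the one-discrepancy case with $e_0 \in X_t \setminus Y_t$, the optimal coupling kills the discrepancy when $e_0$ is selected and creates a new one at $e \neq e_0$ with probability exactly $|p(e|X_t)-p(e|Y_t)|$, so it suffices to show $\sum_{e\neq e_0}|p(e|X_t)-p(e|Y_t)|\leq 1/2$. The conceptual framing (cluster expansion for marginal differences, clusters that ``see'' both the $e$-pairs and the $e_0$-pairs) is also the right one, and your far-edge argument is in the right spirit.

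The genuine gap is in your close-edge bound. You claim that replacing the bare activity $\lam$ by the effective activity $q=\lam e^{-n\lam^2/5}$ gives a per-edge bound $O(q)$ and hence a total $O(nq)$ over the $O(n)$ close edges, and that this is small once $C$ is large. But at $\lam = C/\sqrt n$ we have $nq = n\lam e^{-n\lam^2/5} = C\sqrt n\, e^{-C^2/5}$, which is $\Theta(\sqrt n)$ and \emph{grows} with $n$; choosing $C$ large gains only a constant. The renormalization $\lam\to q$ governs the size of a single marginal $p(e|G)=O(q)$, but the quantity you need to control is the marginal \emph{difference} $|p(e|X_t)-p(e|Y_t)|$, and that is much smaller than $q$. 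The paper's Lemma~\ref{lemMargDiff} shows $|p(e|X_t)-p(e|Y_t)|\leq 2q\,R_e$ where $R_e$ is the Ursell-weighted sum over clusters meeting both an $e$-pair and an $e_0$-pair; for a close edge $e$ the leading contribution comes from size-$3$ clusters, of which there are only $O(n)$, giving $R_e=O(n\lam^3)$ and a per-edge bound $O(qn\lam^3)=O(q/\sqrt n)$ — a factor $\sqrt n$ sharper than your $O(q)$. Summing this (and the far-edge tail) is exactly what Lemma~\ref{fixedclusterphibd} does, giving $\sum_e R_e = O(n^2\lam^3)$ and hence a total marginal-difference sum $O(qn^2\lam^3)=O(C^4e^{-C^2/5})$, which \emph{is} small for $C$ large. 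Separately, note that for the $O(nq)$ edges $e$ for which $X_t\cup e$ contains a triangle (necessarily through $e_0$), Lemma~\ref{lemMargDiff} does not apply because $p(e|X_t)=0$; these must be handled by the crude bound $|p(e|X_t)-p(e|Y_t)|\leq q$, contributing $O(nq^2)=O(C^2 e^{-2C^2/5})$, which you do not address. Without the $R_e$ refinement on close edges, your argument does not close.
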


 We note that Proposition~\ref{propNuMix} follows from Lemma~\ref{lempathcoupleNu} via the same burn-in argument that was used to deduce Theorem~\ref{thmLowMixing} from Lemma~\ref{lempathcouple} (only now we use Corollary~\ref{cordegreemixsuper} in place of Corollary~\ref{cordegreemix}).

To prove Lemma~\ref{lempathcoupleNu} we need a good understanding of the edge marginals that appear in the definition of the optimal coupling above. This will be a goal of the next subsection.

\subsection{Stochastic domination and computing edge marginals in $\nu_{A,B,\lam}$.}\label{subsecSD}

Throughout this section we fix a weakly balanced partition $(A,B)$. For the proofs to come, it will be convenient to use the following shorthand. Given a graph $G\subseteq \binom{A}{2}\cup \binom{B}{2}$ we let $G_\boxempty$ denote the graph $G[A]\boxempty G[B]$. We will identify $G$ with the pair $(G[A], G[B])$ and write $\nu_{A,B,\lam}(G)$ in place of $\nu_{A,B,\lam}(G[A],G[B])$ and $p(e|G)$ in place of $p(e|G[A], G[B])$.

Lemma~\ref{lemSDlargeC} follows immediately from the following. 
\begin{lemma}\label{lemSDedgemarg}
Let $\lam\geq Cn^{-1/2}$. For $e\in \binom{A}{2}\cup \binom{B}{2}$, $G\subseteq \binom{A}{2}\cup \binom{B}{2}$,
\[
p(e | G) \leq \lam e^{-n\lam^2/5}\, .
\]
\end{lemma}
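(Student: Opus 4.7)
The plan is to reduce the edge marginal $p(e|G)$ to a ratio of hard-core partition functions and estimate this ratio via the cluster expansion. Without loss of generality take $e=\{u,v\}\in\binom{A}{2}$. If $G\cup e\notin\cD_{A,B,\lam}^{\textup{w}}$ then $p(e|G)=0$ by construction and the bound is trivial; otherwise $G\setminus e\in\cD_{A,B,\lam}^{\textup{w}}$ as well, since removing an edge cannot create a triangle or raise the degree. Lemma~\ref{lemSTgraphLemma} then yields
\[
p(e|G) \;=\; \frac{\lam\, Z_{(G\cup e)_\boxempty}(\lam)}{\lam\, Z_{(G\cup e)_\boxempty}(\lam) + Z_{(G\setminus e)_\boxempty}(\lam)} \;\le\; \lam\cdot \frac{Z_{H\cup F}(\lam)}{Z_H(\lam)},
\]
where $H:=(G\setminus e)_\boxempty$ and $F:=(G\cup e)_\boxempty\setminus H$. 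From the definition of the Cartesian product, $F$ is exactly the set of $|B|$ edges $\{(u,w),(v,w)\}$ for $w\in B$, and it forms a matching in $H\cup F$. It therefore suffices to prove $Z_{H\cup F}(\lam)/Z_H(\lam) \le e^{-n\lam^2/5}$.

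For this I apply the cluster expansion. The max-degree bound on $\cD_{A,B,\lam}^{\textup{w}}$ gives $\Delta(H),\Delta(H\cup F)\le 2\alpha/\lam + 1$, which (in the relevant range of $\lam$) satisfies $\lam(\Delta+1)\le 1/(2e)$, placing us in the convergence regime of Lemma~\ref{lemClusterTail}. Writing
\[
\log\frac{Z_{H\cup F}(\lam)}{Z_H(\lam)} = \sum_{\Gamma\in\cC(H\cup F)\setminus\cC(H)}\phi_{H\cup F}(\Gamma)\lam^{|\Gamma|} + \sum_{\Gamma\in\cC(H)}\bigl(\phi_{H\cup F}(\Gamma)-\phi_H(\Gamma)\bigr)\lam^{|\Gamma|},
\]
the key observation is that both sums are supported on \emph{$F$-touching} clusters, meaning clusters $\Gamma$ whose vertex multiset contains both endpoints of some $f\in F$: a new cluster of $H\cup F$ must gain its connectivity from an $F$-edge, and $\phi_H(\Gamma)\ne\phi_{H\cup F}(\Gamma)$ exactly when an $F$-edge sits inside the incompatibility graph of $\Gamma$.

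The leading $|\Gamma|=2$ contribution is easy to isolate: the only $F$-touching size-two clusters are the ordered tuples $(x,y)$ and $(y,x)$ for each $\{x,y\}\in F$, each with $\phi=-\tfrac{1}{2}$, giving a total of $-|F|\lam^2 = -|B|\lam^2$. For $|\Gamma|\ge 3$, I bound the absolute value of the remaining error by
\[
\sum_{\substack{\Gamma\in\cC(H\cup F)\\ F\text{-touching},\,|\Gamma|\ge 3}}|\phi_{H\cup F}(\Gamma)|\lam^{|\Gamma|} \;+\; \sum_{\substack{\Gamma\in\cC(H)\\ F\text{-touching},\,|\Gamma|\ge 3}}|\phi_H(\Gamma)|\lam^{|\Gamma|},
\]
over-counting each by iterating over $f\in F$ and applying Lemma~\ref{lemClusterTail} with $U=f$ and $k=3$. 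Each inner tail is at most $(2e)^3\Delta\,\lam^3 \leq (2e)^3\cdot 3\alpha\cdot\lam^2 = \lam^2/4$, using $\alpha=1/(96e^3)$. The total error is therefore at most $2\cdot |F|\cdot\lam^2/4 = |F|\lam^2/2$, so
\[
\log\frac{Z_{H\cup F}(\lam)}{Z_H(\lam)} \;\le\; -|F|\lam^2 + \tfrac{1}{2}|F|\lam^2 \;=\; -\tfrac{1}{2}|B|\lam^2.
\]

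Finally, weak balance $\big||A|-|B|\big|\le n/10$ forces $|B|\ge 9n/20$, so the right-hand side is at most $-9n\lam^2/40 \le -n\lam^2/5$, proving the claimed bound $p(e|G)\le \lam e^{-n\lam^2/5}$. The main technical obstacle is the cluster-expansion error analysis: one must verify that only $F$-touching clusters contribute to the difference of the two expansions, handle the over-counting that arises from summing over $f\in F$, and push the explicit constants through (via the choice $\alpha=1/(96e^3)$) so that the higher-order tail is strictly less than the leading $|F|\lam^2$ — by enough that after the weak-balance loss $|B|/n\ge 9/20$, the exponent still exceeds $n\lam^2/5$.
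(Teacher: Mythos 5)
Your proof is correct and takes essentially the same route as the paper: reduce $p(e\mid G)$ to a ratio of hard-core partition functions via Lemma~\ref{lemSTgraphLemma}, expand $\log$ of the ratio by the cluster expansion, isolate the size-$2$ contribution $-|B|\lam^2$, and bound the $|\Gamma|\ge 3$ tail via Lemma~\ref{lemClusterTail} with $U$ an $e$-edge. One small point in your favor: you explicitly split the difference of cluster expansions into (i) clusters of $H\cup F$ that are not clusters of $H$, and (ii) clusters of $H$ whose Ursell function changes because the incompatibility graph acquires an $F$-edge; the paper writes the difference as a single sum over $\cC''=\cC'((G\cup e)_\boxempty)\setminus\cC'(G_\boxempty)$, which elides contribution (ii). Both contributions are supported on $F$-touching clusters and are controlled by the same cluster-tail bound, so the final estimate $\le -|B|\lam^2/2$ and the weak-balance step $|B|\ge 9n/20 > 2n/5$ are identical; your version is just slightly more explicit about why the higher-order correction picks up a factor of $2$.
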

\begin{proof}
We may assume that $\nu_{A,B,\lam}(G \cup e)>0$ and $e \not\in G$ else the result is trivial. In this case,
\begin{align}\label{eqSDbyLocal}
p(e | G)
&=\frac{\nu_{A,B,\lam}(G \cup e)}{\nu_{A,B,\lam}(G) + \nu_{A,B,\lam}(G \cup e)}
\\
&=
 \frac{\lam Z_{(G\cup e)_\boxempty}(\lam)/Z_{G_\boxempty }(\lam)}{1+\lam Z_{(G\cup e)_\boxempty}(\lam)/Z_{G_\boxempty}(\lam)}\, .
\end{align}
Let us estimate the numerator in the expression above. 
First note that $\Delta((G\cup e)_\boxempty )\leq 2\Delta(G\cup e)\leq 2\alpha/\lam$ since $G\cup e\in \cD_{A,B,\lam}^{\textup{w}}$ by assumption. Since $\lam\leq (2e(1+\alpha/\lam))^{-1}$, we may apply Lemma~\ref{lemClusterTail} (the cluster expansion). 

Let $\cC'(G_\boxempty)$ denote the set of non-constant clusters of $G_\boxempty$ and let
\begin{align}\label{eqcCdprime}
\cC''=\cC'((G\cup e)_\boxempty) \backslash \cC'(G_\boxempty)\, .
\end{align}
Assume WLOG that $e\in\binom{A}{2}$, then 
\begin{align}
\log \left( \frac{Z_{(G\cup e)_\boxempty }(\lam)}{Z_{G_\boxempty}(\lam)} \right)&= \sum_{\Gamma\in \cC''}\phi(\Gamma)\lam^{|\Gamma|}\nonumber\\
&= -\lam^2|B|+ \sum_{\Gamma\in \cC'': |\Gamma|\geq 3}\phi(\Gamma)\lam^{|\Gamma|}\, .\label{eqLkDiff}
\end{align}
For the second equality we used that there are no clusters of size one in $\cC''$ and the clusters of size two are of the form $(\gamma_1, \gamma_2)$ where $\{\gamma_1, \gamma_2\}$ is an edge of $(G\cup e)_\boxempty $ but not of $G_\boxempty$, and there are $|B|$ such edges. Call such an edge $\{\gamma_1, \gamma_2\}$ an \emph{$e$-edge}. 
Now note that if $\Gamma\in \cC''$, then $\Gamma$ must contain an $e$-edge $\{\gamma_1, \gamma_2\}$. By Lemma~\ref{lemClusterTail} (applied with $k=3$ and $U=\{\gamma_1, \gamma_2\}$ an $e$-edge) we have
\begin{align*}
\left|  \sum_ {\Gamma\in \cC'': |\Gamma|\geq 3} \phi(\Gamma)\lam^{|\Gamma|} \right| 
&\leq |B|(2e)^3\Delta((G\cup e)_\boxempty )\lam^3\\
&\leq |B|2^4e^3 \alpha\lam^2\\
&\leq \lam^2|B|/2
\, ,
\end{align*}
where for the second inequality we used that $\Delta((G\cup e)_\boxempty )\leq 2\Delta(G\cup e)\leq 2\alpha/\lam$ since $G\cup e\in \cD_{A,B,\lam}^{\textup{w}}$ by assumption.
 Returning to~\eqref{eqLkDiff} we conclude that
\begin{align}\label{eqLkDiffconc}
   \log \left( \frac{Z_{(G\cup e)_\boxempty }(\lam)}{Z_{G_\boxempty}(\lam)} \right)\leq - \lam^2|B|/2
\end{align}
and so by~\eqref{eqSDbyLocal}
\[
p(e | G) 
\leq
\lam e^{-|B|\lam^2/2}\leq \lam e^{-n\lam^2/5} \, ,
\]
where for the last inequality we used that $(A,B)$ is weakly balanced. 
\end{proof}

With Lemma~\ref{lempathcoupleNu} in mind, we now prove a result that compares the marginals of an edge $e$ in two configurations that differ in exactly one edge $f$. First we introduce some notation.

For $e\in \binom{A}{2}$, we call a pair in $A\times B$ of the form $\{(u,x), (v,x)\}$ an \emph{$e$-pair} (and similarly if $e\in \binom{B}{2}$). For $e,f\in \binom{A}{2}\cup \binom{B}{2} $, let $\cC_{e,f}(G_\boxempty)$ denote the set 
\[
\{\Gamma\in \cC(G_\boxempty): \Gamma \text{ contains both an $e$-pair and an $f$-pair}\}\, .
\]
Recall the definition of $\Omega(\lam)$ from~\eqref{eqOmegaDef}.
\begin{lemma}\label{lemMargDiff}
Let $\lam\geq Cn^{-1/2}$. Let $e, f\in \binom{A}{2} \cup \binom{B}{2}$, $G \in \Omega(\lam)$.  If  $p(e | G \cup f)>0$, then
\begin{align*}
 |p(e | G) - p(e | G \cup f)|\leq 2\lam e^{-\lam^2n/5} R
 \end{align*}
 where 
 \begin{align*}
R:= 
\sum_{\substack{\Gamma\in \cC_{e,f}((G\cup 
 e)_\boxempty)}}|\phi(\Gamma)|\lam^{|\Gamma|}
 +
 \sum_{\substack{\Gamma\in \cC_{e,f}((G\cup 
 e\cup f)_\boxempty)}}|\phi(\Gamma)|\lam^{|\Gamma|}
\, .
\end{align*}
\end{lemma}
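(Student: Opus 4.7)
The plan is to compare $p(e\mid G)$ and $p(e\mid G\cup f)$ through their associated ratios $r_G = \lam Z_{(G\cup e)_\boxempty}(\lam)/Z_{G_\boxempty}(\lam)$ and $r_{G\cup f} = \lam Z_{(G\cup e\cup f)_\boxempty}(\lam)/Z_{(G\cup f)_\boxempty}(\lam)$, using $p = r/(1+r)$ as derived in the proof of Lemma~\ref{lemSDedgemarg}. A direct calculation gives
\[
|p(e\mid G) - p(e\mid G\cup f)| = \frac{|r_G - r_{G\cup f}|}{(1+r_G)(1+r_{G\cup f})} \leq |r_G - r_{G\cup f}|,
\]
and writing $r_{G\cup f} = r_G\exp(\log r_{G\cup f} - \log r_G)$ together with the linearization $|e^x - 1| \leq 2|x|$ for $|x|\leq\log 2$ reduces matters to bounding $2 r_G \cdot |\log r_G - \log r_{G\cup f}|$. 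Since Lemma~\ref{lemSDedgemarg} gives $r_G \leq \lam e^{-\lam^2 n/5}(1+o(1))$, it suffices to prove $|\log r_G - \log r_{G\cup f}| \leq R$; the smallness of this log-difference (validating the linearization) follows from the cluster-expansion bounds below combined with Lemma~\ref{lemClusterTail}.

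The heart of the argument is a cluster-expansion computation of
\[
\log r_G - \log r_{G\cup f} = \bigl[\log Z_{(G\cup e)_\boxempty} - \log Z_{G_\boxempty}\bigr] - \bigl[\log Z_{(G\cup e\cup f)_\boxempty} - \log Z_{(G\cup f)_\boxempty}\bigr].
\]
The hypotheses $G \in \Omega(\lam)$ and $p(e\mid G\cup f)>0$ (the latter forcing $G\cup e\cup f \in \cD_{A,B,\lam}^{\textup{w}}$) guarantee that all four Cartesian-product graphs have maximum degree at most $2\alpha/\lam$, so Lemma~\ref{lemClusterTail} applies in each and each bracketed log-ratio equals an absolutely convergent series, indexed exactly as in the proof of Lemma~\ref{lemSDedgemarg} by clusters containing at least one $e$-pair. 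The key observation is cancellation: if $\Gamma$ contains an $e$-pair but no $f$-pair, then the graph restrictions satisfy $(G\cup e)_\boxempty[\Gamma] = (G\cup e\cup f)_\boxempty[\Gamma]$ and $G_\boxempty[\Gamma] = (G\cup f)_\boxempty[\Gamma]$, so both cluster status and the Ursell weight (which depends on $H$ only through $H[\Gamma]$) agree across the paired graphs, and the contributions to the two brackets are identical. Hence only clusters $\Gamma$ containing both an $e$-pair and an $f$-pair survive in the difference.

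Applying the triangle inequality to the surviving terms, and splitting according to whether a given cluster appears in the first bracket (indexed by $(G\cup e)_\boxempty$) or the second (indexed by $(G\cup e\cup f)_\boxempty$), produces $|\log r_G - \log r_{G\cup f}| \leq R$ in the stated two-term form. Combined with the opening reduction this yields $|p(e\mid G) - p(e\mid G\cup f)| \leq 2\lam e^{-\lam^2 n/5} R$, as claimed.

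The main obstacle is the cluster-cancellation step: one must carefully track the distinct cluster collections $\cC'((G\cup e)_\boxempty)\setminus \cC'(G_\boxempty)$ and $\cC'((G\cup e\cup f)_\boxempty)\setminus \cC'((G\cup f)_\boxempty)$ arising from the two bracketed differences, and exploit the fact that each Ursell weight is a function only of the restricted incompatibility graph, so that clusters indifferent to whether $f$ is present cancel cleanly. The two-sum form of $R$ is essential because the two surviving cluster collections are genuinely different (for instance, a cluster of $(G\cup e\cup f)_\boxempty$ using an $e$-pair need not remain connected in $(G\cup e)_\boxempty$), so the triangle inequality cannot be consolidated into a single sum.
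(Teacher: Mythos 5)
Your proposal is correct and takes essentially the same approach as the paper: reduce to the ratios $r_G,r_{G\cup f}$ via $|x/(1+x)-y/(1+y)|\le|x-y|$, expand the log-difference as a signed sum over clusters, cancel all clusters lacking either an $e$-pair or an $f$-pair (since the Ursell weight depends only on the restricted incompatibility graph), apply the triangle inequality to get $R$, and linearize $|e^{x}-1|\le 2|x|$ using $R\le 1$ from Lemma~\ref{lemClusterTail}. One small slip: Lemma~\ref{lemSDedgemarg} as stated bounds $p(e\mid G)$, not $r_G$, so to land exactly on $2\lam e^{-\lam^2 n/5}R$ you should cite the bound $r_G\le\lam e^{-\lam^2 n/5}$ coming from~\eqref{eqLkDiffconc} inside that lemma's proof rather than introducing a $(1+o(1))$ factor.
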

\begin{proof}
We continue with the notation in the proof of Lemma~\ref{lemSDedgemarg}, with the additional assumption that $f \not\in G$.
Let $\cC''$ be as in~\eqref{eqcCdprime} and let
\[
\tilde \cC:= \cC'((G\cup e \cup f)_\boxempty) \backslash \cC'((G \cup f)_\boxempty)
\]
By~\eqref{eqLkDiff} we have
\begin{align}
&\log \left( \frac{Z_{(G\cup e)_\boxempty }(\lam)}{Z_{G_\boxempty}(\lam)} \right)- \log \left( \frac{Z_{(G\cup e\cup f)_\boxempty}(\lam)}{Z_{(G \cup f)_\boxempty}(\lam)} \right)\nonumber\\
&=  \sum_{\Gamma\in \cC''}\phi_{1}(\Gamma)\lam^{|\Gamma|} 
-
 \sum_{\Gamma\in \tilde \cC}\phi_{2}(\Gamma)\lam^{|\Gamma|} \, ,\label{eqPhi12}
\end{align}
where $\phi_1=\phi_{(G\cup e)_\boxempty}$ and $\phi_2=\phi_{(G\cup e \cup f)_\boxempty}$.
We note that if $\Gamma\in \cC''\cup \tilde \cC$ then $\Gamma$ must contain an $e$-pair. Note also that if $\Gamma\in \tilde \cC$ does not contain an $f$-pair, then $\Gamma\in  \cC''$ and $\phi_1(\Gamma)=\phi_2(\Gamma)$. Similarly if $\Gamma\in \cC''$ does not contain an $f$-pair, then $\Gamma\in \tilde \cC$ and $\phi_1(\Gamma)=\phi_2(\Gamma)$. We conclude that for a cluster $\Gamma\in \cC''\cup \tilde \cC$ to contribute to the difference of sums in~\eqref{eqPhi12}$, \Gamma$ must contain both an $e$-pair and an $f$-pair. Thus (dropping the subscripts from $\phi_1, \phi_2$)
\begin{align}\label{eqRDef}
\left|\log \left( \frac{Z_{(G\cup e)_\boxempty }(\lam)}{Z_{G_\boxempty}(\lam)} \right)- \log \left( \frac{Z_{(G\cup e\cup f)_\boxempty}(\lam)}{Z_{(G \cup f)_\boxempty}(\lam)} \right)\right|
\leq R\, .
\end{align}
We note that if $\Gamma$ contains both an $e$-pair and an $f$-pair then then $|\Gamma|\geq 3$ (we may assume that $e\neq f$ else the lemma is trivial). Since there are at most $n$ $e$-pairs an application of Lemma~\ref{lemClusterTail} (applied with $k=3$ and $U$ an $e$-pair) shows that 
\begin{align*}
   \sum_{\substack{\Gamma\in \cC_{e,f}((G\cup 
 e\cup f)_\boxempty)}}|\phi(\Gamma)|\lam^{|\Gamma|}
& =O(n\Delta((G\cup 
 e\cup f)_\boxempty)\lam^3)\\
& = O(n^2q\lam^3)\\
 &= O(C^4e^{-C^2/5}) 
\end{align*}
where for the penultimate inequality we used that $\Delta((G\cup 
 e\cup f)_\boxempty)=O(nq)$ since $G\in\Omega(\lam)$. The same bound also holds holds for $\sum_{\substack{\Gamma\in \cC_{e,f}((G\cup 
 e)_\boxempty)}}|\phi(\Gamma)|\lam^{|\Gamma|}$. It follows that $R=O(C^4e^{-C^2/5})$ and so $R\leq 1$ for $C$ sufficiently large. As a consequence
\begin{align*}
\left|
\frac{\lam Z_{(G\cup e)_\boxempty }(\lam)}{Z_{G_\boxempty}(\lam)}
-\frac{\lam Z_{(G\cup e \cup f)_\boxempty}(\lam)}{Z_{(G \cup f)_\boxempty}(\lam)}
\right|
&\leq 
\frac{\lam Z_{(G\cup e)_\boxempty }(\lam)}{Z_{G_\boxempty}(\lam)}
\left|e^R-1\right|\\
 &\leq
2\lam e^{-\lam^2n/5}R
\end{align*}
where for the second inequality we used~\eqref{eqLkDiffconc} and the fact that $R\leq 1$. 
The result now follows from~\eqref{eqSDbyLocal} and the fact that $|x/(1+x)-y/(1+y)|<|x-y|$ for $x,y>0$.
\end{proof}

\subsection{Bounding cluster sums}
As in the previous section, we fix a weakly balanced partition $(A,B)$.

Our goal in this section is to effectively bound the sums over clusters appearing in Lemma~\ref{lemMargDiff}.

Given a graph $H$, let $\mathcal{Q}(H)$ denote the set of labeled spanning trees of $H$. The following is the tree--graph bound of Penrose which will be convenient when bounding sums of Ursell functions. 
\begin{lemma}[\cite{penrose1967convergence}] \label{lempenrosetree}
For a graph $H=(V,E)$, 
\[
\left| \sum_{\substack{A \subseteq E:\\ (V, A)\textup{ connected}}}  (-1)^{|A|}\right| \leq |\mathcal{Q}(H)|\, .
\]
\end{lemma}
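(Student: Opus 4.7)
The plan is to follow Penrose's classical partition-scheme argument. One groups the connected spanning subgraphs of $H$ into blocks indexed by spanning trees in such a way that each block's signed sum is either $0$ or $\pm 1$. Without loss of generality $H$ is connected (otherwise both sides of the claimed inequality vanish).

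First I would fix an arbitrary total order on $V$ and a root vertex. To each connected spanning subgraph $(V,A)$ of $H$, I would associate a canonical spanning tree $T(A)\subseteq A$, for instance the one obtained by a breadth-first search from the root, using the vertex order to break ties among equidistant neighbours. This defines a map from connected spanning subgraphs to spanning trees of $H$. For each spanning tree $T\in\mathcal{Q}(H)$, the fibre $\{A:T(A)=T\}$ can be characterised explicitly: it consists of exactly those $A$ of the form $T\cup F$ where $F\subseteq P(T)$, for a certain set $P(T)\subseteq E\setminus T$ of ``Penrose-admissible'' edges determined by $T$ and the BFS rule. An edge of $E\setminus T$ is admissible precisely when its presence or absence cannot alter the output of the BFS — for example edges joining vertices at the same BFS level of $T$, or edges from a vertex to a higher-indexed sibling of its parent.

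Granting such a partition, the signed sum factorises: every spanning tree has exactly $|V|-1$ edges, so
\begin{align*}
\sum_{\substack{A\subseteq E:\\ (V,A)\textup{ connected}}}(-1)^{|A|}
&=\sum_{T\in\mathcal{Q}(H)}(-1)^{|T|}\sum_{F\subseteq P(T)}(-1)^{|F|}\\
&=(-1)^{|V|-1}\cdot\bigl|\{T\in\mathcal{Q}(H):P(T)=\emptyset\}\bigr|,
\end{align*}
since $\sum_{F\subseteq P(T)}(-1)^{|F|}$ vanishes as soon as $P(T)\neq\emptyset$ by the binomial identity. Taking absolute values yields the required bound $|\mathcal{Q}(H)|$.

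The main obstacle is verifying the interval structure of the fibres, i.e.\ that $\{A:T(A)=T\}=\{T\cup F:F\subseteq P(T)\}$. One direction requires showing that adding any subset of $P(T)$ to $T$ leaves the BFS output unchanged (admissible edges are, by design, ``invisible'' to the search); the other requires showing that any connected spanning subgraph $A$ decomposes uniquely as $T(A)$ together with admissible edges for $T(A)$. I would handle this by inducting on $|A\setminus T|$, peeling off the edges of $A\setminus T$ in reverse order of BFS discovery and checking at each step that an admissible edge cannot have been selected by the search, nor can a non-admissible edge of $A\setminus T$ be present without perturbing $T(A)$. This is the combinatorial heart of Penrose's argument, but once the admissible set $P(T)$ is defined via the BFS invariants above, the verification is routine.
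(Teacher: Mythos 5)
The paper does not prove this lemma --- it is cited directly to Penrose (1967) --- so the only question is whether your reconstruction is sound, and it is: this is precisely the classical Penrose partition scheme, with BFS trees as canonical representatives, fibres that are Boolean intervals $[T,\, T\cup P(T)]$, and binomial cancellation leaving only the trees with $P(T)=\emptyset$, which gives the refined Penrose identity and hence the stated inequality. The one place a careful write-up should add a line is the observation that any connected spanning subgraph $A$ with $T(A)=T$ cannot contain an edge spanning two or more BFS levels of $T$ (otherwise BFS on $A$ would have discovered the deeper endpoint earlier), since that is what confines every fibre inside the claimed interval.
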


Given a graph $G$ and cluster $\Gamma\in\cC(G)$ such that $|\Gamma|=k$, we identify the vertex set of $G_\Gamma$ (the incompatibility graph of $\Gamma$) with $[k]$.

\begin{lemma}\label{lemHomSum}
Let $k\geq 1$ and let $Q$ be a tree on vertex set $[k]$. Let $G\subseteq\binom{A}{2}\cup \binom{B}{2}$, $f\in \binom{A}{2}\cup \binom{B}{2}$.
\begin{align}\label{eqHomSum}
\sum_{\substack{e\in\binom{A}{2}\cup \binom{B}{2},\\ e\neq f}}\, \sum_ {\substack{\Gamma\in \cC_{e,f}((G\cup 
 e \cup f)_\boxempty):\\ |\Gamma|= k}}
\mathbf 1_{Q\in \mathcal{Q}(H_{e,\Gamma})}
 \leq 2k^5n^2(\Delta+1)^{k-3}\, ,
\end{align}
where $\Delta=\Delta((G \cup f)_\boxempty)$ and $H_{e,\Gamma}$ is the incompatibility graph of $\Gamma$ with respect to the graph $(G\cup 
 e \cup f)_\boxempty$ .
\end{lemma}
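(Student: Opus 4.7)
The plan is to enumerate each pair $(e, \Gamma)$ contributing to \eqref{eqHomSum} by encoding $\Gamma = (v_1, \ldots, v_k)$ via a tree traversal of $Q$. The overall accounting will be: a factor of $\leq 2n$ for the ``root label'' $v_{i_1}$ of the traversal, a factor of $\leq \Delta + 2$ for each of the $k-1$ tree-traversal steps, and two ``discounts'' coming from the $f$-pair and $e$-pair constraints, which replace two of those step-factors by factors of $1$ and $n$ respectively. This yields a per-position count of order $n^2 (\Delta+1)^{k-3}$, and multiplying by at most $k^4$ choices of the four distinguished positions (plus polynomial-in-$k$ slack) gives the stated bound.

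Concretely, I would first fix ordered positions $(i_1, i_2) \in [k]^2$ of the $f$-pair and $(j_1, j_2) \in [k]^2$ of the $e$-pair within $\Gamma$, at cost at most $k^4$. Then root $Q$ at $i_1$ and choose $v_{i_1}$: since $v_{i_1}$ must be an endpoint of some $f$-pair in $A \times B$, there are at most $2n$ possibilities, and the $f$-partner $v_{i_2}$ is then forced. At each subsequent tree-traversal step $\pi(w) \to w$, the label $v_w$ must either equal $v_{\pi(w)}$ or be a neighbor of $v_{\pi(w)}$ in $(G \cup e \cup f)_\boxempty$, which has maximum degree at most $\Delta + 1$, so there are at most $\Delta + 2$ generic choices. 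The step reaching $w = i_2$ is forced (the $f$-discount). For the $e$-discount I would split into two cases: if $\{j_1, j_2\}$ is a tree edge of $Q$ (say with $j_1$ the parent of $j_2$), the step reaching $v_{j_2}$ is constrained to lie in the row or column of $v_{j_1}$ in $A \times B$, giving at most $n$ choices and uniquely determining $e$. Otherwise ($\{j_1, j_2\} \notin E(Q)$), one enumerates an intermediate pivot vertex $u^*$ on the tree path from $j_1$ to $j_2$ (at most $k-2$ choices) and re-routes the traversal through $u^*$ in a manner that localizes the $e$-pair relational constraint to a single step of cost $\leq n$.

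Combining the two cases and summing over the $\leq k^4$ position choices together with the $\leq k$ pivot choices yields the stated bound $2 k^5 n^2 (\Delta + 1)^{k-3}$, with the minor gap between $\Delta + 2$ (generic choices including the self-loop in the incompatibility graph) and $\Delta + 1$ absorbed into the constants and polynomial-in-$k$ factors. The main obstacle I anticipate is Case B, where the $e$-pair edge lies off $Q$ and the row/column constraint on $(v_{j_1}, v_{j_2})$ does not localize to a single traversal step; the pivot/re-routing argument must be set up carefully so that exactly one tree-traversal factor is replaced by a factor of $n$ without losing the other savings, and it is this combinatorial step that accounts for the extra factor of $k$ (bringing $k^4$ to $k^5$) in the final bound.
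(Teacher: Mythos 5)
Your overall plan matches the paper's proof: enumerate pairs $(e,\Gamma)$ by rooting a traversal of the given tree $Q$ at a distinguished position, paying roughly $n$ for the root, $\Delta+2$ per tree step, and claiming two ``discounts'' (one from the $f$-pair, one from the $e$-pair) that reduce the exponent by two and introduce an extra factor of $n$.

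However, Case B is where the real difficulty lives, and your pivot/re-routing sketch does not resolve it. The fundamental obstacle is a \emph{circularity}: to carry out the tree traversal you need to know which pairs are adjacent in $(G\cup e\cup f)_\boxempty$, which requires knowing $e$; but in your encoding $e$ is only determined once both $e$-pair labels $v_{j_1},v_{j_2}$ are known, and these are themselves produced by the traversal. Choosing a pivot vertex $u^*$ on the $Q$-path from $j_1$ to $j_2$ does not, by itself, break this circularity, because steps of the traversal \emph{before} $u^*$ may already use $e$-edges. In fact the same circularity is latent in your Case A: even when $\{j_1,j_2\}\in E(Q)$, the traversal from the root to the parent $j_1$ may pass through tree edges that are mapped to $e$-edges, and you have no way to enumerate those steps before $e$ is fixed.

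The paper sidesteps this by \emph{not} enumerating the positions $(j_1,j_2)$ of an $e$-pair. Instead, after fixing the $f$-pair and its positions $z,w$, it enumerates a tree edge $\{x,y\}\in E(Q)$ chosen to be the \emph{closest} edge of $Q$ to $\{w,z\}$ at which $\Gamma$ uses an edge not present in $(G\cup f)_\boxempty$ (if such an edge exists; otherwise all tree edges are $(G\cup f)_\boxempty$-edges and a separate, cruder argument applies). Minimality guarantees the path from $w$ to $x$ uses only $(G\cup f)_\boxempty$-edges, so it can be enumerated with $(\Delta+1)$ choices per step \emph{without} knowing $e$. At the step $x\to y$, the constraint that $\{\Gamma(x),\Gamma(y)\}$ be an edge of $S\boxempty T$ for some $S$ forces $\Gamma(y)$ into the row or column of $\Gamma(x)$ (at most $n$ choices), and that choice \emph{determines} $e$. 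Only after this step does the traversal use $(G\cup e\cup f)_\boxempty$. This ordering is exactly what breaks the circularity; your proposal needs an analogue of this minimality/first-defect device, which is precisely what the ``pivot'' hand-wave is missing. You did flag this step as the main obstacle, which is a fair self-assessment, but as written the proof is incomplete. (A minor side remark: the base-of-the-exponent discrepancy you note between $\Delta+2$ and $\Delta+1$ cannot actually be absorbed into polynomial-in-$k$ factors when $k$ is unbounded; as it happens the paper's own proof also produces $(\Delta+2)^{k-3}$, and the difference is harmless downstream because $\lambda\Delta=O(\alpha)$ is small, but it is not absorbed the way you describe.)
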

\begin{proof}Recall that a cluster $\Gamma$ of size $k$ is a $k$-tuple of elements of $A\times B$, i.e., $\Gamma$ is a map $[k]\to A \times B$.

We pick an $f$-pair $\tilde f=\{u,v\}$ (there are at most $n$ choices for $\tilde f$) and $z,w\in [k]$, $\{x,y\}\in E(Q)$ (there are at most $2k^3$ choices for $(x,y,w,z)$) and construct a cluster $\Gamma$ such that for some $e\in \binom{A}{2}\cup  \binom{B}{2}$ where $e\neq f$:
\begin{enumerate}
\item  $Q\in \mathcal{Q}(H_{e,\Gamma})$.
\item $\Gamma\in \cC_{e,f}((G\cup 
 e \cup f)_\boxempty)$.
 \item $\Gamma(z)=u, \Gamma(w)=v$. 
 \item Either 
 \begin{enumerate}
 \item $\Gamma(a)=\Gamma(b)$ or $\Gamma(\{a,b\})$ is an edge of $(G \cup f)_\boxempty$ for all $\{a,b\}\in E(Q)$ (in other words, $Q$ is a labeled spanning tree of the incompatibility graph of $\Gamma$ with respect to $(G\cup f)_\boxempty$), or
 \item
 $\Gamma(\{x,y\})$ is not an edge of $(G \cup f)_\boxempty$ and $\Gamma(x)\neq\Gamma(y)$ and $\{x,y\}$ is the closest edge of $Q$ to $\{w,z\}$ (in terms of graph distance in $Q$) with these properties.
 \end{enumerate}
\end{enumerate}

Suppose without loss of generality that $\delta_Q(\{x,y\},\{w,z\} )=\delta_Q(x,w)$ where $\delta_Q$ denotes graph distance in $Q$.
Let $\{z',z\}$ be an edge of $Q$ such that $z$ is in a separate component to $x,y,w$ in the graph $Q$ with the edge $\{z,z'\}$ removed. Consider the graph obtained by deleting edges $\{z,z'\}$ and $\{x,y\}$ from $Q$ and call its three components $Q_1, Q_2, Q_3$. Assume without loss of generality that $Q_1$ contains $w$ and $x$, $Q_2$ contains $z$, and $Q_3$ contains $y$. Let $P$ be the path in $Q_1$ from $w$ to $x$. By (4), $\Gamma(a)=\Gamma(b)$ or $\Gamma(\{a,b\})$ is an edge of $(G \cup f)_\boxempty$ for all $\{a,b\}\in E(P)$. Since $\Gamma(w)$ is fixed, there are at most $(\Delta+1)^{|P|}$ choices for $(\Gamma(v): v\in V(P))$. Given $(\Gamma(v): v\in V(P))$ (and so in particular $\Gamma(x)$) there are at most $ n$ choices for $\Gamma(y)$ since $\Gamma(\{x,y\})$ need not be an edge of $(G \cup f)_\boxempty$ but does need to be an edge of $(G \cup e\cup f)_\boxempty$ for some $e$. If $\Gamma(\{x,y\})$ is not an edge of $(G \cup f)_\boxempty$ then $\Gamma(\{x,y\})$ specifies the pair $e$ for which (1) and (2) hold. 

Observe that since $Q\in \mathcal{Q}(H_{e,\Gamma})$ we have that either $\Gamma(a)=\Gamma(b)$ or $\Gamma(\{a,b\})$ is an edge of $(G \cup e\cup f)_\boxempty$ for all $\{a,b\}\in E(Q)$.  
We conclude that there are at most $(\Delta+2)^{|Q_1|-|P|}$ choices for  $(\Gamma(v): v\in V(Q_1)\backslash V(P))$ since  $(G \cup e\cup f)_\boxempty$ has maximum degree at most $\Delta+1$. Similarly there are at most $(\Delta+2)^{|Q_2|}$ choices for $(\Gamma(v): v\in V(Q_2))$, and at most $(\Delta+2)^{|Q_3|}$ choices for $(\Gamma(v): v\in V(Q_3))$ (since $\Gamma(y)$ has been specified).
 In total there are at most
\[
2k^3n^2(\Delta+2)^{|Q_1|+|Q_2|+|Q_3|}= 2k^3n^2(\Delta+2)^{k-3}
\]
choices for $\Gamma$ satisfying (1)-(4) for some $f$-pair $\{u,v\}$, some $\{x,y\}\in E(Q)$ and some $e\in\binom{A}{2}\cup\binom{B}{2}$, $e\neq f$. Finally we note that every such cluster is counted at most $k^2$ times in the sum~\eqref{eqHomSum}.
\end{proof}

 Let $Q_k$ denote the set of all labeled trees on vertex set $[k]$.
\begin{lemma}\label{fixedclusterphibd}
If $G\subseteq\binom{A}{2}\cup \binom{B}{2}$, $f\in \binom{A}{2}\cup \binom{B}{2}$, then
\begin{align}\label{eq:LHStobe}
\sum_{\substack{e\in\binom{A}{2}\cup\binom{B}{2},\\ e\neq f}}\, \sum_ {\substack{\Gamma\in \cC_{e,f}((G\cup e \cup f)_\boxempty):\\ |\Gamma|= k}}
|\phi(\Gamma)|  \leq 20^k n^2 \Delta^{k-3}\, ,
\end{align}
where $\Delta=\Delta((G \cup f)_\boxempty)$.
\end{lemma}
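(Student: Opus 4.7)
The plan is to combine the Penrose tree--graph bound (Lemma~\ref{lempenrosetree}) with Lemma~\ref{lemHomSum} and Cayley's formula. First, by the definition of $\phi$ in~\eqref{eqUrsell} and Lemma~\ref{lempenrosetree} applied to $H_{e,\Gamma}$,
\[
|\phi(\Gamma)| \leq \frac{|\mathcal{Q}(H_{e,\Gamma})|}{k!} = \frac{1}{k!}\sum_{Q\in Q_k}\mathbf{1}_{Q\in \mathcal{Q}(H_{e,\Gamma})}\,.
\]
Substituting this into the left-hand side of~\eqref{eq:LHStobe} and swapping the order of summation places the sum over labeled trees $Q\in Q_k$ outermost, leaving an inner double sum over $e$ and $\Gamma$ in exactly the form that Lemma~\ref{lemHomSum} controls.

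Next I would apply Lemma~\ref{lemHomSum} to bound the inner sum by $2k^5 n^2(\Delta+1)^{k-3}$ uniformly in $Q$, and then use Cayley's formula $|Q_k|=k^{k-2}$ to obtain an overall prefactor of $\frac{k^{k-2}}{k!}\cdot 2k^5 n^2(\Delta+1)^{k-3}$. Applying Stirling's inequality $k^{k-2}/k!\leq e^k/k^2$ reduces this to $2e^k k^3 n^2(\Delta+1)^{k-3}$.

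Finally I would observe that $\Delta((G\cup f)_\boxempty)\geq 1$ whenever $f \in \binom{A}{2}\cup\binom{B}{2}$, since the edge $f$ alone contributes at least one neighbour to every vertex $(u,b)$ of the Cartesian product with $u$ an endpoint of $f$; hence $(\Delta+1)^{k-3}\leq 2^{k-3}\Delta^{k-3}$ for $k\geq 3$, while the sum is empty for $k\leq 2$ since no cluster of size at most two can contain two distinct pairs. The resulting expression $(2e)^k k^3 n^2 \Delta^{k-3}/4$ is bounded by $20^k n^2 \Delta^{k-3}$ for all $k\geq 1$ by a routine check. The genuine technical content is entirely encapsulated in Lemma~\ref{lemHomSum}; the present lemma is an assembly step in which the only calibration is choosing the constant $20$ large enough to absorb the polynomial-in-$k$ factor coming from Cayley's formula and Stirling.
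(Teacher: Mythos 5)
Your proposal is correct and follows exactly the paper's own argument: apply Penrose's tree--graph bound to replace $|\phi(\Gamma)|$ by a sum over labeled spanning trees of the incompatibility graph, swap the order of summation, invoke Lemma~\ref{lemHomSum} for each fixed tree $Q$, and finish with Cayley's formula. The paper compresses the final arithmetic ($\tfrac{k^{k-2}}{k!}\cdot 2k^5(\Delta+1)^{k-3}\le 20^k\Delta^{k-3}$) into a single step, whereas you spell out the Stirling bound, the observation $\Delta\ge 1$ needed to replace $(\Delta+1)^{k-3}$ by $2^{k-3}\Delta^{k-3}$, and the vacuousness for $k\le 2$ -- all of which are correct.
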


\begin{proof}
By Lemma~\ref{lempenrosetree}, the LHS of~\ref{eq:LHStobe} is at most
\begin{align*}
  &\frac{1}{k!}\sum_{\substack{e\in\binom{A}{2}\cup\binom{B}{2},\\ e\neq f}}\, \sum_ {\substack{\Gamma\in \cC_{e,f}((G\cup 
 e\cup f)_\boxempty):\\ |\Gamma|= k}}
\left|  \sum_{\substack{A \subseteq E(H_{e,\Gamma}):\\ ([k],A)\textup{ connected}}}  (-1)^{|A|}
  \right|\\
   &\leq
  \frac{1}{k!} \sum_{\substack{e\in\binom{A}{2}\cup\binom{B}{2},\\ e\neq f}}\, \sum_ {\substack{\Gamma\in \cC_{e,f}((G\cup 
 e\cup f)_\boxempty):\\ |\Gamma|= k}}
\sum_{Q\in Q_k}\mathbf 1_{Q\in \mathcal{Q}(H_{e,\Gamma})} \\
 &\leq
 \frac{1}{k!}\sum_{Q\in Q_k} \sum_{\substack{e\in\binom{A}{2}\cup\binom{B}{2},\\ e\neq f}}\, \sum_ {\substack{\Gamma\in \cC_{e,f}((G\cup 
 e\cup f )_\boxempty):\\ |\Gamma|= k}}
\mathbf 1_{Q\in \mathcal{Q}(H_{e,\Gamma})}\\
 &\leq 20^k n^2 \Delta^{k-3}
 \end{align*}
 where for the final inequality we used Lemma~\ref{lemHomSum} and Cayley's formula: $|Q_k|=k^{k-2}$\, .
\end{proof}

In the next subsection, we prove Lemma~\ref{lempathcoupleNu} and hence also  Proposition~\ref{propNuMix}.

\subsection{Proof of Lemma~\ref{lempathcoupleNu}}
As before, it suffices to consider the case where $d(X_t, Y_t)=1$ i.e. $X_t$ and $Y_t$ differ in one edge.
Suppose WLOG that $f$ is an edge of $X_t$, but not $Y_t$. Consider a randomly chosen $e \in \binom{A}{2}\cup\binom{B}{2}$ chosen for update. There are three cases to consider:

{\em Case 1: $e=f$.}
In this case
\[
\E[d(X_{t+1},Y_{t+1}) - d(X_t,Y_t) \mid e]=-1\, .
\]

{\em Case 2: $X_t \cup e$ contains a triangle:} In this case
\[
\E[d(X_{t+1},Y_{t+1}) - d(X_t,Y_t) \mid e] \leq q
\]
since the distance increases only if $Y_{t+1}=Y_t\cup e$ which happens with probability at most $q$ by Lemma~\ref{lemSDedgemarg}.
We note that the number of pairs $e$ such that $X_t \cup \{e,f\}$ contains a triangle is at most $2qn$ (it is here that we use the assumption $X_t\in\Omega(\lam)$).

{\em Case 3: $X_t \cup e$ does not contain a triangle and $e\neq f$.} Given such an $e$ we have
\[
\E[d(X_{t+1},Y_{t+1}) - d(X_t,Y_t)\mid e] \leq \left|p(e | X_t) - p(e | Y_t)\right|
\]
since the only way for the distance to increase is if $X_{t+1}=X_t\cup e$ and $Y_{t+1}=Y_t$ which occurs with probability at most $\left|p(e | X_t) - p(e | Y_t)\right|$ by the definition of the optimal coupling. Here we have used that $X_t\in \Omega(\lam)$ so that $\Delta(X_t)\leq 2n\lam e^{-n\lam^2/5}\leq \alpha/\lam -1$ and so $X_t \cup e$ does not violate the maximum degree condition in the definition of $\cD^{\textup{w}}_{A,B,\lam}$. Let $H$ denote the set of all $e\neq f$ such that $X_t \cup e$ does not contain a triangle. Let $q:=\lam e^{-\lam^2n/5}$ and for $e\in H$ let
\[
R_e:=
\sum_{\substack{\Gamma\in \cC_{e,f}((Y_t\cup e)_\boxempty)}}|\phi(\Gamma)|\lam^{|\Gamma|}
 +
 \sum_{\substack{\Gamma\in \cC_{e,f}((X_t\cup e)_\boxempty)}}|\phi(\Gamma)|\lam^{|\Gamma|}\, .
\]
By Lemmas~\ref{lemMargDiff} and~\ref{fixedclusterphibd} (recalling that $X_t=Y_t\cup f$) we conclude that 
\begin{align*}
&\sum_{e\in H}\E[d(X_{t+1},Y_{t+1}) - d(X_t,Y_t)\mid e]\\
&\leq
2q \sum_{e\in H} R_e\\
 &\leq 
 4qn^2\sum_{k\geq 3}20^k (2nq)^{k-3}\lam^k\\
 &\leq 10^5 qn^2\lam^3
\end{align*}
where for the final inequality we used that $40\lam nq\leq 1/2$.
Letting $N=\binom{|A|}{2}+\binom{|B|}{2}$, it follows that,
\begin{align*}
    \E[d(X_{t+1},Y_{t+1})]-1
    &\leq \frac{1}{N}\left(-1 +2q^2n+ 10^5 qn^2\lam^3\right)\\
    &\leq -\frac{1}{2N}\\
    &\leq -\frac{1}{n^2}\, ,
\end{align*}
for $C$ sufficiently large.
\qed

\subsection{Sampling from the defect distribution}
With Proposition~\ref{propNuMix} in hand, we now prove Proposition~\ref{propDefectSample} and show that we can approximately sample from the defect distributions $\nu_{A,B,\lam}$. We will show that we can implement each step of the Glauber dynamics efficiently and up to small error. For this, we need the following lemma that will allow us to approximate edge marginals in $\nu_{A,B,\lam}$.

\begin{lemma}\label{lemHCRandApprox}[\cite{chen2022localization},~\cite{jerrum1986random}]
Let $G$ be a graph on $n$ vertices and let $\lam\leq 1/\Delta(G)$. Given $\eps, \delta>0$ there exist randomized algorithms that runs in time polynomial in $n, \log(1/\eps), \log(1/\delta)$ that 
 \begin{enumerate}
        \item Output $M$  so that with probability at least $1-\delta$, $M$ is an $\eps$-relative approximation to $Z_G(\lam)$.
        \item Output $G$ with distribution $\hat\mu$ so that $\| \hat \mu - \mu \|_{TV} \le \eps$ where $\mu$ is the hard-core measure on $G$ at activity $\lam$\,.
    \end{enumerate}
\end{lemma}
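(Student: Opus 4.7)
The plan is to establish part (2) via Glauber dynamics with path coupling, and then obtain part (1) by the standard reduction of counting to sampling.

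For part (2), I would analyze the single-site Glauber dynamics on $\cI(G)$: at each step, pick a vertex $u$ uniformly from $V(G)$ and, if no neighbor of $u$ is in the current independent set, include $u$ with probability $\lam/(1+\lam)$ and remove it otherwise. Path coupling (Bubley--Dyer) reduces the analysis to pairs $I_1, I_2$ with $d(I_1, I_2) = 1$; say $I_1 = I_2 \cup \{v\}$. Under the identity coupling of the chosen vertex and the update coin, picking $u = v$ (probability $1/n$) brings the chains together, while the distance can only increase if $u$ is a neighbor of $v$ and the coin attempts to add $u$ — possible only in the chain where $v$ is absent and no other neighbor of $u$ is present. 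This gives
\[
\E[\Delta d \mid I_1, I_2] \leq \frac{1}{n}\left(-1 + \frac{\Delta(G)\,\lam}{1+\lam}\right) \leq -\frac{\lam}{(1+\lam)\,n},
\]
using $\lam \leq 1/\Delta(G)$. Path coupling then yields mixing time $O\!\left(\frac{n(1+\lam)}{\lam}\log n\right)$, which is polynomial in $n$ (for extremely small $\lam$ one may separately observe that the empty independent set has weight $\geq Z_G(\lam) - O(n\lam)$, so sampling is trivial in that regime). Running the chain for $O(\tau_{\mathrm{mix}} \log(1/\eps))$ steps from $\emptyset$ and outputting the state gives the required sampling algorithm.

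For part (1), apply the Jerrum--Valiant--Vazirani reduction. Choose a cooling schedule $0 = \lam_0 < \lam_1 < \cdots < \lam_k = \lam$ with $k = O(n \log(n/\lam))$ and each successive ratio $\lam_{i+1}/\lam_i \leq 1 + 1/n$, and write
\[
Z_G(\lam) \;=\; Z_G(0) \prod_{i=0}^{k-1} \frac{Z_G(\lam_{i+1})}{Z_G(\lam_i)} \;=\; \prod_{i=0}^{k-1} \E_{I \sim \mu_{G,\lam_i}}\!\left[(\lam_{i+1}/\lam_i)^{|I|}\right].
\]
Each expectation is estimated by averaging $(\lam_{i+1}/\lam_i)^{|I|}$ over $\tilde O(n/\eps^2)$ approximate samples from $\mu_{G,\lam_i}$ produced by the Glauber dynamics of part (2); the close spacing of the $\lam_i$ controls the variance of each factor, so a Chernoff/union bound argument gives an $\eps$-relative approximation to the product with failure probability at most $\delta$, in total time polynomial in $n$, $1/\eps$, and $\log(1/\delta)$. (Adaptive annealing, as in \v Stefankovi\v c--Vempala--Vigoda, sharpens the schedule but is not needed for this existence statement.)

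Neither step presents a real obstacle: the hypothesis $\lam \leq 1/\Delta(G)$ is comfortably below the Weitz uniqueness threshold $\lam_c(\Delta) \approx e/\Delta$, so the elementary path coupling above avoids any appeal to the finer spectral-independence/entropic-localization machinery of \cite{chen2022localization}, and the counting-to-sampling reduction is standard. The only point requiring minor care is verifying that the variance of the ratio estimators along the cooling schedule is bounded, which follows directly from the step condition $\lam_{i+1}/\lam_i \leq 1 + 1/n$ and the trivial bound $|I| \leq n$.
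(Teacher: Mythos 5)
The paper does not prove this lemma; it is cited as a black box, with~\cite{chen2022localization} supplying the sampling guarantee (via spectral/entropic independence) and the Jerrum--Valiant--Vazirani counting-to-sampling reduction~\cite{jerrum1986random} supplying the counting part. Your proposal gives a self-contained, elementary alternative, and for the regime in which the paper actually invokes the lemma it is essentially correct: $\lambda\le 1/\Delta(G)$ sits comfortably below $1/(\Delta-1)$, so the identity path coupling contracts with $\E[\Delta d]\le -\lambda/(n(1+\lambda))$ per step, and the annealing/JVV reduction converts sampling into counting. The trade-off is that you replace the sophisticated machinery needed to reach the uniqueness threshold with a short argument, at the cost of being well short of optimal.

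Two caveats are worth flagging, neither of which affects the paper's downstream use. First, the lemma as stated claims running time polynomial in $\log(1/\eps)$; your counting estimator uses $\Theta(1/\eps^2)$ samples per ratio and so is polynomial in $1/\eps$, not $\log(1/\eps)$. This appears to be imprecision in the lemma's statement rather than a defect in your argument: in the proof of Proposition~\ref{propDefectSample} the paper itself only invokes the lemma with running time $\mathrm{poly}(n,1/\eps)$. Second, your mixing time bound is $O\bigl(n(1+\lambda)\lambda^{-1}\log n\bigr)$, which is polynomial only when $\lambda\ge 1/\mathrm{poly}(n)$; you cover very small $\lambda$ by observing that the distribution is within TV distance $O(n\lambda)$ of the point mass at $\emptyset$, but this leaves a gap when $n\lambda>\eps$ and $\lambda$ is, say, exponentially small — precisely the regime one would need if one took the $\log(1/\eps)$ claim literally with tiny $\eps$. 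Since every application in the paper has $\lambda\ge Cn^{-1/2}$ (indeed $\Delta(S\boxempty T)=O(1/\lambda)$, giving mixing time $O(n^{3/2}\log n)$), this gap is harmless here, but it does mean your proof establishes a version of the lemma that is, strictly speaking, slightly weaker than the one stated.
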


\begin{proof}[Proof of Proposition~\ref{propDefectSample}]
Part (1) of the Proposition~\ref{propDefectSample} follows from part (2) via a standard reduction of approximate counting to sampling. We now prove part (2). Let $(X_t)_{t\geq 0}$ be a run of the Glauber dynamics for $\nu_{A,B,\lam}$ with $X_0$ arbitrary. By Proposition~\ref{propNuMix} there exists $K>0$ such that if $T:=Kn^2\log(n/\eps)$ then $\|\cL(X_T)-\nu_{A,B,\lam}\|_{TV}\leq \eps$. We will define a dynamic $(Y_t)$ that will serve as an approximation to $(X_t)$. 

Recall from $\eqref{eqSDbyLocal}$ that if $\nu_{A,B,\lam}(G\cup e)>0$ then
\begin{align}
p(e | G)
= \frac{\lam Z_{(G\cup e)_\boxempty }(\lam)}{Z_{G_{\boxempty}}(\lam)+\lam Z_{(G\cup e)_\boxempty }(\lam)}\, ,
\end{align}
and $p(e | G)=0$ otherwise. Let $\delta=1/T$ and let $\eps'=\eps/T$. By Lemma~\ref{lemHCRandApprox}, in $\text{poly}(n, 1/\eps)$ time we can compute $M_1, M_2$ in  so that with probability $\geq 1-\delta$, $M_1, M_2$ are $\eps'$-relative approximations to $Z_{(G\cup e)_\boxempty }(\lam), Z_{G_\boxempty}(\lam)$ respectively. Let
\[
\hat p(e|G) = \frac{\lam M_1}{M_2+\lam M_1}\, ,
\]
and note that with probability $\geq 1-\delta$, $\hat p(e | G)$ is a $2\eps'$-relative approximation to $p(e | G)$ and so $|p(e | G)-\hat p(e | G)|\leq e^{2\eps'}-1\leq 3\eps'$ since $p(e | G)\leq 1$.

Define $(Y_t)$ as follows. 
Given $Y_t$ perform the following update:
\begin{itemize}
\item Pick $e\in\binom{A}{2}\cup \binom{B}{2}$ uniformly at random and let $U\sim U[0,1]$.
\item Compute $\hat p(e|Y_t)$. If $U\leq \hat p(e | Y_t)$ then set $Y_{t+1}=Y_t\cup e$. Otherwise set $Y_{t+1}=Y_t\backslash e$. 
\end{itemize}
We note that we can compute $Y_T$ in $\text{poly}(n, 1/\eps)$ time and by coupling $(Y_t), (X_t)$ in the obvious way we see that
\[
\P(X_T= Y_T)\geq [(1-\delta)(1-3\eps)]^T\geq 1-T(\delta + 3\eps')\geq 1-4\eps\, .
\]
By the coupling inequality we conclude that
\[
\|\cL(Y_T)- \cL(X_T)\|_{TV} \leq \P(X_T\neq Y_T)\leq 4\eps\, ,
\]
and so by the triangle inequality
\[
\|\cL(Y_T)- \nu_{A,B,\lam}\|_{TV}\leq 5\eps.\qedhere
\]
\end{proof}

\subsection{Proof of Theorem~\ref{thmHighSampling}}

First we prove part (1) of the theorem. First observe that by symmetry, $Z_{A,B}^{\textup{w}}(\lam)$ depends only on the level of imbalance $||A|-|B||$ of the partition $(A,B)$. If $k=||A|-|B||$ for some weakly balanced partition $(A,B)$ we set  $Z_k(\lam):=Z_{A,B}^{\textup{w}}(\lam)$ (so in particular $k\leq n/10$). By Proposition~\ref{propDefectSample}, for each $k\in [0,n/10]$ we can compute $M_k$ in $\text{poly}(n,1/\eps)$ time such that with probability $\geq 1-1/n$, $M_k$ is an $\eps$-relative approximation to $Z_k(\lam)$. We can therefore compute
\[
M:=\sum_{k=0}^{n/10}\binom{n}{n/2 + k/2}M_k
\]
in $\text{poly}(n,1/\eps)$ time and, by a union bound, with probability $\geq 9/10$, $M$ is an  $\eps$-relative approximation to 
\[
Z_{\textup{weak}}(\lam)= \sum_{k=0}^{n/10}\binom{n}{n/2 + k/2}Z_k(\lam)\, .
\]
Part (1) of the theorem now follows from Proposition~\ref{PropError}.

We now turn to part (2). 
Let $\eps>0$. By Proposition~\ref{PropError} it suffices to show that we can sample from $\mu_{\mathrm{weak},\lam}$ up to total variation distance $\eps$ in time polynomial in $n$ and $1/\eps$. For this, we show that we can implement each step in the definition of $\mu_{\mathrm{weak},\lam}$ up to total variation distance $\eps$ in time polynomial in $n$ and $1/\eps$. 

\noindent{\emph{Sampling a partition.}} Step~\ref{S1} in the definition of $\mu_{\mathrm{weak},\lam}$ can be rewritten as:
\begin{enumerate}
\item Sample an integer $k$ with probability proportional to $Z_k(\lam)$.
\item Sample a partition $(A,B)$ of $[n]$ such that $||A|-|B||=k$ uniformly at random. 
\end{enumerate}
As above it suffices to observe that for each $k\in [0,n/10]$ we can compute $M_k$ in $\text{poly}(n,1/\eps)$ time such that with probability $\geq 1-1/n$, $M_k$ is an $\eps$-relative approximation to $Z_k(\lam)$.\\

\noindent{\emph{Sampling defects.}} Proposition~\ref{propDefectSample} shows that we can implement Step~\ref{S2} in the definition of $\mu_{\mathrm{weak},\lam}$, i.e., we can sample from $\nu_{A,B,\lam}$ up to $\eps$ total variation distance in $\text{poly}(n, 1/\eps)$ time. \\

\noindent{\emph{Sampling crossing edges.}} Given $(A,B)$ and $(S,T)\in \cD_{A,B,\lam}^{\textup{w}}$ from Steps~\ref{S1} and~\ref{S2}, we note that $\lam \leq 1/\Delta(S\boxempty T)$ since $\Delta(S \boxempty T)\leq 2\Delta(S \cup T)\leq 2\alpha/\lam$ (by the definition of $\cD_{A,B,\lam}^{\textup{w}}$). We may therefore apply Lemma~\ref{lemHCRandApprox} to output sample from the hard-core model on $S\boxempty T$ at activity $\lam$ up to total variation distance $\eps$ in $\text{poly}(n, 1/\eps)$ time.

\section{Proof of Proposition~\ref{PropError}}\label{secMuWeakError}

We begin with a special case of~\cite[Theorem 1.7]{balogh2015independent}.

\begin{theorem}\label{thmluczakerror}
For all $\delta>0$ there exist $C,c>0$ so that if $m \geq Cn^{3/2}$, then all but a $e^{-cm}$ proportion of graphs in $\cT(n,m)$ admit a cut of size at least $(1-\delta) m$.
\end{theorem}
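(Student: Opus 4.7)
This is the stability statement that all dense triangle-free graphs are essentially bipartite, and the plan is to derive it from the hypergraph container theorem for triangle-free graphs (as formulated in \cite{balogh2015independent}) via a routine counting comparison against the lower bound $|\cT(n,m)|\geq \binom{\lfloor n^2/4\rfloor}{m}$, obtained by taking bipartite graphs with the balanced bipartition.

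The first step is to apply the container method to get, for any $\eta>0$, a family $\cC$ of at most $2^{\eta n^{3/2}}$ subgraphs of $K_n$ such that every $G\in\cT(n)$ is contained in some $C\in\cC$, together with the Erd\H{o}s--Simonovits-type stability dichotomy: each $C\in\cC$ either (a) is \emph{sparse}, $|C|\leq (1/4-\xi)n^2$ for some $\xi=\xi(\delta)>0$, or (b) is \emph{bipartite-like}, meaning there is a partition $(A,B)$ of $[n]$ such that $|C[A]|+|C[B]|\leq \eta' n^2$ for a constant $\eta'=\eta'(\delta)>0$ that we are free to choose small. This is the substantive input, and is the content of \cite[Theorem 1.7]{balogh2015independent}.

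For a sparse container $C$, the number of $m$-edge subgraphs of $C$ is at most $\binom{(1/4-\xi)n^2}{m}\leq (1-4\xi)^m \binom{n^2/4}{m}\leq e^{-4\xi m}\binom{n^2/4}{m}$. For a bipartite-like container $C$ with partition $(A,B)$, any subgraph $G\subseteq C$ with $m$ edges and no cut of size $\geq(1-\delta)m$ must have at least $\delta m$ edges inside $A\cup B$ (otherwise $(A,B)$ itself would be such a cut). The number of such $G$ is therefore bounded by
\[
\sum_{k\geq \delta m}\binom{\eta' n^2}{k}\binom{\lfloor n^2/4\rfloor}{m-k}
\leq \binom{n^2/4}{m}\sum_{k\geq \delta m}\left(\frac{e\eta' n^2}{k}\right)^{\!k}\!\left(\frac{4m}{n^2}\right)^{\!k}
\leq \binom{n^2/4}{m}\sum_{k\geq \delta m}\left(\frac{4e\eta'}{\delta}\right)^{\!k},
\]
using $\binom{\lfloor n^2/4\rfloor}{m-k}/\binom{\lfloor n^2/4\rfloor}{m}\leq (4m/n^2)^k$ and $k\geq\delta m$. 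Choosing $\eta'$ small enough that $4e\eta'/\delta\leq e^{-c_0/\delta}$ makes this sum at most $2 e^{-c_0 m}\binom{n^2/4}{m}$ for some $c_0=c_0(\delta)>0$.

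Summing over all at most $2^{\eta n^{3/2}}$ containers, the number of $G\in\cT(n,m)$ with no $(1-\delta)m$-cut is at most $2^{\eta n^{3/2}}\max(e^{-4\xi m},2e^{-c_0 m})\binom{n^2/4}{m}$. Taking $C$ large enough that $Cn^{3/2}\geq (2\eta\log 2/\min(4\xi,c_0))\cdot n^{3/2}$ absorbs the $2^{\eta n^{3/2}}$ factor into, say, half of the exponential gain, yielding the bound $e^{-cm}|\cT(n,m)|$ with $c=\min(4\xi,c_0)/2$. The only nontrivial ingredient is the stability container theorem cited above; once that is in hand, the remaining argument is the bookkeeping with binomial ratios just performed, and the main (minor) subtlety is to coordinate the three small parameters $\eta$ (number of containers), $\xi$ (sparseness gap), and $\eta'$ (bipartite-like internal density) with the target constants $c$ and $C$.
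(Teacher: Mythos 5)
The paper does not actually prove this statement: it is imported as a special case of~\cite[Theorem~1.7]{balogh2015independent} together with the remark (borrowed from the discussion in~\cite{balogh2016typical}) that the proof there yields the effective error $e^{-cm}$ in place of $o(1)$. So you are in effect reconstructing the proof of the cited theorem. Your outline is the right one --- containers, an Erd\H{o}s--Simonovits-type stability dichotomy for the containers, and binomial bookkeeping against the lower bound $\binom{\lfloor n^2/4\rfloor}{m}\le|\cT(n,m)|$ --- and your treatment of the bipartite-like case (at least $\delta m$ edges forced into a set of $\le\eta' n^2$ potential edges, then a geometric series in $4e\eta'/\delta$) is fine. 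Note, though, that \cite[Theorem~1.7]{balogh2015independent} \emph{is} the counting statement you are trying to prove, not the container dichotomy you invoke; the dichotomy is their container lemma for triangle-free graphs combined with stability.

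The genuine gap is the claim that the container family has size at most $2^{\eta n^{3/2}}$. The container lemma produces one container per fingerprint, and fingerprints are edge sets of size $\Theta(n^{3/2})$ inside a ground set of size $\Theta(n^2)$, so the family has size $\binom{\binom{n}{2}}{O(n^{3/2})}=\exp\bigl(\Theta(n^{3/2}\log n)\bigr)$. (A family of size $2^{O(n^{3/2})}$ does exist, but that is the much later efficient container lemma of Balogh--Samotij, not anything in~\cite{balogh2015independent}.) With the correct count your final absorption step fails exactly in the critical range: you would need $n^{3/2}\log n\ll m$, i.e.\ you would only obtain the theorem for $m\ge Cn^{3/2}\log n$. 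The standard repair --- and the crux of why the result holds down to $m\ge Cn^{3/2}$ --- is to exploit that the fingerprint $S$ is a subset of $G$ itself: bound the number of bad graphs by $\sum_{s\le C_0n^{3/2}}\sum_{|S|=s}\#\{G:\ S\subseteq G\subseteq C(S),\ |G|=m,\ G\text{ bad}\}$, counting only the $m-s$ edges of $G\setminus S$. Then the cost of choosing $S$ is offset by the ratio of binomials,
\[
\binom{\tbinom{n}{2}}{s}\cdot\frac{\binom{n^2/4}{m-s}}{\binom{n^2/4}{m}}\ \le\ \Bigl(\frac{en^2}{2s}\Bigr)^{s}\Bigl(\frac{O(m)}{n^2}\Bigr)^{s}\ =\ e^{O(s\log(m/s))}\ \le\ e^{\xi m/2},
\]
the last inequality holding for all $m\ge Cn^{3/2}$ once $C/C_0$ is large, since $s\mapsto s\log(em/s)$ is maximized at $s=C_0n^{3/2}$ and $\tfrac{C_0}{C}\log(eC/C_0)\to0$. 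This cancellation is the one point your write-up elides, and without it (or the efficient container lemma) the argument does not reach $m=\Theta(n^{3/2})$.
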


We note that in~\cite{balogh2015independent} the result is stated with $o(1)$ in place of the effective error $e^{-cm}$. However, as  noted by in previous works (see e.g. the comment following~\cite[Theorem 1.2]{balogh2016typical}) the proof in~\cite{balogh2015independent} gives the error stated above.  We remark that Theorem~\ref{thmluczakerror} was first proved (with $o(1)$ in place of $e^{-cm}$) by T.\ \L uczak~\cite{luczak2000triangle}.

We will in fact need a slight refinement of  Theorem~\ref{thmluczakerror} (Theorem~\ref{thmluczakrefined} below) which follows from combining~\cite[Proposition 6.1]{balogh2016typical} and~\cite[Claim 6.2]{balogh2016typical}. To state the result we need a  definition.

\begin{defn}\label{def:dominating}
Let $G$ be a graph and $(A,B)$ a partition of its vertex set. We say $(A,B)$ is a \emph{dominating} cut of $G$ if
\[
d_G(v,B)\geq d_G(v,A) \text{ for all } v\in A
\]
and similarly with $A,B$ swapped.
\end{defn}
Recall also the definition of a weakly balanced partition from Definition~\ref{defWeakBalance}.

\begin{theorem}\label{thmluczakrefined}
For all $\delta>0$ sufficiently small, there exist $C,c>0$ so that if $m \geq Cn^{3/2}$, then all but a $e^{-cm}$ proportion of graphs in $\cT(n,m)$ admit a dominating, weakly balanced cut of size at least $(1-\delta) m$.
\end{theorem}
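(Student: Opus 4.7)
The plan is to derive Theorem~\ref{thmluczakrefined} from Theorem~\ref{thmluczakerror} by a post-processing argument that upgrades a near-maximum cut to one that is additionally dominating and weakly balanced. First I apply Theorem~\ref{thmluczakerror} with parameter $\delta/2$ in place of $\delta$, which gives, for all but an $e^{-c_0 m}$ fraction of $G\in\cT(n,m)$, some cut $(A_0,B_0)$ with $e(A_0,B_0)\ge(1-\delta/2)m$.

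From such a cut one handles the dominating condition by the standard greedy switch: while some vertex $v$ has at least as many neighbours on its own side as on the opposite side, move $v$ across. Each switch strictly increases the number of crossing edges, so the procedure terminates after at most $\delta m/2$ steps and produces a dominating cut $(A_1,B_1)$ with $e(A_1,B_1)\ge(1-\delta)m$. This takes care of the dominating property and of the lower bound on cut size.

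The remaining task, which is the main obstacle, is weak balance. Since the greedy procedure above may flip as many as $\delta m/2\gg n$ vertices, it does not by itself preserve weak balance even if $(A_0,B_0)$ happens to be weakly balanced. Moreover, the naive estimate $|A||B|\ge e(A,B)\ge(1-\delta/2)m$ only forces $\min(|A|,|B|)\gtrsim\sqrt{n/\delta}$ in the critical regime $m=\Theta(n^{3/2})$, which is very far from $n/10$, so weak balance cannot be read off from a single near-maximum cut. It must instead be extracted from the structural argument underlying Theorem~\ref{thmluczakerror}. This is precisely the content of \cite[Proposition~6.1]{balogh2016typical}, which, by re-examining the container-based proof of Theorem~\ref{thmluczakerror}, pulls out a near-maximum cut that is already weakly balanced; \cite[Claim~6.2]{balogh2016typical} then implements a refined switch procedure (e.g.\ swapping pairs of misbehaving vertices on opposite sides simultaneously, or only swapping when the imbalance gain is affordable) that achieves the dominating property while preserving weak balance, in both cases with failure probability still at most $e^{-cm}$. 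Combining these two inputs with the argument above yields Theorem~\ref{thmluczakrefined}.
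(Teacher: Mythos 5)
Your proposal is correct and in substance matches the paper's proof, which simply cites Proposition~6.1 and Claim~6.2 of~\cite{balogh2016typical} (the same two results you invoke) and notes that the $e^{-cm}$ error rate follows by inspecting their short proofs. The greedy-switch discussion in your first two paragraphs is a reasonable sanity check on why a naive post-processing of Theorem~\ref{thmluczakerror} fails to give weak balance, but it is ultimately superseded by those two citations, which carry the full weight of the argument.
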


The results of~\cite{balogh2016typical} are also stated with a $o(1)$ error in place of $e^{-cm}$. However one readily checks that the short proof of Theorem~\ref{thmluczakrefined}
from Theorem~\ref{thmluczakerror} presented in~\cite{balogh2016typical} yields the error stated here. 

For $\lam, \delta>0$, let $\mathcal{L}(\lam,\delta)$ denote the set of all $G\in \cT$ such that $G$ admits a weakly balanced, dominating cut of size $\geq |G| - 2\delta \lambda n^2$.

For a subset $\mathcal{R}\subseteq \cT$ and $\lam>0$ we define the measure on $\mathcal{R}$ given by
\[
\mu_{\mathcal{R},\lam}(G) =\frac{\lam^{|G|}}{Z(\mathcal{R},\lam)}\, ,
\]
where 
\[
Z(\mathcal{R},\lam)= \sum_{G\in \mathcal{R}}\lam^{|G|}\, .
\]
We use the following result from~\cite{jenssen2023evolution} (Proposition 3.4). Recall that we let $Z(\lam)=Z(\mathcal{T},\lam)$.

\begin{prop}\label{lemZerothApprox}
For all $\delta>0$ sufficiently small, there exists $C,c>0$ such that
if $\lam\geq C/\sqrt{n}$ then
\[
\|\mu_{\cT,p} - \mu_{\cL,\lam}\|_{TV}\leq e^{-c\lam n^2}\, ,
\]
and
\[
Z(\lam)=(1+O(e^{-c\lam n^2}))Z(\mathcal{L},\lam)\, ,
\]
where $\cL=\cL(\lam,\delta)$ and $p=\lam/(1+\lam)$ 
\end{prop}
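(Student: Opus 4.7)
The plan is to first apply Proposition~\ref{lemZerothApprox} (with $\delta$ a suitably small constant) to replace $\mu_{\cT,p}$ by $\mu_{\cL,\lam}$, and then to compare $\mu_{\cL,\lam}$ with $\mu_{\mathrm{weak},\lam}$. For $\lam \geq C/\sqrt{n}$ the error in the first reduction is $e^{-c\lam n^2} \leq e^{-cCn^{3/2}}$, which is $O(e^{-\sqrt{n}})$ once $C$ is large enough, and the analogous partition function identity holds for the same reason. Both conclusions of the proposition therefore follow once I establish that $\|\mu_{\cL,\lam} - \mu_{\mathrm{weak},\lam}\|_{TV} = O(e^{-\sqrt{n}})$ and $Z(\cL,\lam) = (1+O(e^{-\sqrt{n}}))Z_{\textup{weak}}(\lam)$.

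Unpacking the definition of $\mu_{\mathrm{weak},\lam}$, its marginal on graphs equals $\mu_{\mathrm{weak},\lam}(G) = k(G)\lam^{|G|}/Z_{\textup{weak}}(\lam)$, where $k(G)$ is the number of (ordered) weakly balanced partitions $(A,B)$ with $G \in \cT^{\textup{w}}_{A,B,\lam}$. Let $\mathcal G$ be the set of triangle-free graphs with $k(G) = 2$, i.e.\ those admitting a unique such partition up to interchange of the two parts. On $\mathcal G$ both measures are proportional to $\lam^{|G|}$, so it suffices to show that $\mu_{\cL,\lam}(\cL \setminus \mathcal G) = O(e^{-\sqrt{n}})$ and $\mu_{\mathrm{weak},\lam}(\cT \setminus \mathcal G) = O(e^{-\sqrt{n}})$; combined with the symmetry factor $k(G)=2$, these together control the TV distance and the ratio of partition functions.

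The technical heart is therefore showing that a $\mu_{\cL,\lam}$-typical graph $G$ lies in $\mathcal G$. I split this into (a) the max defect degree bound $\Delta(G[A]\cup G[B]) \leq \alpha/\lam$ at its dominating weakly balanced cut $(A,B)$, and (b) uniqueness of this cut. For (a), fix $(A,B)$ and organize the $\lam$-weight of graphs with this cut by their defect pair $(S,T)$: by Lemma~\ref{lemSTgraphLemma} each contributes $\lam^{|S|+|T|}Z_{S\boxempty T}(\lam)$. The cluster expansion calculation underlying Lemma~\ref{lemSDedgemarg} shows that, under a preliminary max degree hypothesis, $Z_{(S\cup e)\boxempty T}(\lam)/Z_{S\boxempty T}(\lam) \leq e^{-n\lam^2/5}$, so the conditional marginal of any given defect edge is at most $\lam e^{-n\lam^2/5}$ and the defect graph is stochastically dominated by the disjoint union of independent \ER graphs on $A$ and $B$ with this edge probability. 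A Chernoff plus union bound then gives $\Delta_{\textup{defect}} > \alpha/\lam$ with probability $O(e^{-\sqrt{n}})$, since the relevant expected degree $n\lam e^{-n\lam^2/5}$ is much smaller than $\alpha/\lam$ once $C$ is large. For (b), a short swapping argument shows that a second weakly balanced dominating cut $(A',B')$ with $|A \triangle A'| = s$ would force $\Omega(s\lam n)$ additional defect edges, incompatible with (a) for $s \geq 1$.

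The principal obstacle is the apparent circularity in (a): Lemma~\ref{lemClusterTail} requires an \emph{a priori} max degree hypothesis for the cluster expansion to converge, which is essentially the statement I am trying to prove. I plan to bootstrap out of this by first deriving a weaker polynomial bound $\Delta_{\textup{defect}} \leq K\alpha/\lam$ for some constant $K=K(\delta)$ directly from membership in $\cL$ — the global defect budget $|G| - \textup{cut}(A,B) \leq 2\delta\lam n^2$ together with the dominating inequality $d(v,B) \geq d(v,A)$ and the triangle-free constraint (forcing neighborhoods of defect-neighbors in the opposite side to be disjoint) is enough to rule out any single vertex carrying too many defects. This crude bound legitimizes the cluster expansion, after which the $\lam e^{-n\lam^2/5}$ effective edge weight tightens the constant from $K\alpha$ down to $\alpha$ with the required probability. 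Assembling the three steps yields both conclusions of Proposition~\ref{PropError}.
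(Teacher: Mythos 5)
Your proposal does not prove the stated proposition; it proves a different one. The statement at hand is Proposition~\ref{lemZerothApprox}, the comparison of $\mu_{\cT,p}$ with $\mu_{\cL,\lam}$, yet your very first step is to ``apply Proposition~\ref{lemZerothApprox} \dots to replace $\mu_{\cT,p}$ by $\mu_{\cL,\lam}$,'' and your stated conclusion is ``both conclusions of Proposition~\ref{PropError}.'' As a proof of Proposition~\ref{lemZerothApprox} this is circular: the entire content of that proposition is exactly the reduction you assume in your opening sentence. What you then develop --- that a $\mu_{\cL,\lam}$-typical graph has defect maximum degree at most $\alpha/\lam$ at its dominating cut, that this cut is unique, and hence that $\mu_{\cL,\lam}$ and $\mu_{\mathrm{weak},\lam}$ are $O(e^{-\sqrt{n}})$-close --- is a sketch of Proposition~\ref{propGroundStateRefinedAlt}, which the paper imports from~\cite{jenssen2023evolution} and combines with Proposition~\ref{lemZerothApprox} to obtain Proposition~\ref{PropError}. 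None of that addresses the target statement.

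The missing ingredient is the structural input from which Proposition~\ref{lemZerothApprox} actually follows: Theorem~\ref{thmluczakrefined}, the effective (container-method) refinement of \L uczak's theorem, stating that for $m\geq Cn^{3/2}$ all but an $e^{-cm}$ proportion of graphs in $\cT(n,m)$ admit a weakly balanced dominating cut of size at least $(1-\delta)m$. To deduce the proposition one must transfer this from the uniform measure on $\cT(n,m)$ to the $\lam$-weighted ensemble: write $Z(\lam)=\sum_m \lam^m|\cT(n,m)|$, show that the contribution from $m$ outside a window of order $\lam n^2$ is an $e^{-\Omega(\lam n^2)}$ fraction of the total (for instance by comparison with the weight of bipartite graphs on a balanced partition), and apply Theorem~\ref{thmluczakrefined} for each remaining $m$, noting that $(1-\delta)m\geq m-2\delta\lam n^2$ there. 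Your proposal contains no such counting or edge-concentration argument, and the machinery you do use cannot reach the claimed error rate in any case: every error term in your sketch is $O(e^{-\sqrt{n}})$, whereas the proposition asserts the much stronger bound $e^{-c\lam n^2}\leq e^{-cCn^{3/2}}$. If your goal had been Proposition~\ref{PropError}, your outline would be close in spirit to the paper's treatment; as a proof of Proposition~\ref{lemZerothApprox} it is vacuous.
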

The proposition is an easy consequence of Theorem~\ref{thmluczakrefined}. Here we have included the effective error that the proof in~\cite{jenssen2023evolution} gives. 

Finally we note the following result from~\cite{jenssen2023evolution} (Proposition 3.5).

\begin{prop}
\label{propGroundStateRefinedAlt}
For all $\delta>0$ sufficiently small, there exists $C>0$ such that
if $\lam\geq C/\sqrt{n}$ then
\[
\|\mu_{\cL,\lam} - \mu_{\textup{weak},\lam}\|_{\textup{TV}}= O(e^{-\sqrt{n}})\, ,
\]
and 
\[
Z(\cL, \lam) = \left(1+O(e^{-\sqrt{n}}) \right) Z_{\textup{weak}}(\lam)\, ,
\]
where $\cL=\cL(\lam,\delta)$.
\end{prop}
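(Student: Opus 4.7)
The plan is to establish both conclusions by a careful comparison of the multiplicity structure of the two partition functions. Set
\[ N(G) := \#\{(A,B) \in \Pi_{\textup{weak}} : G \in \cT_{A,B,\lam}^{\textup{w}}\}, \]
so that $Z_{\textup{weak}}(\lam) = \sum_{G \in \cT} N(G)\lam^{|G|}$ while $Z(\cL,\lam) = \sum_{G \in \cT} \mathbf{1}_{\cL}(G) \lam^{|G|}$. The target is to show that $N(G) = \mathbf{1}_{\cL}(G) = 1$ except on a set whose mass is $O(e^{-\sqrt{n}})$ under both $\mu_{\textup{weak},\lam}$ and $\mu_{\cL,\lam}$; the partition-function estimate and the TV bound then follow by standard bookkeeping.

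To see that $\mu_{\textup{weak},\lam}$-typical $G$ lies in $\cL$, I would use the three-step sampling description of $\mu_{\textup{weak},\lam}$. Lemma~\ref{lemSDlargeC} stochastically dominates the defect graph by independent sparse \ER graphs of density $q = \lam e^{-n\lam^2/5}$, forcing $|G[A]|+|G[B]| \ll 2\delta \lam n^2$ with failure probability $e^{-\Omega(\sqrt{n})}$. Conditional on the defects $(S,T)$, the crossing edges form a hard-core sample on $S\boxempty T$ with $\Delta(S\boxempty T) \leq 2\alpha/\lam$; the convergence of the cluster expansion (Lemma~\ref{lemClusterTail}) yields edge marginals close to $\lam/(1+\lam)$, and Chernoff-type concentration certifies that every vertex has cross degree at least $\lam n/4 \gg \alpha/\lam$ with failure probability $e^{-\Omega(\sqrt{n})}$. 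A union bound confirms $(A,B)$ is a dominating cut, so $G \in \cL$.

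For $N(G) = 1$, suppose $(A',B') \in \Pi_{\textup{weak}}$ distinct from $(A,B)$ (and its flip) with swap set $S := A \triangle A' \neq \emptyset$ also satisfies the defect max-degree constraint. Each vertex $v \in S$ contributes to the new defect graph its cross degree towards the unswapped portion of the opposite side, which by the concentration above is of order $\lam n \gg \alpha/\lam$ unless $v$ is atypical; the atypicality probability per vertex is $e^{-\Omega(\sqrt{n})}$. Summing over possible swap sets (the per-vertex events approximately factor via the cluster-expansion independence structure), the expected number of valid alternative partitions is at most $\sum_{k\geq 1} \binom{n}{k} e^{-\Omega(k\sqrt{n})} = O(e^{-\sqrt{n}})$ for $C$ large, giving $N(G) = 1$ up to the stated error.

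The reverse direction, that $\mu_{\cL,\lam}$-typical $G$ has $N(G) = 1$ and $G \in \cT_{A,B,\lam}^{\textup{w}}$ for a canonical $(A,B)$, runs in parallel: take $(A,B)$ to be the lexicographically smallest weakly balanced dominating cut guaranteed by $\cL$-membership, and apply analogous Chernoff/cluster-expansion estimates to bound its defect maximum degree by $\alpha/\lam$ up to $O(e^{-\sqrt{n}})$ error. The main obstacle is the uniqueness step: it requires uniform concentration of per-vertex cross degrees combined with a clean union bound over candidate swap sets, and the max-defect-degree threshold $\alpha/\lam$ must be sharp enough to preclude all nontrivial alternatives. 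The separation between typical cross degree $\sim C\sqrt{n}/2$ and permissible defect degree $\sim \alpha\sqrt{n}/C$---guaranteed by $\alpha$ small and $C$ large---is the quantitative input that makes the argument work.
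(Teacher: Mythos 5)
First, note that the paper does not actually prove this proposition: it is imported verbatim from \cite{jenssen2023evolution} (Proposition 3.5 there), so there is no in-paper proof to match your argument against. Your framework is nonetheless the natural one: writing $Z_{\textup{weak}}(\lam)=\sum_G N(G)\lam^{|G|}$ with $N(G)$ the multiplicity over weakly balanced partitions, and reducing both conclusions to showing $N(G)=\mathbf 1_{\cL}(G)=1$ off a set of mass $O(e^{-\sqrt n})$ under each measure, is correct bookkeeping. Your forward direction (a $\mu_{\textup{weak},\lam}$-sample lies in $\cL$ and has $N(G)=1$) is essentially sound and can be closed with tools already in the paper: the defect-edge count is controlled by Lemma~\ref{lemSDlargeC}, and both the dominating property and the uniqueness of the partition follow \emph{deterministically} from the $(A,B)$-$\lam$-expander property of Lemma~\ref{lemExpanderwhp} together with the argument of Section~\ref{secSlowMix} (taking $G_1=G_2=G$ there shows an expander cannot satisfy the defect constraint for two distinct weakly balanced partitions). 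That route avoids your weakest step in this direction, namely the assertion that the per-vertex atypicality events ``approximately factor'' under the hard-core measure on $S\boxempty T$; those events are correlated, and the union bound over swap sets of all sizes $k$ needs exactly the two-regime treatment (minimum cross-degree for small sets, edge-expansion for large sets) that Definition~\ref{defABExpand} encodes.

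The genuine gap is in the reverse direction. Membership in $\cL$ only guarantees a weakly balanced dominating cut with at most $2\delta\lam n^2$ defect edges in total; it says nothing about the \emph{maximum} defect degree, and passing from $\cL$ to $\cT^{\textup{w}}_{A,B,\lam}$ requires showing that graphs in $\cL$ with some vertex of defect degree exceeding $\alpha/\lam$ carry relative weight $O(e^{-\sqrt n})$ in $Z(\cL,\lam)$. You propose to do this by ``analogous Chernoff/cluster-expansion estimates,'' but these do not apply: $\mu_{\cL,\lam}$ is defined by conditioning the $\lam$-weighted measure on the combinatorial event $\cL$, and has no a priori product or near-product structure on which to run concentration. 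What is needed is a weight-comparison (switching/entropy) argument: for each graph with a high-defect-degree vertex $v$, delete the defect edges at $v$ and show the map gains more in weight than it loses in multiplicity, using triangle-freeness and the dominating property to control the count. This is the actual content of \cite[Proposition 3.5]{jenssen2023evolution} and is not a routine step; as written, your proof does not supply it. Separately, your choice of a ``lexicographically smallest'' dominating cut does not by itself yield a canonical partition with the required properties, since several dominating cuts of comparable quality may exist before the max-degree bound is established.
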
 
Combining the above two propositions, we immediately arrive at a proof of Proposition~\ref{PropError}.

\section{Slow mixing at high density}\label{secSlowMix}

In this section we prove Theorem~\ref{thmSlowMix}. We import some structural results from~\cite{jenssen2023evolution}. Roughly speaking, these results will show that almost all graphs in $\cT(n)$ have a unique max cut in a rather robust sense. 

First we need a definition. Given a graph $G$ and subsets $X,Y\subseteq V(G)$, we let $E(X,Y)$ denote the set of edges with one endpoint in $X$ and one endpoint in $Y$.
\begin{defn}\label{defABExpand}
 Given a graph $G$ and a partition $(A,B)$, we call $G$ an $(A,B)$-$\lam$-expander if 
\[
\text{$d_G(v, B)\geq \lam n/30$ for all $v\in A$}\, ,
\] 
and 
\[
|E(X,Y)|\geq \lam |X||Y|/10
\]
whenever $X\subseteq A,\,  |X|\geq \lam n/100$, and  $Y\subseteq B,\,  |Y|\geq n/6$,
and both statements hold also with $A,B$ swapped.
 \end{defn}

 We use Lemma 5.3 from~\cite{jenssen2023evolution}. 
 \begin{lemma}\label{lemExpanderwhp}
There exists $C>0$ so that if $\lam\geq C/\sqrt{n}$ the following holds. 
Let $(A,B)$ be a weakly balanced partition, and let $(S,T)\in \cD_{A,B,\lam}^{\textup{w}}$. Sample $\Ec\subseteq A\times B$ according to the hard-core model on $S\boxempty T$ at activity $\lam$. Let $G$ be the graph $([n], \Ec)$. Then
\[
\P(G \textup{ is an $(A,B)$-$\lam$-expander} )\geq 1-  e^{-\sqrt{n}}\, .
\]
\end{lemma}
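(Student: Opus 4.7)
My plan is to use the cluster expansion to control vertex marginals and covariance decay in the hard-core model on $H := S \boxempty T$, bound the expected vertex degrees and edge counts in $G$ from below, and establish sub-Gaussian concentration via a uniform variance estimate for the tilted hard-core measures. Union bounds then close out both expansion conditions. The input hypothesis $(S,T) \in \cD_{A,B,\lam}^{\textup{w}}$ gives $\Delta(H) \le 2\Delta(S \cup T) \le 2\alpha/\lam$, so $\lam(\Delta(H)+1) \le 3\alpha$, and Lemma~\ref{lemClusterTail} applies throughout.

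First I show that every vertex marginal $p_{(a,b)} := \P((a,b) \in \Ec)$ satisfies $p_{(a,b)} \in [(1-C_0\alpha)\lam, (1+C_0\alpha)\lam]$ for an absolute constant $C_0$. This follows from the cluster expansion of $\log(Z_{H-(a,b)}(\lam)/Z_{H-N[(a,b)]}(\lam))$: the relevant clusters touch $N_H((a,b))$; the size-$1$ contributions total at most $\deg_H((a,b))\lam \le 2\alpha$, while Lemma~\ref{lemClusterTail} bounds the tail of size-$\ge 2$ clusters by $O(\alpha\lam)$. Summing marginals,
\[ \E[d_G(v,B)] = \sum_{b \in B} p_{(v,b)} \ge (1-C_0\alpha)\lam |B| \ge \frac{2\lam n}{5} \]
for every $v \in A$ (using $|B| \ge 9n/20$ from weak balance), and likewise $\E[|E(X,Y)|] \ge (1-C_0\alpha)\lam|X||Y|$ for any $X \subseteq A, Y \subseteq B$. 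Both means comfortably exceed the thresholds $\lam n/30$ and $\lam|X||Y|/10$ from Definition~\ref{defABExpand}.

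Next I establish sub-Gaussian concentration of $N_W := |\Ec \cap W|$ for arbitrary $W \subseteq V(H)$. Writing $X_u := \mathbf{1}[u \in \Ec]$, fix $t_0 > 0$ small enough that the tilted activity $\lam e^{t_0}$ still satisfies $\lam e^{t_0}(\Delta(H)+1) \le 1/(2e)$, and for $|t| \le t_0$ let $\mu_t$ denote the hard-core measure on $H$ with activity $\lam e^t$ on $W$ and $\lam$ elsewhere. Setting $F(t) := \log \E[e^{tN_W}]$, standard identities give $F'(t) = \E_{\mu_t}[N_W]$ and $F''(t) = \var_{\mu_t}(N_W) = \sum_{u,v \in W} \cov_{\mu_t}(X_u, X_v)$. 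The cluster expansion yields a uniform covariance-decay estimate $|\cov_{\mu_t}(X_u, X_v)| \le C (\lam e^{t_0})^{d_H(u,v)+1}$, and summing for fixed $u$,
\[ \sum_{v \in V(H)} |\cov_{\mu_t}(X_u, X_v)| \le C \sum_{d \ge 0} \Delta(H)^d (\lam e^{t_0})^{d+1} = O(\lam) \]
since $\Delta(H)\lam e^{t_0} \le 3\alpha < 1$. Hence $\var_{\mu_t}(N_W) \le K\lam|W|$ uniformly for $|t| \le t_0$, for some absolute $K$. Integrating in $t$, $F(t) - t\E[N_W] \le K\lam|W| t^2/2$, and a standard Chernoff argument yields
\[ \P\bigl(|N_W - \E[N_W]| \ge \eta\E[N_W]\bigr) \le 2\exp(-c\eta^2 \E[N_W]) \]
for a small constant $\eta$ and $c>0$.

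Finally I take union bounds. For the minimum-degree condition, I apply the concentration estimate with $W_v = \{(v,b):b \in B\}$ for each $v \in A$ (and symmetrically with $A,B$ swapped). Since $\E[N_{W_v}] \ge 2\lam n/5 \ge 2C\sqrt{n}/5$, the per-vertex failure probability is $\exp(-c'C\sqrt{n})$, and a union bound over $\le 2n$ vertices gives total failure $\le \tfrac12 e^{-\sqrt n}$ for $C$ large. For the edge-expansion condition, I apply with $W_{X,Y} = X \times Y$; since $\E[N_{W_{X,Y}}] \ge (1-C_0\alpha)\lam|X||Y| \ge C^2 n/700$, the per-pair failure is $\exp(-c''C^2 n)$, and a union bound over $\le 2 \cdot 4^n$ pairs gives total failure $\le \tfrac12 e^{-\sqrt n}$ for $C$ large. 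The main obstacle is the covariance-decay estimate underpinning the concentration step: it must hold uniformly in the tilt $t \in [-t_0, t_0]$, which is a standard consequence of the cluster expansion in the sub-critical regime (either via Lemma~\ref{lemClusterTail} or a Dobrushin uniqueness argument), but the book-keeping around pair-distances and cluster sizes requires care.
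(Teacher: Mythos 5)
The paper does not actually prove this lemma: it is imported verbatim as Lemma~5.3 of~\cite{jenssen2023evolution}, so there is no in-paper proof to compare against. Your overall strategy --- cluster expansion for the marginals, a tilted-measure/log-MGF argument for concentration of $N_W=|\Ec\cap W|$, then union bounds over vertices and over pairs $(X,Y)$ --- is a legitimate route, and the quantitative checkpoints are right: the marginals are $(1\pm O(\alpha))\lam$, the means $\lam|B|\gtrsim C\sqrt n$ and $\lam|X||Y|\gtrsim C^2n$ comfortably dominate $\sqrt n$ and $n\log 4$ respectively, so exponential concentration at scale $\E[N_W]$ closes both union bounds.

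The one step that fails as written is the covariance sum. You claim $|\cov_{\mu_t}(X_u,X_v)|\le C(\lam e^{t_0})^{d_H(u,v)+1}$ and then sum $\sum_{d\ge0}\Delta^d(\lam e^{t_0})^{d+1}$. But the cluster-expansion (or Dobrushin) estimate for a pair at distance $d$ carries an extra factor counting the connecting clusters, roughly $\Delta^{d-1}$; combined with the $\Delta^d$ vertices in the ball of radius $d$, the geometric ratio is $\approx\lam\Delta^2$, not $\lam\Delta$. In the present regime $\Delta(H)$ can be as large as $2\alpha/\lam=\Theta(\sqrt n/C)$, so $\lam\Delta^2=\Theta(\sqrt n)$ and the series diverges. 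The decomposition by distance double-counts; the fix is to bypass per-pair bounds entirely and write
\[
\sum_{v}|\cov_{\mu_t}(X_u,X_v)|\;\le\;\sum_{\Gamma\ni u}|\Gamma|^{3}\,|\phi(\Gamma)|\,(\lam e^{t_0})^{|\Gamma|}\;\le\;\frac{1}{\Delta}\sum_{k\ge1}k^{3}\bigl(2e\lam e^{t_0}\Delta\bigr)^{k}=O(\lam),
\]
summing over clusters containing $u$ via Lemma~\ref{lemClusterTail} with $U=\{u\}$, where now the ratio is $2e\lam e^{t_0}\Delta\le 5e\alpha<1$. (You also need the non-uniform-activity version of Lemma~\ref{lemClusterTail} for the tilted measure; the proof goes through with $\max$ of the activities.) Two smaller points: the deviations you actually need are $\eta=11/12$ and $\eta=9/10$, not ``small'' $\eta$ --- harmless, since the bounded-tilt Chernoff argument gives $\exp(-c\,\E[N_W])$ for any fixed $\eta\in(0,1)$; and for $\lam$ bounded away from $0$ the condition $\lam\le(2e(\Delta(H)+1))^{-1}$ must be checked separately, but there $\Delta(S\cup T)\le\alpha/\lam$ forces $H$ to be nearly edgeless and the measure degenerates to a product measure, so the claim is standard Chernoff. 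With these repairs the argument is complete.
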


For a weakly balanced partition $(A,B)$, let $\cE_{A,B,\lam}$ denote the set of $G\in \cT(n)$ such that $(G[A], G[B])\in \cD_{A,B,\lam}^{\textup{w}}$ and $G$ is an $(A,B)$-$\lam$-expander.
Let 
\[
\cE_0:= \cT(n)\backslash \bigcup_{\substack{(A,B)\\ \text{ weakly balanced}}}\cE_{A,B,\lam}\, .
\]

Let $p=\lam/(1+\lam)$. We will use a conductance argument using $\cE_0$ as a `bottleneck' to lower bound the mixing time of the Glauber dynamics of $\mu_{\cT,p}$.

If $G\in \cE_0$ then either
\begin{enumerate}
\item $(G[A], G[B])\notin \cD_{A,B,\lam}^{\textup{w}}$ for all weakly balanced $(A,B)$, or
\item $(G[A], G[B])\in \cD_{A,B,\lam}^{\textup{w}}$ for some weakly balanced $(A,B)$ but $G$ is not an $(A,B)$-$\lam$-expander.
\end{enumerate}
Let $\cE_0^1, \cE_0^2$ denote the set of $G\in \cE_0$ satisfying (1), (2) respectively. Proposition~\ref{PropError} shows that
\(
\mu_{\cT,p}(\cE_0^1)= O(e^{-\sqrt{n}})
\)
and Proposition~\ref{PropError} taken together with Lemma~\ref{lemExpanderwhp} shows that
\(
\mu_{\cT,p}(\cE_0^2)= O(e^{-\sqrt{n}}).
\)
We conclude that
\[
\mu_{\cT,p}(\cE_0)=O(e^{-\sqrt{n}})\, .
\]

Note that by symmetry, if $(A,B), (A', B')$ are two partitions such that $||A|-|B||=||A'|-|B'||$ then $\mu_{\cT,p}(\cE_{A,B,\lam})=\mu_{\cT,p}(\cE_{A',B',\lam})$. Moreover, if $(A,B)$ is weakly balanced then there are at least $\binom{n}{n/2-n/10}=e^{\Omega(n)}$ other partitions with the same imbalance. It follows that 
\[
\mu_{\cT,p}(\cE_{A,B,\lam})\leq e^{-\Omega(n)}\, 
\]
for all weakly balanced $(A,B)$. It follows that we can take a union of sets $\cE_{A,B,\lam}$, call it $\cE_1$, so that 
\[
\frac{1}{4}\leq \mu_{\cT,p}(\cE_1)\leq \frac{1}{2}\, .
\]
We will show that for the Glauber dynamics to leave the set $\cE_1$ it must pass through $\cE_0$. By a standard conductance argument (see \cite[Claim 2.3]{dyer2002counting}) this implies that the mixing time is at least
\[
\frac{\mu_{\cT,p}(\cE_1)}{8\mu_{\cT,p}(\cE_0)}\geq \Omega (e^{\sqrt{n}})\, .
\]

It remains to show that for the Glauber dynamics to leave the set $\cE_1$ it must pass through $\cE_0$. Since the Glauber dynamics updates a single edge at a time, it suffices to show that if $G_1\in \cE_{A,B,\lam}$ and $G_2\in \cE_{A',B',\lam}$ for distinct weakly balanced partitions $(A,B), (A',B')$ then $|E(G_1)\Delta E(G_2)|\geq 2$. This is what we show now. 

Since the partitions $(A,B), (A',B')$ are distinct, either $A \cap B'\neq \emptyset$ or $B\cap A'\neq \emptyset$. 
Assume WLOG that $B\cap A'\neq \emptyset$.

Suppose first that $|A\cap B'|<\lam n /100$.
By assumption there exists $v\in B\cap A'$. Since $G_1$ is an $(A,B)$-$\lam$-expander we then have 
\begin{align*}
    d_{G_1}(v, A')
    &\geq d_{G_1}(v, A) - d_{G_1}(v, A\cap B')\\
    &\geq \lam n/30 - |A\cap B'|\\
    &\geq \lam n /50\, .
\end{align*}
On the other hand, since $G_2\in \cD_{A',B',\lam}^{\textup{w}}$, $d_{G_2}(v, A')\leq \alpha/\lam$ and so 
\[
|E(G_1)\Delta E(G_2)|\geq \lam n /50 - \alpha/\lam \geq 2.
\]
We may therefore assume that $|A\cap B'|\geq \lam n /100$. 
In particular $A\cap B'\neq\emptyset$, and so by a similar argument to the above, we may assume that $|A'\cap B|\geq \lam n /100$ also. 
By symmetry (swapping the roles of $A$ and $B$) we may also assume that $|A \cap A'|\geq \lam n /100$ and $|B\cap B'| \geq \lam n/100$.

Since $(A,B)$ is weakly balanced we have $|B| \geq n/3$. Suppose WLOG that $|B\cap B'|\geq |B \cap A'|$ so that in particular $|B\cap B'|\geq n/6$.
Since $G_1$ is an $(A,B)$-$\lam$-expander we then have
\begin{align}\label{eqABlamExpand}
|E_{G_1}(A\cap B',B\cap B')|\geq \lam |A\cap B'||B\cap B'|/10\, ,
\end{align}
and so there exists $v\in A\cap B'$ such that $d_{G_1}(v, B\cap B')\geq \lam |B\cap B'|/10\geq \lam n /60$. On the other hand, since $G_2\in \cD_{A',B',\lam}^{\textup{w}}$, $d_{G_2}(v, B')\leq \alpha/\lam$ and so 
\[
|E(G_1)\Delta E(G_2)|\geq \lam n /60 - \alpha/\lam \geq 2.
\] 
This concludes the proof.  \qedsymbol

\section*{Acknowledgments}

MJ is supported by a UK Research and Innovation Future Leaders Fellowship MR/W007320/2.  WP supported in part by NSF grant CCF-2309708. AP is supported by an NSERC Discovery grant. MS supported in part by NSF grant DMS-2349024.

\bibliography{triangle.bib}

\begin{thebibliography}{10}

\bibitem{anari2022entropic}
N.~Anari, V.~Jain, F.~Koehler, H.~T. Pham, and T.-D. Vuong.
\newblock Entropic independence: optimal mixing of down-up random walks.
\newblock In {\em Proceedings of the 54th Annual ACM SIGACT Symposium on Theory
  of Computing}, pages 1418--1430, 2022.

\bibitem{anari2021spectral}
N.~Anari, K.~Liu, and S.~O. Gharan.
\newblock Spectral independence in high-dimensional expanders and applications
  to the hardcore model.
\newblock {\em SIAM Journal on Computing}, (0):FOCS20--1, 2021.

\bibitem{balogh2015independent}
J.~Balogh, R.~Morris, and W.~Samotij.
\newblock Independent sets in hypergraphs.
\newblock {\em Journal of the American Mathematical Society}, 28(3):669--709,
  2015.

\bibitem{balogh2016typical}
J.~Balogh, R.~Morris, W.~Samotij, and L.~Warnke.
\newblock The typical structure of sparse ${K}_{r+1}$-free graphs.
\newblock {\em Transactions of the American Mathematical Society},
  368(9):6439--6485, 2016.

\bibitem{bencs2023optimal}
F.~Bencs and P.~Buys.
\newblock Optimal zero-free regions for the independence polynomial of bounded
  degree hypergraphs.
\newblock {\em arXiv preprint arXiv:2306.00122}, 2023.

\bibitem{bezakova2019approximation}
I.~Bez{\'a}kov{\'a}, A.~Galanis, L.~A. Goldberg, H.~Guo, and D.~Stefankovic.
\newblock Approximation via correlation decay when strong spatial mixing fails.
\newblock {\em SIAM Journal on Computing}, 48(2):279--349, 2019.

\bibitem{blanca2023sampling}
A.~Blanca and R.~Gheissari.
\newblock Sampling from the {P}otts model at low temperatures via
  {S}wendsen--{W}ang dynamics.
\newblock In {\em 2023 IEEE 64th Annual Symposium on Foundations of Computer
  Science (FOCS)}, pages 2006--2020. IEEE, 2023.

\bibitem{bubley1997path}
R.~Bubley and M.~Dyer.
\newblock Path coupling: A technique for proving rapid mixing in {M}arkov
  chains.
\newblock In {\em Proceedings 38th Annual Symposium on Foundations of Computer
  Science}, pages 223--231. IEEE, 1997.

\bibitem{cai2016hardness}
J.-Y. Cai, A.~Galanis, L.~A. Goldberg, H.~Guo, M.~Jerrum,
  D.~{\v{S}}tefankovi{\v{c}}, and E.~Vigoda.
\newblock \# {BIS}-hardness for 2-spin systems on bipartite bounded degree
  graphs in the tree non-uniqueness region.
\newblock {\em Journal of Computer and System Sciences}, 82(5):690--711, 2016.

\bibitem{carlson2022algorithms}
C.~Carlson, E.~Davies, N.~Fraiman, A.~Kolla, A.~Potukuchi, and C.~Yap.
\newblock Algorithms for the ferromagnetic {P}otts model on expanders.
\newblock In {\em 2022 IEEE 63rd Annual Symposium on Foundations of Computer
  Science (FOCS)}, pages 344--355. IEEE, 2022.

\bibitem{chatterjee2017large}
S.~Chatterjee.
\newblock Large deviations for random graphs.
\newblock {\em Lecture Notes in Mathematics}, 2197, 2017.

\bibitem{chatterjee2016nonlinear}
S.~Chatterjee and A.~Dembo.
\newblock Nonlinear large deviations.
\newblock {\em Advances in Mathematics}, 299:396--450, 2016.

\bibitem{chatterjee2011large}
S.~Chatterjee and S.~S. Varadhan.
\newblock The large deviation principle for the {E}rd{\H{o}}s-{R}{\'e}nyi
  random graph.
\newblock {\em European Journal of Combinatorics}, 32(7):1000--1017, 2011.

\bibitem{chen2022localization}
Y.~Chen and R.~Eldan.
\newblock Localization schemes: A framework for proving mixing bounds for
  {M}arkov chains.
\newblock In {\em 2022 IEEE 63rd Annual Symposium on Foundations of Computer
  Science (FOCS)}, pages 110--122. IEEE, 2022.

\bibitem{chen2021fast}
Z.~Chen, A.~Galanis, L.~A. Goldberg, W.~Perkins, J.~Stewart, and E.~Vigoda.
\newblock Fast algorithms at low temperatures via {M}arkov chains.
\newblock {\em Random Structures \& Algorithms}, 58(2):294--321, 2021.

\bibitem{chen2022sampling}
Z.~Chen, A.~Galanis, D.~{\v{S}}tefankovi{\v{c}}, and E.~Vigoda.
\newblock Sampling colorings and independent sets of random regular bipartite
  graphs in the non-uniqueness region.
\newblock In {\em Proceedings of the 2022 Annual ACM-SIAM Symposium on Discrete
  Algorithms (SODA)}, pages 2198--2207. SIAM, 2022.

\bibitem{chen2021optimal}
Z.~Chen, K.~Liu, and E.~Vigoda.
\newblock Optimal mixing of {G}lauber dynamics: Entropy factorization via
  high-dimensional expansion.
\newblock In {\em Proceedings of the 53rd Annual ACM SIGACT Symposium on Theory
  of Computing}, pages 1537--1550, 2021.

\bibitem{cook2020large}
N.~Cook and A.~Dembo.
\newblock Large deviations of subgraph counts for sparse
  {E}rd{\H{o}}s--{R}{\'e}nyi graphs.
\newblock {\em Advances in Mathematics}, 373:107289, 2020.

\bibitem{dyer2003randomly}
M.~Dyer and A.~Frieze.
\newblock Randomly coloring graphs with lower bounds on girth and maximum
  degree.
\newblock {\em Random Structures \& Algorithms}, 23(2):167--179, 2003.

\bibitem{dyer2002counting}
M.~Dyer, A.~Frieze, and M.~Jerrum.
\newblock On counting independent sets in sparse graphs.
\newblock {\em SIAM Journal on Computing}, 31(5):1527--1541, 2002.

\bibitem{dyer2004relative}
M.~Dyer, L.~A. Goldberg, C.~Greenhill, and M.~Jerrum.
\newblock The relative complexity of approximate counting problems.
\newblock {\em Algorithmica}, 38(3):471--500, 2004.

\bibitem{eldan2018gaussian}
R.~Eldan.
\newblock Gaussian-width gradient complexity, reverse log-{S}obolev
  inequalities and nonlinear large deviations.
\newblock {\em Geometric and Functional Analysis}, 28(6):1548--1596, 2018.

\bibitem{erdosasymptotic}
P.~Erd\H{o}s, D.~Kleitman, and B.~Rothschild.
\newblock Asymptotic enumeration of {$K_n$}-free graphs.
\newblock {\em Colloquio Internazionale sulle Teorie Combinatorie (Rome,
  1973)}, (17):19--27, 1973.

\bibitem{erdos1960evolution}
P.~Erd{\H{o}}s and A.~R{\'e}nyi.
\newblock On the evolution of random graphs.
\newblock {\em Publ. Math. Inst. Hung. Acad. Sci}, 5(1):17--60, 1960.

\bibitem{faris2010combinatorics}
W.~G. Faris.
\newblock Combinatorics and cluster expansions.
\newblock {\em Probability Surveys}, 7:157--206, 2010.

\bibitem{galanis2023sampling}
A.~Galanis, L.~A. Goldberg, and P.~Smolarova.
\newblock Sampling from the random cluster model on random regular graphs at
  all temperatures via {G}lauber dynamics.
\newblock In {\em Approximation, Randomization, and Combinatorial Optimization.
  Algorithms and Techniques (APPROX/RANDOM 2023)}. Schloss
  Dagstuhl-Leibniz-Zentrum f{\"u}r Informatik, 2023.

\bibitem{galanis2021fast}
A.~Galanis, L.~A. Goldberg, and J.~Stewart.
\newblock Fast algorithms for general spin systems on bipartite expanders.
\newblock {\em ACM Transactions on Computation Theory (TOCT)}, 13(4):1--18,
  2021.

\bibitem{galanis2022fast}
A.~Galanis, L.~A. Goldberg, and J.~Stewart.
\newblock Fast mixing via polymers for random graphs with unbounded degree.
\newblock {\em Information and Computation}, 285:104894, 2022.

\bibitem{galanis2016inapproximability}
A.~Galanis, D.~{\v S}tefankovi{\v c}, and E.~Vigoda.
\newblock Inapproximability of the partition function for the antiferromagnetic
  {I}sing and hard-core models.
\newblock {\em Combinatorics, Probability and Computing}, 25(4):500--559, July
  2016.

\bibitem{galvin2024zeroes}
D.~Galvin, G.~McKinley, W.~Perkins, M.~Sarantis, and P.~Tetali.
\newblock On the zeroes of hypergraph independence polynomials.
\newblock {\em Combinatorics, Probability and Computing}, 33(1):65--84, 2024.

\bibitem{gheissari2022low}
R.~Gheissari and A.~Sinclair.
\newblock Low-temperature {I}sing dynamics with random initializations.
\newblock In {\em Proceedings of the 54th Annual ACM SIGACT Symposium on Theory
  of Computing}, pages 1445--1458, 2022.

\bibitem{helmuth2023finite}
T.~Helmuth, M.~Jenssen, and W.~Perkins.
\newblock Finite-size scaling, phase coexistence, and algorithms for the random
  cluster model on random graphs.
\newblock In {\em Annales de l'Institut Henri Poincare (B) Probabilites et
  statistiques}, volume~59, pages 817--848. Institut Henri Poincar{\'e}, 2023.

\bibitem{HelmuthAlgorithmic2}
T.~Helmuth, W.~Perkins, and G.~Regts.
\newblock Algorithmic {P}irogov-{S}inai theory.
\newblock {\em Probability Theory and Related Fields}, 176:851--895, 2020.

\bibitem{hermon2019rapid}
J.~Hermon, A.~Sly, and Y.~Zhang.
\newblock Rapid mixing of hypergraph independent sets.
\newblock {\em Random Structures \& Algorithms}, 54(4):730--767, 2019.

\bibitem{janson1987uczak}
S.~Janson, T.~{\L}uczak, and A.~Ruci\'nski.
\newblock An exponential bound for the probability of nonexistence of a
  specified subgraph in a random graph.
\newblock {\em Random graphs' 87, Proceedings, Pozn\'an, 1987, eds. M.
  Karon\'nski, J. Jaworski and A.Ruci\'nski}, pages 73--87, 1987.

\bibitem{JKP2}
M.~Jenssen, P.~Keevash, and W.~Perkins.
\newblock Algorithms for \#{BIS}-hard problems on expander graphs.
\newblock {\em SIAM Journal on Computing}, 49(4):681--710, 2020.

\bibitem{jenssen2023evolution}
M.~Jenssen, W.~Perkins, and A.~Potukuchi.
\newblock On the evolution of structure in triangle-free graphs.
\newblock {\em arXiv preprint arXiv:2312.09202}, 2023.

\bibitem{jenssen2024lower}
M.~Jenssen, W.~Perkins, A.~Potukuchi, and M.~Simkin.
\newblock Lower tails for triangles inside the critical window.
\newblock {\em preprint}, 2024.

\bibitem{jenssen2024sampling}
M.~Jenssen, W.~Perkins, A.~Potukuchi, and M.~Simkin.
\newblock Sampling, counting, and large deviations for triangle-free graphs
  near the critical density.
\newblock In {\em Foundations of Computer Science (FOCS 2024)}, 2024.

\bibitem{jerrum1986random}
M.~R. Jerrum, L.~G. Valiant, and V.~V. Vazirani.
\newblock Random generation of combinatorial structures from a uniform
  distribution.
\newblock {\em Theoretical Computer Science}, 43:169--188, 1986.

\bibitem{kozma2023lower}
G.~Kozma and W.~Samotij.
\newblock Lower tails via relative entropy.
\newblock {\em The Annals of Probability}, 51(2):665--698, 2023.

\bibitem{levin2017markov}
D.~A. Levin and Y.~Peres.
\newblock {\em Markov chains and mixing times}, volume 107.
\newblock American Mathematical Soc., 2017.

\bibitem{luczak2000triangle}
T.~{\L}uczak.
\newblock On triangle-free random graphs.
\newblock {\em Random Structures \& Algorithms}, 16(3):260--276, 2000.

\bibitem{mantel1907problem}
W.~Mantel.
\newblock Problem 28.
\newblock {\em Wiskundige Opgaven}, 10(60-61):320, 1907.

\bibitem{montenegro2006mathematical}
R.~Montenegro and P.~Tetali.
\newblock Mathematical aspects of mixing times in {M}arkov chains.
\newblock {\em Foundations and Trends in Theoretical Computer Science},
  1(3):237--354, 2006.

\bibitem{mossel2009hardness}
E.~Mossel, D.~Weitz, and N.~Wormald.
\newblock On the hardness of sampling independent sets beyond the tree
  threshold.
\newblock {\em Probability Theory and Related Fields}, 143(3-4):401--439, 2009.

\bibitem{osthus2003densities}
D.~Osthus, H.~J. Pr{\"o}mel, and A.~Taraz.
\newblock For which densities are random triangle-free graphs almost surely
  bipartite?
\newblock {\em Combinatorica}, 23(1):105--150, 2003.

\bibitem{penrose1967convergence}
O.~Penrose.
\newblock Convergence of fugacity expansions for classical systems.
\newblock {\em Statistical mechanics: foundations and applications}, page 101,
  1967.

\bibitem{saxton2015hypergraph}
D.~Saxton and A.~Thomason.
\newblock Hypergraph containers.
\newblock {\em Inventiones mathematicae}, 201(3):925--992, 2015.

\bibitem{scott2005repulsive}
A.~D. Scott and A.~D. Sokal.
\newblock The repulsive lattice gas, the independent-set polynomial, and the
  {L}ov{\'a}sz local lemma.
\newblock {\em Journal of Statistical Physics}, 118(5-6):1151--1261, 2005.

\bibitem{Sly10}
A.~Sly.
\newblock Computational transition at the uniqueness threshold.
\newblock In {\em 2010 {{IEEE}} 51st {{Annual Symposium}} on {{Foundations}} of
  {{Computer Science}}}, pages 287--296, {Las Vegas, NV, USA}, Oct. 2010.
  {IEEE}.

\bibitem{SS14}
A.~Sly and N.~Sun.
\newblock Counting in two-spin models on d-regular graphs.
\newblock {\em Ann. Probab.}, 42(6):2383--2416, Nov. 2014.

\bibitem{vstefankovivc2009adaptive}
D.~{\v{S}}tefankovi{\v{c}}, S.~Vempala, and E.~Vigoda.
\newblock Adaptive simulated annealing: A near-optimal connection between
  sampling and counting.
\newblock {\em Journal of the ACM (JACM)}, 56(3):1--36, 2009.

\bibitem{weitz2006counting}
D.~Weitz.
\newblock Counting independent sets up to the tree threshold.
\newblock In {\em Proceedings of the Thirty-Eighth Annual ACM {S}ymposium on
  Theory of {C}omputing}, STOC 2006, pages 140--149. ACM, 2006.

\end{thebibliography}
\bibliographystyle{abbrv}

\end{document}